\newtheorem{theorem}{Theorem}[section]
\newtheorem{lemma}[theorem]{Lemma}
\newtheorem{proposition}[theorem]{Proposition}
\newtheorem{example}{Example}
\newtheorem{remark}{Remark}
\newtheorem{assumption}{Assumption}
\newtheorem{definition}{Definition}
\newcommand{\x}{\bm{x}}
\newcommand{\C}{\mathbf{C}}
\newcommand{\G}{\mathbf{G}}
\newcommand{\I}{\mathbb{I}}
\newcommand{\R}{\mathbb{R}}
\renewcommand{\S}{\mathbf{S}}
\newcommand{\bbZ}{\mathbb{Z}}
\newcommand{\red}[1]{{\color{red}#1}}
\newcommand{\purple}[1]{}%{\color{purple}#1}}
\newcommand{\blue}[1]{}%{\color{blue}#1}}
\newcommand{\cA}{\mathcal{A}}
\newcommand{\cC}{\mathcal{C}}
\newcommand{\cD}{\mathcal{D}}
\newcommand{\cG}{\mathcal{G}}
\newcommand{\cH}{\mathcal{H}}
\newcommand{\cI}{\mathcal{I}}
\newcommand{\cS}{\mathcal{S}}
\newcommand{\cT}{\mathcal{T}}
\newcommand{\cU}{\mathcal{U}}
\newcommand{\cV}{\mathcal{V}}
\newcommand{\cN}{\mathcal{N}}
\newcommand{\cX}{\mathcal{X}}
\newcommand{\bbE}{\mathbb{E}}
\newcommand{\bbS}{\mathbb{S}}
\newcommand{\OPT}{\mathsf{OPT}}
\newcommand{\BR}{\text{BR}}
\DeclareMathOperator*{\argmin}{argmin}
\newenvironment{proofoutline}{\noindent{\emph{Proof Sketch. }}}{\hfill$\square$\medskip}
\begin{document}
\title{Subsidy design for better social outcomes}
%
%\titlerunning{Abbreviated paper title}
% If the paper title is too long for the running head, you can set
% an abbreviated paper title here
%

\author{\textbf{Maria-Florina Balcan}\\
\texttt{ninamf@cs.cmu.edu} \\
        Carnegie Mellon University\\
        \and
		\textbf{Matteo Pozzi}\\
		\texttt{mpozzi@cmu.edu}\\
        Carnegie Mellon University
        \and
        \textbf{Dravyansh Sharma}\\
        \texttt{dravy@ttic.edu} \\
        Toyota Technological Institute at Chicago}
% \authorrunning{Authors anonymized}
% \author{Maria-Florina Balcan\orcidID{0000-0002-9525-0103} \and
% Matteo Pozzi\orcidID{0000-0002-9727-2824} \and
% Dravyansh Sharma\orcidID{0000-0002-8424-5819}}
%
% \authorrunning{Balcan, Pozzi and Sharma}
% First names are abbreviated in the running head.
% If there are more than two authors, 'et al.' is used.
%
% \institute{Institutes anonymized}
% \institute{Carnegie Mellon University, Pittsburgh PA 15213, USA \\%\and
% Springer Heidelberg, Tiergartenstr. 17, 69121 Heidelberg, Germany
% \email{lncs@springer.com}\\
% \url{http://www.springer.com/gp/computer-science/lncs} \and
% ABC Institute, Rupert-Karls-University Heidelberg, Heidelberg, Germany\\
% \email{ninamf@cs.cmu.edu,\{mpozzi,drasha\}@cmu.edu}}
%

\maketitle              % typeset the header of the contribution
\begin{abstract} %\red{Polish}
% We examine the phenomenon of information avoidance in multi-agent games, where rational agents may choose to ignore freely available information about the state of the world for selfish reasons.
    Overcoming the impact of selfish behavior of rational players in multiagent systems is a fundamental problem in game theory. Without any intervention from a central agent, strategic users  take actions in order to maximize their personal utility, which can lead to extremely inefficient overall system performance, often indicated by a high Price of Anarchy. Recent work \cite{lin2021multi} {investigated and formalized} yet another undesirable behavior of rational agents, that of avoiding freely available information about the game for selfish reasons, leading to worse social outcomes. A central planner can significantly mitigate these issues by injecting a subsidy to reduce certain costs associated with the system and obtain net gains in the system performance. Crucially, the planner needs to determine how to allocate this subsidy effectively.

We formally show that designing subsidies that perfectly optimize the social good, in terms of minimizing the Price of Anarchy or preventing the information avoidance behavior, is computationally hard under standard complexity theoretic assumptions. On the positive side, we  show that we can learn provably good values of subsidy in repeated games coming from the same domain. This data-driven subsidy design approach avoids solving computationally hard problems for unseen games by learning over polynomially many games. We also show that optimal subsidy can be learned with no-regret given an online sequence of games, under mild assumptions on the cost matrix. Our study focuses on two distinct games: a Bayesian extension of the well-studied fair cost-sharing game, and a component maintenance game with engineering applications. %in a simple extension of the maintenance game.

\keywords{Subsidy  \and Price of Anarchy \and Data-driven algorithm design \and Information avoidance.}
\end{abstract}

\section{Introduction}

Multiagent systems often need a central agent to intervene and avoid harmful consequences of selfish decisions of individual agents. Subsidy is a form of positive intervention where the central agent reduces the cost of some actions in the system, with the goal of leading agents to better social outcomes. The amount of subsidy available to the central agent is typically scarce and therefore it is crucial to optimize its allocation. Prior research has addressed this by designing subsidy schemes that approximately optimize the Price of Anarchy \cite{buchbinder2008non,ferguson2021effectiveness}. In this work we extend the study of subsidy design for games in three different directions: we study the impact of subsidy beyond the Price of Anarchy objective and show its usefulness in preventing an undesirable information avoidance behavior, we establish formal hardness results for designing optimal subsidy schemes, and  show how an alternative novel data-driven approach can be used to design subsidy by exploiting historical data or related games.

Concretely, critical infrastructure maintenance typically involves joint responsibility shared among multiple stakeholders, and  failure of coordination can lead to disastrous consequences. {For example, different segments of a large road network are typically managed by different civil authorities. %Similarly, the Internet backbone consists of many networks with different owners, connected together in a certain way.} %Smaller networks themselves consist of independent nodes connected using a specific topology. %In 2013, an outage at  Google caused a 40\% drop in the worldwide web traffic, and in  2021,  Facebook was down for several hours.   ask Matteo;;  3 paragraphs at high level; make clear who is inspecting; online setting motivate; give example for phi in Sec 2; 
With an increasingly connected physical and digital world, it is a major challenge to coordinate large-scale systems consisting of several disjointly owned components. As a result, one crucially needs a {\it central planner}---who has the ability to allocate shared resources to avoid major catastrophic failures---to ensure  smooth operation of the overall system. In the above examples, this central agent could be a government department with appropriate jurisdiction. Ideally, the central agent would ensure a judicious use of the common resources which are to be allocated to appropriate stakeholders to incentivize them to do their part. Identifying the optimal resource allocations can be hard, but 
very often the central agent manages multiple similar systems or has access to relevant historical data. Could one take advantage of this data availability to improve the allocation?

Alternatively, in modern market systems, consumers have several options and would prefer to select options that meet their needs at the smallest cost. Often the consumer commits to an action based on expected long-term costs. For example, people buy health insurance plans and effectively  share the cost of healthcare with other subscribers to the plan.  If they could accurately estimate their need for medical services, people who determine they would not need any expensive medical procedures would opt out and drive up the insurance costs. An intervention by the governement to reduce the cost of health insurance plans could ensure that these people still opt in and the system is robust to the additional information regarding need for medical services. This intervention to guide the market is expensive, and also needs a careful allocation. As with infrastructure maintenance, optimal allocation is hard and one would like to use historical market data to guide it.

\paragraph{Games studied.}  %In this work we consider a concrete game-theoretic formulation of the above  challenge in the presence of strategic stakeholders. 
We consider a component maintenance game  based on \cite{lin2021multi}. We model the common responsibility as disjoint {\it components} maintained by individual agents or stakeholders, each having a binary state denoting whether the component is functioning or broken. For simplicity, we will assume known (prior) probabilities which govern whether the component will work, which are a common knowledge among all agents. The overall {\it system} is also assumed to have a binary state, given by some boolean function of the component states. For example, the overall system consisting of five components might function only if all components are functioning, or it might function if at least one of the first three components is functioning and at least one of the next two components is functioning. The components are repairable, and the agent maintaining it can choose to repair their component at some personal cost, or choose to do nothing. If an agent repairs
their component, then it is assumed to be guaranteed to function. In addition to the repair cost which is only charged to agents that undertake repair, all agents are assumed to experience a large cost if the overall system fails (see Section \ref{sec:prelims} for formal details of the game). Furthermore, any individual component may be publically inspected (without any inspection cost) to determine whether it is actually working or broken. In this case, the agents may have a tendency to avoid having the state of their component inspected and revealed to everyone to avoid increased personal cost at equilibrium. {In our model, it is the agents that could inspect a component but do not control the information about the inspected state, i.e.\ the state of an inspected component is revealed to all agents irrespective of who owns the component.}

We also study a Bayesian extension of the classical cost-sharing game. We are given a collection of actions, each associated with some cost and  different actions being available to a fixed subset of agents. All agents that select a fixed action from the collection share the cost of that action. For example, if the actions are commute options like bus, train or car and agents are commuters, the cost of running the bus is shared by its users. In the Bayesian extension, there is a prior over of the action costs and agents choose actions based on the mean cost under the distribution. The true  cost of some action may be inspected and the information revealed to all the agents.

\paragraph{Summary of contributions.} We formalize the problem of effective resource allocation by a central agent to improve the performance of a multi-agent system by subsidizing certain costs associated with the system.  %As a novel extension, we also assume there is a central agent that can provide subsidy to certain agents to reduce the repair cost of their component.
We consider  distinct objectives that the  central agent might have,
\begin{itemize}
    \item[(a)] to use subsidy to reduce the price of anarchy for the system, i.e.\ to ensure that the harmful effect of the selfish behavior and lack of coordination and   of the agents on the social cost is minimized,
% \item[(b)] to ensure that the system functions correctly in any Nash equilibrium, i.e. roughly speaking the central agent allocates the subsidy to avoid a catastrophic system failure by providing suitable amount of subsidy to the appropriate agents, or 
\item[(b)] to ensure that the value of information {(measured as change in agent's cost at equilibrium before and after inspection)} about the state of an inspected component is non-negative for all agents.
\end{itemize} 

\noindent We show that the system can perform poorly on each of the above objectives in the absence of any subsidy. The goal of the central agent therefore is to determine the smallest subsidy budget needed to ensure that one of the above objectives is met. We will show (Section \ref{sec:opt-subs-2-series}) that  this calculation of optimal budget by the central agent can be done exactly for a simple small system, and the optimal subsidy allocation can be different for different objectives. We further show that in contrast
it is computationally hard to do so in more general systems under standard complexity theoretic assumptions. The computational hardness results (located in Section \ref{sec:hardness}) for computing the optimal subsidy hold for both the above objectives, for both component maintenance and cost-sharing games.  %determining the optimal subsidy needed for guaranteeing that agents do not avoid information about their inspected component states as well.

On the positive side, if the central agent has access to data about multiple games, we show (in Section \ref{sec:data-driven}) that a good value of the subsidy and allocation can be achieved with a polynomial number of game samples coming from an arbitrary game distribution. Moreover, if the games happen sequentially, the agent can perform nearly as well as the best subsidy scheme in hindsight, under very mild assumptions on the adversarial sequence of games. Since designing an optimal subsidy scheme is computationally hard, we would like to avoid having to solve the problem too many times in repeated games from the same domain. If we have access to similar games (e.g.\ infrastructure projects in similar counties), {the subsidy design problem is still  hard}, but we can potentially avoid a large number of repeated intensive computations. Moreover, the central agent may need to decide the value of subsidy on a new unseen game instance without observing the relevant parameters for this ``test'' game, for example the prior distribution on component failure. We obtain polynomial bounds on the sample complexity of the number of sample games needed to learn the optimal subsidy, which imply that we need to optimize the subsidy only for polynomially many games and can use the ``learned'' subsidy scheme on further game instances. 
% {We further consider an online learning setting, where the game instances may not follow a fixed distribution. Instead, we are given a sequence of game instances from some common domain and need to design a subsidy scheme for each instance in the sequence. In this setting, we are required to design subsidy schemes based on the past sequence of games, but without any knowledge of the current game. This could model realistic scenarios where priors about component failure probabilities are not available at the time of subsidy design, or change unpredictably over time.}
We also obtain no-regret guarantees on learning the subsidy parameter in an online sequence of games, under a mild smoothness assumption on the repair costs. \purple{This could be useful for example in studying potential failure in a communication network with dynamically changing nodes/components.}

While Lin et al.\ \cite{lin2021multi} introduces the component maintenance game and expose the challenge of information avoidance in small systems with a constant number of agents, we study general  systems with an arbitrary number of components and consider additional relevant objectives from the perspective of a central agent that can provide subsidy to some agents to reduce the repair cost of their component. While the use of subsidy has been studied in the context of cost-sharing games \cite{buchbinder2008non}, we study an interesting Bayesian extension where agents may experience negative Value of Information, establish new formal hardness results for optimal subsidy design and an a data-driven approach for overcoming the computational hardness in subsidy allocation.

Our main tool for showing the above positive results is to employ a recently introduced paradigm for beyond worst-case analysis called {\it data-driven algorithm design} \cite{gupta2016pac,balcan2020data}. Unlike traditional analysis, where one gives exact or approximate performance bounds applicable for worst-case instances, this paradigm focuses on ``typical'' problem instances that one actually encounters. This is similar to average-case analysis, but instead of a uniform distribution over problem instances, any arbitrary (fixed but unknown) distribution over the problem instances is allowed.
%While the techniques have been developed for general configuration of algorithms for their associated problem spaces, we will focus on the problem of allocating subsidy in our component maintenance games.

\subsection{Related Work}

{\bf Component maintenance games}. {Management of engineering systems often involves maintaining multiple components arranged in some scheme that govern the overall functionality of the system; these components are controlled by different agents that make decisions under uncertainty} \cite{malings2016value,memarzadeh2016value,pozzi2020information}. Often this involves careful planning and resource allocation by a central agent whose goal is ensuring that the overall cost to the agents is small, and that the agents make good use of any available information about the component states \cite{raiffa1961applied}. The central agent could design incentives or subsidy to be given to specific agents for improving the overall system. The role of   subsidy and taxation has been studied in both cooperative and non-cooperative games \cite{buchbinder2008non,meir2010minimal,zick2013taxation}. Typically, a central agent designs a subsidy (or taxation) scheme, which effectively alters the game parameters by changing the costs/profits of the agents, to minimize some objective like the Price of Anarchy. We examine optimization of novel objectives in addition to previously analyzed ones in the context of component maintenance games, and demonstrate a first application of a learning-theoretic lens to overcome the worst-case computational hardness of subsidy design.

{\bf Cost-sharing games}.
Cost-sharing game is a classical game in algorithmic game theory. Several variants of this game have been studied in the literature, including a set cover version \cite{buchbinder2008non,meir2010minimal} which we study here, and multicast game where network users connect to a source by paying for a route to the source and sharing cost \cite{chekuri2006non,anshelevich2008price,charikar2008online,balcan2009price}. Prior work has shown that subsidy is crucial for the former, while best response dynamics are sufficient to obtain a small Price of Anarchy for the latter. While \cite{buchbinder2008non} propose a primal-dual approach for approximately minimizing PoA under subsidy, we complement their results by establishing NP-hardness of exact optimization and further study the information avoidance phenomenon in a Bayesian extension of the game.

{\bf Value of Information}.
Information avoidance has been studied extensively in behavioral sciences, economics, psychology and public health \cite{sharot2020people,golman2017information,sweeny2010information,ajekigbe1991fear}. People can decide to seek or avoid certain information based on their goals, and strategic considerations can cause agents to ignore free and useful information. In the component maintenance games that we study, the information  corresponds to the true state of some {\it inspected} component. Agents can choose to not seek this information \purple{(about their own component, or that of another agent)} even if it is freely available, if it could make them worse off (e.g.\ increase the need to repair their component) even at the cost of making the overall system more likely to fail. \cite{lin2021multi} demonstrates this phenomenon in several multi-agent network systems. \cite{lin2021multi} focuses on component inspection and  the value of information metric, and the results apply to games with a small constant number of agents.  In contrast, we study a broader variety of metrics of interest to the central agent, provide formal hardness results for $n$-agent games, and complement them with positive results under the {\it data-driven algorithm design} lens. %where we consider role of subsidy in general component inspection, \cite{lin2021multi}  In this work, we propose and study the use of subsidy to reduce social cost and tackle information avoidance.

{\bf Central agent improving equilibrium performance}. {Another related line of work considers steering strategic agents to ``good'' equilibria in  a variety of settings, typically with the help of a central agent. \cite{balcan2009improved} consider using a public service advertising campaign, where the central agent prescribes actions to agents and a fraction of the agents (influenced by the campaign) follow actions that could lead to better equilibria. Another variant of the problem is to lead learning dynamics in a certain games where the game happens in phases for the same set of agents, and the impact of the campaign is only assumed in early rounds \cite{balcan2011leading,balcan2013circumventing}. Similarly, \cite{zhang2023steering} consider leading dynamics of agents with vanishing average regret. In contrast, our repeated game settings in Section \ref{sec:data-driven} consist of non-identical but similar games, for example corresponding to different infrastructure projects managed by the same central agent. \cite{kempe2020inducing} consider central agents that can modify the network structure, and study computational tractability for different utility functions and notions of ``good'' equilibrium.}

% including non-cooperative games \cite{mguni2019coordinating}.

{\bf Data-driven algorithm design} is a tool for beyond worst-case analysis of algorithms, for learning algorithms that work well on typical problems coming from a common problem domain \cite{gupta2016pac,balcan2017learning,balcan2018dispersion,balcan2020data,sharma2024data}. The technique has been successfully used in designing more effective algorithms for combinatorial problems, with applications to machine learning as well as mechanism design \cite{balcan2018data,balcan2021data,balcan2022provably,balcan2023analysis,balcan2023new,balcan2024learning,morgenstern2015pseudo,balcan2018general}.\purple{The technique has been successfully used in designing more effective algorithms for combinatorial problems \cite{balcan2018learning,lavastida2021learnable}, with applications to machine learning \cite{balcan2018data,balcan2021data,balcan2023analysis} as well as mechanism design \cite{morgenstern2015pseudo,balcan2018general}.} We will use the data-driven algorithm design approach  to learn subsidy schemes from multiple related games. We provide sample complexity guarantees when the games are drawn independently from a fixed distribution, and no-regret guarantees when learning subsidy in an online sequence of games.

\section{Formal notation, setup and motivating examples}\label{sec:prelims}
Let $G=\langle N,(S_i),(\text{cost}_i)\rangle$ denote a game, where $N$ is a set of $n$ agents (or players), $S_i$ is the finite action space of agent $i\in N$, and $\text{cost}_i$ is the cost function of agent $i$. The joint action space of the agents is $S=S_1\times \dots\times S_n$. Given joint action $s=(s_1,\dots,s_n)\in S$ let $s_{-i}$ denote the actions of all agents except agent $i$, i.e. $s_{-i}= (s_1,\dots,s_{i-1},s_{i+1},\dots,s_n)$. The cost function $\text{cost}_i:S\rightarrow \R$ of agent $i$ (which the agent seeks to minimize) is a function of the joint action $s\in S$. The {\it social cost} function of the game is the sum of cost functions of all the agents in the game, $\text{cost}=\sum_{i=1}^n\text{cost}_i$. The optimal social cost is $\OPT = \min_{s\in S} \text{cost}(s)$. %WesometimesoverloadnotationanduseOPTforajointactionsthatachievescost OPT(G).
Given a joint action $s$, the best response of agent $i$ is the set of actions $\BR_{i}(s_{-i})$ that minimizes its cost given $s_{-i}$, i.e., $\BR_i(s_{-i}) = \argmin_{a\in S_i} \text{cost}_i(a,s_{-i})$. A joint action $s\in S$ is a {\it (pure) Nash equilibrium} (or NE) if no agent can benefit from unilaterally deviating to another action, in other words every agent is simultaneously playing a best response action in $s$, i.e., $s_i \in \BR_i(s_{-i})$ for every $i\in N$. A Nash equilibrium  is said to be {\it global } or {\it optimal} if it also minimizes the {\it social cost} among all Nash equilibria. We say a Nash equilibrium is a {\it local} or suboptimal equilibrium if it is not global.
%Abest-responsedynamicsis aprocessinwhichateachtimestep,someplayerwhichisnotplayingabestresponse switchesitsactiontoabest-responseaction,giventhecurrentjointaction.Inthis paperwewillconcentrateongamesinwhichanybest-responsedynamicsconvergesto apureNashequilibrium(whichareequivalenttotheclassofordinalpotentialgames [43]).

To the above standard model of a game, we add a central agent whose goal is to improve social outcomes by allocating subsidy which reduces costs for certain actions. Formally, we have the following definition of a subsidy scheme.

\begin{definition}[Subsidy scheme]
A subsidy scheme $\bbS$ is defined as a set of functions $\text{subs}_i:S\rightarrow\R_{\ge0}$ where $\text{subs}_i(s)$ gives the subsidy offered by the central agent to agent $i$ given joint action $s$. In a subsidized game using scheme $\bbS$, the cost of agent $i$ is given by the difference function $\text{cost}_i^{\bbS}:=\text{cost}_i-\text{subs}_i$, and total subsidy provided for joint action $s\in S$ is $\text{subs}(s)=\sum_i\text{subs}_i(s)$. %In the component maintenance game, we consider subsidy schemes ``for repair'', i.e.\ $\text{subs}_i(0,s_{-i})=0$ for all agents $i$ (zero subsidy awarded when agents choose to ``do nothing'').
\end{definition}

The Price of Anarchy measures the reduction in system efficiency (social cost) due to selfish behavior of the agents %has been studied for complete and incomplete information games 
\cite{nisan2007algorithmic,roughgarden2012price}. We define Price of Anarchy (PoA) in the presence of subsidy along the lines of \cite{buchbinder2008non} as the ratio of the sum of total social cost and subsidy in the worst case equilibrium, to the optimal social cost.

\begin{definition}[Price of Anarchy under subsidy]
Let $\mathbb{S}=\{\text{subs}_i\}$ denote the subsidy scheme. Let $\cS_{\text{NE}}(\mathbb{S})\subseteq S$ denote the subset of states corresponding to Nash equilibria when the cost for agent $i$ is $\text{cost}_i-\text{subs}_i$. Suppose $\OPT\ne 0$ and $\cS_{\text{NE}}(\mathbb{S})\ne\{\}$. Then the {\it Price of Anarchy} under subsidy $\bbS$ is given by 

$$\text{PoA}(\mathbb{S})=\frac{\max_{s\in\cS_{NE}(\mathbb{S})}\text{cost}^{\bbS}(s)+\text{subs}(s)}{\OPT}.$$

\noindent We also define a related metric for studying the effectiveness of subsidy scheme $\bbS$, 

$$\widetilde{\text{PoA}}(\mathbb{S})=\frac{\max_{s\in\cS_{NE}(\mathbb{S})}\text{cost}^{\bbS}(s)+\text{subs}(s)}{\min_{s\in\cS_{NE}}\text{cost}(s)},$$

\noindent where $\cS_{NE}$ denotes the set of Nash equilibria in the component maintenance game (in the absence of any subsidy), provided $\min_{s\in\cS_{NE}}\text{cost}(s)\ne 0$ and $\cS_{\text{NE}}(\mathbb{S})\ne\{\}\ne \cS_{NE}$.
\end{definition}

\noindent By setting zero subsidies (i.e. $\text{subs}_i(s)=0$ for each $i,s$) we recover the usual Price of Anarchy, $\text{PoA}$ \cite{nisan2007algorithmic}. Note that finding the subsidy scheme that optimizes $\text{PoA}(\mathbb{S})$ or $\widetilde{\text{PoA}}(\mathbb{S})$ corresponds to the same optimization problem. In some games, it will be easier to show absolute bounds on $\widetilde{\text{PoA}}(\mathbb{S})$. Note that $\widetilde{\text{PoA}}(\mathbb{S})=\text{PoA}(\mathbb{S})/\text{PoS}$, where $\text{PoS}$ is the usual Price of Stability in the unsubsidized game \cite{nisan2007algorithmic}.  

Besides PoA, we will also be interested in another quantity called {\it Value of Information} \cite{lin2021multi} which we define next. Suppose the costs of agents have some uncertainty, which we model by a prior  $\theta$ (common belief shared by all the agents) over some finite information set $\cI$ about the game (e.g.\ the component states in a component maintenance game, or  action costs in a cost-sharing game, see below). The expected cost of agent $j$ under joint action $s$ is given by $l_j(s,\theta)=\bbE_{I\sim \theta}[\text{cost}_j(s)]$. We will refer to this as the {\it prior game}. The Value of Information for agent $j$ when information $i\in\cI$ is revealed is the change in expected cost of the agent from the prior $\theta$ to the posterior $\theta^{i,e_i}$, where $e_i$ denotes the revealed value of the information $i$. For example, in the component maintenance  game defined below the information $i$ corresponds to the some agent's component and $e_i$ corresponds to the revealed state (working or broken). The game with expected costs given by $l_j(s,\theta^{i,e_i})$ is called the {\it posterior game}. Formally, the value of information is defined as follows.

\begin{definition}[Value of Information] Denote by $\theta$ the prior, and by $\theta^{i,e_i}$ the posterior when information $i\in\cI$ is revealed to be $e_i$. Let $s,\tilde{s}$ be joint actions which are Nash Equilibria in the prior and posterior games respectively. The Value of Information for agent $j$ when information $i$ is revealed as  $e_i$ is given by 
    $\text{VoI}_{j,i}(s,\tilde{s}) := l_j(s,\theta)-l_j(\Tilde{s},\theta^{i,e_i})$. We say that the Value of Information is non-negative for agent $j$ if $\text{VoI}_{j,i}(s,\tilde{s})\ge 0$ for  any information $i\in\cI$ and any prior/posterior equilibria $s,\tilde{s}$. The worst-case Value of Information is defined as $\min_{s,\tilde{s}} \text{VoI}_{j,i}(s,\tilde{s})$, where the minimum is over joint actions from prior and posterior Nash Equilibria. We will often call this simply the Value of Information, the worst-case aspect will be clear from context (lack of explicit arguments $s,\tilde{s}$).
\end{definition}

We will be interested in a collection of related game instances, specifically the sample complexity of number of game instances needed from a distribution over the games to learn a good value of subsidy. Formal definitions follow the standard in data-driven algorithm design, and are deferred to Section \ref{sec:sample-complexity-defs}. We will now proceed to formally describe the component maintenance and cost-sharing games and instantiate the above abstract definitions for both.

{\it Component maintenance game} \cite{lin2021multi}. 
%We will consider a specific game defined as follows. 
Each agent is associated with a component $c_i$ which has a binary state $x_i\in\{0,1\}$ where $x_i=0$ corresponds to a broken component and $x_i=1$ corresponds to a functioning component. The action space of each agent is also binary, $S_i=\{0,1\}$, where action $s_i=1$ indicates that the agent repaired the component (denoted RE), and $s_i=0$ denotes that the agent did nothing (denoted DN). The state $x_i$ of $c_i$ is updated after action $s_i$ as $x'_i=\max\{x_i,s_i\}$. This corresponds to ``perfect repair'', i.e.\ if an agent picks the RE action, their component is guaranteed to work, and otherwise it stays as is. For a tuple of actions $s$, we will denote the updated state by $\x'(s)$, or simply $\x'$ when $s$ is evident from context. The state $u$ of the system is a fixed binary function of the component states, $u=\phi(\x)$, where $\x=(x_1,\dots,x_n)$ and $\phi:\{0,1\}^n\rightarrow\{0,1\}$\footnote{The component maintenance game intuitively corresponds to monotone boolean functions $\phi$, but our results  easily extend to general boolean functions.}. For example, if $\phi(x_1,x_2,x_3)=(x_1\land x_2)\lor x_3$, the system functions either when both components $c_1$ and $c_2$ are working, or when component $c_3$ is working. Here $u=0$ denotes a failure of the system. Let $u'=\phi(\x')$ denote the state of the system after the agents' actions. The cost for agent $i$ is given by, $\text{cost}_i=C_is_i+1-u'$ for repair cost $C_i\in\R$. {Note that $C_i$ could be negative, for example if there is a reward or incentive associated with the repair of component $i$ which more than offsets what the agent pays for its repair.} The actions depend on the {\it belief} about the state $\x$ of the components, which we model by a distribution $\theta$ over $\{0,1\}^n$. We will assume here that components and therefore their probability of functioning are independent, and that all agents share the same common belief about the state of the components, i.e.\ $\theta$ is fixed and known to all the agents. The system failure probability under this belief is $P_{\phi}(\theta)=1-\bbE_{\x\sim \theta}[\phi(\x)]$, and the expected cost of action $s_i$ to agent $i$, given other agents' actions are $s_{-i}$, is $l_i(s_i,s_{-i},\theta)=\bbE_{\x\sim \theta}[\text{cost}_i]=C_is_i+1-\bbE_{\x\sim \theta}[\phi(\x')]$. Similarly, the expected social cost is defined as $l(s,\theta)=\bbE_{\x\sim \theta}[\text{cost}]=\sum_il_i(s_i,s_{-i},\theta)$ for $s=(s_1,\dots,s_n)$. 

We now imagine that each agent $j$ has the ability, for free, to inspect their own component to determine its state.  The catch, however, is that this state is revealed to all agents. The revelation of the state would result in an updated common belief $\tilde{\theta}$, corresponding to the conditional distribution given the known component state, and therefore an updated cost function for all agents. As a result, the set of Nash equilibria can now change and the agent may suffer higher personal cost in the new equilibria, inducing the agent to avoid inspecting their own component for selfish reasons. We will use the terminology {\it prior} (respectively {\it posterior}) game and equilibria to refer to the game before (respectively after) the inspection of any fixed component $c_j$. We next formalize the setup and instantiate the Value of Information metric for component maintenance game  to capture this behavior.

{\it Component inspection game} \cite{lin2021multi}. Suppose the inspection of component $c_j$ reveals state $y_j\in\{0,1\}$. We will assume perfect inspection, i.e.\ $y_j=x_j$. Denote posterior belief after the inspection of component $c_j$ by $\Tilde{\theta}^{j,y_j}$. If agent $i$ switches action from $s_i$ to $\Tilde{s}_i^{j,y_j}$ after the inspection (and other agents switch from $s_{-i}$ to $\Tilde{s}_{-i}^{j,y_j}$), the value of information about inspection of $c_j$ for agent $i$ and posterior $y_j$ is given by $\text{VoI}_{i,j}(s_i,s_{-i},\Tilde{s}_i^{j,y_j},\Tilde{s}^{j,y_j}_{-i}) := l_i(s_i,s_{-i},\theta)-l_i(\Tilde{s}_i^{j,y_j},\Tilde{s}_{-i}^{j,y_j},\Tilde{\theta}^{j,y_j})$. The expected value of information is given by $\overline{\text{VoI}}_{i,j}(s_i,s_{-i},\Tilde{s}_i^{j,*},\Tilde{s}_{-i}^{j,*}) := l_i(s_i,s_{-i},\theta)-\bbE_{y_j}l_i(\Tilde{s}_i^{j,y_j},\Tilde{s}_{-i}^{j,y_j},\Tilde{\theta}^{j,y_j})$, where $\Tilde{s}^{j,*}$ is the collection of states $\Tilde{s}^{j,0},\Tilde{s}^{j,1}$. Here the posterior loss is computed as the expectation (based on prior) for the various possible outcomes for the inspection of a given component $j$. Typically we will assume that the joint actions $s$ and $\Tilde{s}^{j,y_j}$ are Nash equilibria. We want the value of information to be non-negative for each agent $i$, when inspecting any component $j$, for any choice of equilibrium states. The expected value of information is easier to ensure to be non-negative, but implies a weaker (less robust) guarantee, so we will focus on the (inspection-specific) value of information. A motivation for ensuring that the value of information is non-negative is to not have any undesirable information avoidance behavior among the agents, where agents may choose to ignore freely available information (about the inspected component state) for selfish reasons (to reduce personal cost, for example by choosing to not repair their broken component), which could lead to sub-optimal social cost or an undesirable system state.

\begin{figure}[t]
  \centering
  \includegraphics[width=0.35\linewidth]{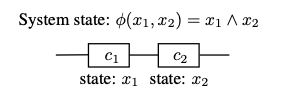}
  \caption{A two component series system.\label{fig: series}}
  % \Description{A series system with two components. The system state is the logical and of the component states.}
\end{figure}

{\it Fair-cost sharing game}. Let $\cA=\{a_j\}$ denote the (finite) set of all possible actions for all the agents. There is a function $f:\cA\rightarrow 2^N$ such that %$f(a_j)\subseteq N$, and
agent $i\in[N]$ may use any action $a_j$ for which $i\in f(a_j)$. Note that there may be multiple options corresponding to the same subset of agents. Under {\it uniform} or fair cost-sharing, all agents that use an action $a_j$ in some state $s\in S$ equally share its cost. That is, in the classical fair cost-sharing game there is a deterministic function $c:\cA\rightarrow[0,C_{\max}]$ such that if $k$ agents from $f(a_j)$ use an action $a_j$ in some state $s$, then $\text{cost}_i(s)$ for each of these agents is $\frac{c(a_j)}{k}$. Here, we will consider a Bayesian extension where the costs of some actions in $\cA$ are associated with some uncertainty. The action costs are given by a distribution $\theta_c$ over $[0,C_{\max}]^{|\cA|}$ and the agents all know the costs under the prior given by $l_i(s,\theta_c)=\frac{1}{k}\bbE_{\theta_c}[c(a_j)]$, where $k$ is the number of agents opting for action $a_j$ in state $s$. This means we expect the agents to act according to the mean costs of actions under $\theta_c$ in the absence of additional information.

Analogous to the component inspection game above, suppose the inspection of action $a_j$ reveals its true cost $c_j\in[0,C_{\max}]$. Denote posterior belief after the revelation of the cost of action $a_j$ by $\Tilde{\theta}_c^{j,c_j}$. If agent $i$ switches action from $s_i$ to $\Tilde{s}_i^{j,c_j}$ after the inspection (and other agents switch from $s_{-i}$ to $\Tilde{s}_{-i}^{j,c_j}$), the value of information about inspection of $a_j$ for agent $i$ and posterior $c_j$ is given by $\text{VoI}_{i,j}(s_i,s_{-i},\Tilde{s}_i^{j,c_j},\Tilde{s}^{j,c_j}_{-i}) := l_i(s_i,s_{-i},\theta_c)-l_i(\Tilde{s}_i^{j,c_j},\Tilde{s}_{-i}^{j,c_j},\Tilde{\theta_c}^{j,c_j})$. The expected value of information is given by $\overline{\text{VoI}}_{i,j}(s_i,s_{-i},\Tilde{s}_i^{j,*},\Tilde{s}_{-i}^{j,*}) := l_i(s_i,s_{-i},\theta_c)-\bbE_{c_j}l_i(\Tilde{s}_i^{j,c_j},\Tilde{s}_{-i}^{j,c_j},\Tilde{\theta_c}^{j,c_j})$, where $\Tilde{s}^{j,*}$ is the collection of posterior states $\Tilde{s}^{j,c_j}$ for different posterior costs $c_j$. As before, typically we will assume that the joint actions $s$ and $\Tilde{s}^{j,c_j}$ are Nash equilibria. %We want the value of information to be non-negative for each agent $i$, when inspecting any component $j$. %This is to not have any undesirable information avoidance behavior among the agents, where agents may choose to ignore freely available information (about the inspected component state) for selfish reasons (to reduce personal cost, for example by choosing to not repair their broken component), which could lead to sub-optimal social cost or undesirable system state.

\subsection{Motivating examples for how subsidy can help}\label{sec:subsidy}
% \red{TODO: polish things before Sec 3.1}

{\bf Component maintenance game}. We first present a motivating example where subsidy for component repair costs can help improve social cost by steering the system to a better equilibrium.
\blue{
\cite{lin2021multi} give examples of local equilibria for two agent games where the Value of Information (VoI is defined as difference of prior loss defined above, and posterior loss when a component is inspected) is negative. This is undesirable as it can trigger the agents to avoid the perfect information on component state for selfish reasons. We present a motivating example where subsidy for component repair costs can help improve social cost as well as prevent information avoiding behavior on inspection.}

\begin{example}[PoA in a 2-series system] Consider the two component series system depicted in Fig \ref{fig: series}. Suppose that the components are independent and
the failure probabilities (according to the common prior belief) for components $c_1$ and $c_2$ are both 0.5, and the repair cost is
$C_1 = C_2 = 0.3$. Then the cost matrix for the component maintenance game is given in Table \ref{tab:2-examples}. For instance, for the joint action DN-DN (recall that action DN stands for ``do nothing'' and RE for ``repair'') the system works with probability $0.5\times 0.5=0.25$ and therefore the cost to agents is $P_\phi(\theta)=1-0.25=0.75$. \purple{explain table} Notice that both DN-DN and RE-RE are Nash equilibria for the  game, but RE-RE has a smaller social cost.  If the central authority provides a subsidy of $0.05+\epsilon$ for the repair action, for any $\epsilon>0$, then it would incentivize the agents to switch their actions from DN to RE, and the only Nash equilibrium is RE-RE. Note that the cost reduction in RE-RE (0.3 to $0.25-\epsilon$ for each agent) equals the subsidy provided by the central agent in this case, and the social cost + subsidy for RE-RE is preserved, while ruling out the sub-optimal equilibrium DN-DN.
    \blue{Suppose there is central authority that is invested in minimizing the social cost and tackling information avoidance. One approach towards fixing the above problem of negative VoI is for the central authority to give an extra incentive (subsidy, or reward) to some agents to repair their components.
For example consider the two component series system from \cite{lin2021multi}. Suppose that the components are independent and
the failure probabilities for $c_1$ and $c_2$ are 0.6
and 0.9 respectively, and the repair cost is
$C_1 = C_2 = 0.3$. Then the cost matrix for the game (both prior, and posterior cost when component $c_1$ is inspected) is given in Figure \ref{figure: series}. Notice that while DN-DN is not a Nash equilibrium in the prior game, but it becomes one in the event item $i$ is inspected and is broken $y_i=0$. As a result the value of  information (VoI) can be negative for both agents. If the central authority gives agent 1 an incentive of $0.2+\epsilon$ for any $\epsilon>0$, then it would incentivize agent 1 to switch their action from DN to RE in the posterior game $y_i=0$ (which will cost $1.0-\epsilon$ after the subsidy) and negative VoI can be avoided in this example.}  \blue{More generally, for any costs $C_1=C_2=\epsilon<1$, if $c_2>1-C_1$, then for $y_i=0$ case, we will have DN-DN a sub-optimal equilibrium which can be arbitrarily worse than RE-RE, by a factor of $\frac{1}{\epsilon}$.}
%At the same time, the social cost in the posterior game is reduced. Similarly, in the parallel game in Figure \ref{figure: parallel}, the central authority can pay each agent $i$ an incentive of $C_i+\epsilon$ to avoid the suboptimal local equilibrium.
\label{ex:2-series}
\end{example}

  % \begin{tabular}{cc} \hline
  % Author & Title \\ \hline
  % Knuth & The \TeX book \\
  % Lamport & \LaTeX \\ \hline
  % \end{tabular}
%   \begin{tabular}{@{}lcc@{}}
% \toprule
% Agent 1\textbackslash 2 & DN ($s_2=0$) $\;$                     & RE ($s_2=1$)     \\ \midrule
% DN ($s_1=0$) $\;$    & $0.75,0.75$ & $0.5,0.8$     \\
% RE ($s_1=1$) $\;$  & $0.8,0.5$       & $0.3,0.3$       \\
%  \bottomrule
% \end{tabular}

\begin{table}[t]
\centering
\caption{Cost matrix for (agent $1$, agent $2$) for the two-agent game with components connected in series.}
\begin{tabular}{@{}lcc@{}}
\toprule
Agent 1\textbackslash 2 & DN   ($s_2=0$)         $\;$         & RE   ($s_2=1$)   \\ \midrule
DN   ($s_1=0$) $\;$   & $0.75,0.75$ & $0.5,0.8$     \\
RE  $(s_1=1$) $\;$  & $0.8,0.5$       & $0.3,0.3$       \\
 \bottomrule
\end{tabular}
% \vspace*{2mm}
\label{tab:2-examples}
\end{table}

% \begin{figure}[t]
% \centering

% \includegraphics[scale=0.8]{}

% \caption{\purple{TODO: replace by new, more illustrative example}An example for negative VoI for local equilibrium (shown in red) in a series game due to \cite{lin2021multi}.\label{figure: series}}

% \end{figure}

% \begin{figure}[t]
% \centering

% \includegraphics[scale=0.8]{}

% \caption{An example for negative VoI for local equilibrium (shown in red) in a parallel game due to \cite{lin2021multi}.\label{figure: parallel}}

% \end{figure}

% \medskip
\noindent The above example illustrates how using a subsidy scheme, the central agent can potentially eliminate undesirable Nash equilibria by effectively modifying the repair costs in the component maintenance game. 
 In the component maintenance game, we will consider subsidy schemes that incentivize repair, i.e.\  $\text{subs}_i(0,s_{-i})=0$ for all agents $i$ (no subsidy awarded for the action ``do nothing''). We also say that a subsidy scheme is {\it uniform} if the scheme is identical for all agents and actions, i.e. $\text{subs}_i(1,s_{-i})=c_{\text{subs}}$ a fixed constant for all agents $i$ for some constant $c_{\text{subs}}\ge 0$. We consider two types of subsidies which a central agent, whose goal is to altruistically maximize social welfare, can offer to reduce the repair cost of certain components.

% \purple{do we need unconditional subsidies?}
{\it Conditional vs.\ unconditional subsidies.}  The central agent may offer an {\it unconditional} subsidy which effectively reduces the cost of repair for the components, or may be {\it conditional on inspection} in order to encourage agents to inspect their components, even when the information about the state of an agent's component results is something the agent might want to avoid (in the absence of subsidy). %Note that the subsidy still is a payment to reduce the repair cost, and not a payment for inspection. 
Formally, in a  component inspection game involving inspection of some component $j$, a general  subsidy consists of three functions for each agent $i$ given by $\text{subs}_i$, $\text{subs}_i^1$, $\text{subs}_i^0$, corresponding to prior, posterior with component $j$ intact, and posterior with component $j$ damaged respectively. For simplicity, we will say that the central agent provides subsidy conditional on inspected state of component $j$, $y_j=k$ for $k\in\{0,1\}$, to denote that $\text{subs}_i^k$ is the only non-zero function in the conditional scheme, and conditional on inspection to denote that $\text{subs}_i$ is a zero function and $\text{subs}_i^1=\text{subs}_i^0$.

We will now illustrate how subsidy can be used to avoid negative Value of Information, again in a 2-series component inspection game.

\begin{example}
    Consider the two component series system from \cite{lin2021multi}. Suppose that the
components are independent and the failure probabilities for $c_1$ and $c_2$ are 0.6 and 0.9
respectively, and the repair cost is $C_1 = C_2 = 0.3$. Then the cost matrix for the game
(both prior, and posterior cost when component $c_1$ is inspected) is given in Table \ref{tab: voi-example-cig}. Without any subsidy, for the highlighted Nash equilibria, when component $c_1$ is inspected and revealed to be broken ($y_1=0$), the value of information is negative ($0.3-1.0=-0.7$) for both agents.
If the central authority gives agent 1 an incentive of $0.2 + \epsilon$ for any $\epsilon > 0$, then it
would incentivize agent 1 to switch their action from DN to RE in the posterior game
$y_1 = 0$ (which will cost $1.0 - \epsilon$ after the subsidy) and negative VoI can be avoided in
this example. The expected VoI is $0.3-(0.6*1.0+0.4*0.0)=-0.3$ for agent 1 and $0.3-(0.6*1.0+0.4*0.3)=-0.42$ for agent 2, and the same subsidy works to ensure the expected VoI is non-negative as well.
\end{example}

\begin{table}[t]
\centering
\caption{Cost matrix for (agent $1$, agent $2$) for the two-agent component inspection game with components connected in series. Selected Nash Equilibria used in calcuation of Value of Information are highlighted in red.}
\begin{tabular}{@{}lcccc@{}}
\toprule
Condition/Joint action & DN-DN & DN-RE & RE-DN        & RE-RE      \\ \midrule
Prior   & $0.96,0.96$ & $0.6,0.9$ & $1.2,0.9$ & \red{$0.3,0.3$}     \\
Posterior  $(y_1=1$)   & $0.90,0.90$ & \red{$0.0,0.3$} & $1.2,0.9$ & $0.3,0.3$     \\
Posterior  $(y_1=0$)   & \red{$1.0,1.0$} & $1,1.3$ & $1.2,0.9$ & $0.3,0.3$     \\
 \bottomrule
\end{tabular}
% \vspace*{2mm}
\label{tab: voi-example-cig}
\end{table}

{\bf Cost-sharing game}. In the {\it cost-sharing game},  the central agent can subsidize the cost of some actions in $\cA$. That is, the subsidy can be specified as an allocation of the subsidy budget to actions, $c^{\bbS}:\cA\rightarrow\R_{\ge 0}$. The corresponding {\it subsidy scheme} is given by $\{\text{subs}_i\}_i$, where $\text{subs}_i(s_1,\dots,s_N)=\frac{c^\bbS(s_i)}{k}$, with $k=\sum_{j=1}^N\I[s_j=s_i]$. This corresponds to the subsidy being equally enjoyed by all the agents that select a given subsidized action. Subsidy can again be used to reduce Price of Anarchy in this game \cite{buchbinder2008non}. We will here show an example where subsidy can be used to avoid negative Value of Information.

\begin{example}
Consider a two-agent cost-sharing game where the action set is $\cA=\{A,B,C,D\}$ with associated subsets $f(A)=\{1,2\}$, $f(B)=\{2\}$, $f(C)=\{2\}$, and $f(D)=\{1\}$. For example, in a commuting game, $A$ could correspond to a shared public transport, and $B,C,D$ could correspond to private modes of transport. We assume the cost function $c$ is a random function such that with probability $\frac{1}{2}$, $c(A)=5$, $c(B)=2$, $c(C)=6$, and $c(D)=4$, and with probability $\frac{1}{2}$, $c(A)=5$, $c(B)=6$, $c(C)=2$, and $c(D)=4$. In the commute example, for agent 2, $B$ could be a bike and $C$ could be a car, and $w_i$ could be unknown world state that impacts the cost of actions $B$ and $C$ for agent 2. We call these posterior worlds $w_1$ and $w_2$ respectively (see Tables \ref{tab:2-cs1} and \ref{tab:2-cs2}). Thus the prior cost for agent 2 (i.e. probability weighted cost of the worlds $w_1$ and $w_2$) is 4 for actions $B$ and $C$. A Nash equilibrium in the prior game is $(A,A)$ with a cost of $(2.5,2.5)$ for both agents. In world $w_1$, the only NE is $(D,B)$ and in world $w_2$ the only NE is $(D,C)$, both with cost $(4,2)$. Thus the knowledge of the state of the world leads to negative VoI for agent 1. Specifically, the knowledge of the cheaper option among $B$ and $C$ causes agent 2 to drop out of the cost-shared option $A$, increasing social cost and cost for agent 1.     In this example, using a subsidy of $3+\epsilon$ for $\epsilon>0$ for the option $A$ guarantees that agent $2$ will always prefer option $A$, and is sufficient to ensure that negative Value of Information %(as well as suboptimal local equilibrium) 
is avoided.
\end{example}

\begin{table}[]
\centering

\caption{Action-cost matrix for the two-agent cost sharing game in Example 3, for the prior game, posterior $w_1$ and posterior $w_2$. Selected NE is highlighted in red.\label{tab:2-prior}}
\begin{tabular}{@{}cccc@{}}
\toprule
Agent 1 \textbackslash\ Agent 2 & $A$                     & $B$                   & $C$                                 \\ \midrule
$A$      & \red{$2.5,2.5$} & $5,4$       & $5,4$  \\
$D$    & $4,5$       & $4,4$       & $4,4$ \\ \bottomrule
\end{tabular}

\vspace*{1em}
% \caption{Action-cost matrix for the two-agent cost sharing game in Example 3, for posterior $w_1$.\label{tab:2-cs1}}
\begin{tabular}{@{}cccc@{}}
\toprule
Agent 1 \textbackslash\ Agent 2 & $A$                     & $B$                   & $C$                                 \\ \midrule
$A$      & $2.5,2.5$ & $5,2$       & $5,6$  \\
$D$    & $4,5$       & \red{$4,2$}       & $4,6$ \\ \bottomrule
\end{tabular}

\vspace*{1em}
% \caption{Action-cost matrix for the two-agent cost sharing game in Example 3, for posterior $w_2$.\label{tab:2-cs2}}

\begin{tabular}{@{}cccc@{}}
\toprule
Agent 1 \textbackslash\ Agent 2 & $A$                     & $B$                   & $C$                           \\ \midrule
$A$      & $2.5,2.5$ & $5,6$       & $5,2$  \\
$D$    & $4,5$       & $4,6$       & \red{$4,2$} \\ \bottomrule
\end{tabular}
\end{table}

We conclude this section with a simple remark that $\text{PoA}(\mathbb{S})\ge 1$ for any subsidy scheme $\mathbb{S}$ since the amount of subsidized cost is added back in the numerator for any state $s$ in the definition of $\text{PoA}(\mathbb{S})$, and the dependence on the subsidy scheme $\mathbb{S}$ is governed by the corresponding set of Nash Equilibria $\cS_{NE}(\mathbb{S})$.

\begin{proposition}
    For any subsidy scheme $\mathbb{S}$, $1\le \text{PoA}(\mathbb{S})$ and $\widetilde{\text{PoA}}(\mathbb{S})\le \text{PoA}(\mathbb{S})$.
\end{proposition}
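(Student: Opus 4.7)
\begin{proofoutline}
The plan is to exploit the identity $\text{cost}^{\bbS}(s)+\text{subs}(s)=\text{cost}(s)$ (which follows directly from $\text{cost}^{\bbS}=\sum_i(\text{cost}_i-\text{subs}_i)=\text{cost}-\text{subs}$) to rewrite the numerator of both $\text{PoA}(\mathbb{S})$ and $\widetilde{\text{PoA}}(\mathbb{S})$ as $\max_{s\in\cS_{NE}(\mathbb{S})}\text{cost}(s)$. Since this common numerator depends only on the original (unsubsidized) cost function, only the denominators need to be compared.

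For the first inequality, I would note that under the standing assumption $\cS_{NE}(\mathbb{S})\ne\emptyset$, the numerator $\max_{s\in\cS_{NE}(\mathbb{S})}\text{cost}(s)$ is the maximum of $\text{cost}$ over a nonempty subset of $S$, hence it is at least $\min_{s\in S}\text{cost}(s)=\OPT$. Dividing by $\OPT>0$ gives $\text{PoA}(\mathbb{S})\ge 1$.

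For the second inequality, the key observation is that $\cS_{NE}\subseteq S$, so $\min_{s\in\cS_{NE}}\text{cost}(s)\ge \min_{s\in S}\text{cost}(s)=\OPT$. Both $\text{PoA}(\mathbb{S})$ and $\widetilde{\text{PoA}}(\mathbb{S})$ share the same nonnegative numerator, and replacing the denominator by a (weakly) larger positive quantity only decreases the ratio, giving $\widetilde{\text{PoA}}(\mathbb{S})\le \text{PoA}(\mathbb{S})$. There is no substantive obstacle here; the only thing to double-check is that the standing assumptions in the definitions ($\OPT\ne 0$, $\min_{s\in\cS_{NE}}\text{cost}(s)\ne 0$, and both equilibrium sets nonempty) ensure every ratio is well-defined and the inequalities go in the claimed direction. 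The entire argument is a one-line bookkeeping consequence of the definitions once the identity $\text{cost}^{\bbS}+\text{subs}=\text{cost}$ is noted.
\end{proofoutline}
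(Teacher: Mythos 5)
Your proposal is correct and follows essentially the same route as the paper: both rewrite the numerator via $\text{cost}^{\bbS}(s)+\text{subs}(s)=\text{cost}(s)$, lower-bound it by $\OPT$ to get $\text{PoA}(\mathbb{S})\ge 1$, and obtain $\widetilde{\text{PoA}}(\mathbb{S})\le\text{PoA}(\mathbb{S})$ from $\min_{s\in\cS_{NE}}\text{cost}(s)\ge\OPT$. No substantive differences.
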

\begin{proof}
    Let $\mathbb{S}=\{\text{subs}_i\}$ denote the subsidy scheme. By definition,
    \begin{align*}
        \text{PoA}(\mathbb{S})&=\frac{\max_{s\in\cS_{NE}(\mathbb{S})}\text{cost}^{\bbS}(s)+\text{subs}(s)}{\OPT}\\
        &=\frac{\max_{s\in\cS_{NE}(\mathbb{S})}\sum_i(\text{cost}_i(s)-\text{subs}_i(s))+\text{subs}_i(s)}{\OPT}\\
        &=\frac{\max_{s\in\cS_{NE}(\mathbb{S})}\text{cost}(s)}{\OPT}\ge \frac{\min_{s\in\cS_{NE}(\mathbb{S})}\text{cost}(s)}{\OPT}\ge 1.
    \end{align*}
\noindent Also, $\text{PoA}(\mathbb{S})\ge \widetilde{\text{PoA}}(\mathbb{S})$ since ${\min_{s\in\cS_{NE}}\text{cost}(s)}\ge {\min_{s\in S}\text{cost}(s)}=\OPT$.
\end{proof}

Addition of subsidy effectively changes the set of states that correspond to a Nash equilibrium. The goal of a central agent is to ensure suboptimal (local) Nash equilibria are not included in the set $\cS_{NE}$ when the subsidy is applied. The subsidy provided by the central agent could for example come from taxes collected from the agents, and therefore it makes sense to add the   total subsidy provided by the central agent to the social cost in the definition of $\text{PoA}(\mathbb{S})$. %Analogously, we can define Price of Information Avoidance where the central agent wants to ensure that no agent has a negative value of information. \red{Add tax intuition.}

\iffalse
The central agent designing the subsidy scheme can have several different objectives for the scheme. In this work, we consider the following three objectives:

\begin{itemize}
    \item[1.] {\it PoA$(\bbS)$ is minimized}. This means that the social cost in the worst-case Nash equilibrium after the subsidy is provided (with the total subsidy added) is not much worse than the social cost of the best joint action (or the best Nash equilibrium if we minimize $\widetilde{\text{PoA}}(\mathbb{S})$) when no subsidy is offered. As the result, the harmful effects of selfish behavior and lack of coordination among the agents is minimized.
    \item[2.] {\it The value of information for each agent is non-negative} when a single component is inspected. This corresponds to minimizing the tendency of agents to avoid seeking freely available information about the state of their component to avoid an increase in their personal cost, potentially at the expense of the social cost.
    \item[3.] {\it The system is guaranteed to work in any Nash equilibrium} in the component maintenance game. This is desirable if the central agent is keen on guaranteeing system functionality and willing to provide the needed subsidy for it. This can differ from minimizing the price of anarchy as the optimal social cost could correspond to doing nothing and letting the system stay broken, for example when the repair costs are too large.
\end{itemize}

\purple{summarize in a table}
\fi

For a fixed game, the appropriate subsidy scheme could vary depending on the objective of the central agent. We will illustrate this for a two agent series game in Section \ref{sec:opt-subs-2-series}, where we will  obtain the optimal subsidy schemes for different objectives below. Similarly, optimal schemes can be derived for the two-agent parallel game, the interested reader is directed to Appendix \ref{app:2-parallel}. However, for  general $n$-agent games, we  show in Section \ref{sec:hardness} that computing the optimal total subsidy is NP hard, under \purple{each of }the various objectives.

\blue{optional TODO: Write more generally for any game and information. Add more intuition/example for definition. We could also skip Defn 3 for SAGT submission.

\begin{definition}[Price of Information Avoidance in the component inspection game]
Let $\mathbb{S}=\{\text{subs}_i\}$ denote the subsidy scheme. Consider the  component inspection game for inspection of component $j$. Let ${\cS}_{NE%, \text{VoI}\ge0
}(\mathbb{S}),{\cS}_{NE}^0(\mathbb{S}),{\cS}_{NE}^1(\mathbb{S})\subseteq S$ denote the subset of states corresponding to Nash equilibria when the cost for agent $i$ is $\text{cost}_i-\text{subs}_i$ for prior, posteriors $y_j=0$ and $y_j=1$ respectively. Let $\text{VoI}_j(\mathbb{S})=\min_{i,s\in {\cS}_{NE}(\mathbb{S}),s'\in {\cS}_{NE}^0(\mathbb{S})\cup{\cS}_{NE}^1(\mathbb{S})}\text{VoI}_{i,j}(s_i,s'_i)$ denote the least value of information for any agent $i$ for equilibria under $\mathbb{S}$.  Then the {\it Price of Information Avoidance}  is given by 

$$\text{PoIA}(j)=\frac{\min_{\mathbb{S}\mid \text{VoI}_{j}(\mathbb{S})\ge0}\max_{s\in{\cS}_{NE%, \text{VoI}\ge0
}(\mathbb{S})}\text{cost}(s)}{\min_{\mathbb{S}}\max_{s\in\cS_{NE}(\mathbb{S})} \text{cost}(s)}=\frac{\min_{\mathbb{S}\mid \text{VoI}_{j}(\mathbb{S})\ge0}\text{PoA}(\mathbb{S})}{\min_{\mathbb{S}}\text{PoA}(\mathbb{S})}.$$
\end{definition}

Recall that Price of Anarchy captures the %worst case 
effect of selfish behavior on social cost, relative to best centralized (co-ordinated) action under unselfish behavior, and is minimized (equals 1) when selfish behavior does not impact social cost. Similarly, Price of Information Avoidance corresponds to the %worst case 
effect of information avoidance, relative to the best centralized action when agents do not avoid information for selfish reasons. Also, it equals 1 if there is never any negative value of information under any subsidizing policy. Otherwise it is at least 1, and captures the sacrifice in social cost to avoid negative VoI. }

\subsection{Optimal subsidy design in two-agent series component maintenance game}\label{sec:opt-subs-2-series}
This section aims to provide a working intuition for what an optimal subsidy scheme looks like for simple two-agent component maintenance games for the different objectives of the central agent, including Price of Anarchy and Value of Information. The key insights are that subsidy can be greatly beneficial, but exactly optimal subsidy allocation is a complicated function of the game's cost matrix and objective of interest. In the two-agent series game, the goal is typically to provide just enough subsidy to ensure that both agents prefer to repair (and the series system functions as a result), but the smallest amount of subsidy needed depends on repair costs and component failure probabilities.

Suppose we have two agents $N=\{1,2\}$ with components $c_1,c_2$ connected in series. Let $p_1,p_2$ denote the (prior) probability that the components $c_1,c_2$ will work (respectively). %and $p_R$ denote the probability that all remaining components (assumed uncontrolled by agents, and cannot be inspected or repaired) connected in series will all work. 
We will use the notation $\overline{p}:=1-p$ for conciseness. 
This corresponds to the prior game in Table \ref{tab:2-series}, where DN denotes ``do nothing'' ($s_i=0$) and RE denotes ``repair'' ($s_i=1$). %\footnote{In Table \ref{tab:2-series}, $p_R$ denotes the probability that all remaining components connected in series will all work. Setting $p_R=1$ (i.e.\ all remaining components are in perfect working order) corresponds to the present game.}
We will now consider the above three objectives. For each objective, we will note that subsidy helps and we will obtain the optimal subsidy in the two-agent series game.\looseness-1

\subsubsection{Minimizing the price of anarchy with subsidy.}
To demonstrate the significance of subsidy for reducing the social cost in the two-agent series games, we will first show a lower bound on the Price of Anarchy in the absence of subsidy. The following proposition indicates that the Price of Anarchy can be very high when the probabilities of the components functioning ($p_1,p_2$) are small. Moreover, for $n$ agents connected in series, the price of anarchy can increase exponentially with number of agents $n$.

\begin{proposition}\label{prop:poa-no-subs} In the two-agent series prior game (defined above and cost matrix noted in the first row of Table \ref{tab:2-series}), the Price of Anarchy in the absence of subsidy is at least $\text{PoA}\ge \frac{2}{p_1+p_2}$, 
for some repair costs $C_1,C_2$. More generally, for n agents, $\text{PoA}\ge \tilde{H}/\tilde{G}^n$ for some repair costs $C_1,\dots,C_n$, where $\tilde{H}$ and $\tilde{G}$ are the harmonic and geometric means, respectively, of the prior probabilities $p_1,\dots,p_n$.
\end{proposition}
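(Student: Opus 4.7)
The strategy is to exhibit specific repair costs for which the all-DN joint action is a Nash equilibrium and then lower-bound the ratio of its social cost to the social optimum. I set $C_i := (1-p_i)\prod_{j \ne i} p_j$ for each agent $i \in [n]$. In the all-DN state, each agent pays $1 - \prod_j p_j$, while unilaterally switching to RE costs $C_i + 1 - \prod_{j \ne i} p_j$; by the choice of $C_i$ these are equal, so all-DN is a (weakly) best-response equilibrium, with social cost $n(1 - \tilde{G}^n)$.

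Next I would argue that the all-RE action is the social optimum, so $\OPT = \sum_i C_i = \tilde{G}^n \sum_i \tfrac{1-p_i}{p_i} = n\, \tilde{G}^n (1 - \tilde{H})/\tilde{H}$ using $\sum_i 1/p_i = n/\tilde{H}$. To verify optimality, take any proper subset $S \subsetneq [n]$ of repairers and set $T := [n] \setminus S$; the social cost of this joint action equals $\sum_{i \in S} C_i + n\,(1 - \prod_{j \in T} p_j)$, so it suffices to show $n\,(1 - \prod_{j \in T} p_j) \ge \sum_{i \in T} C_i$. This follows from the chain
\[
n\bigl(1 - \textstyle\prod_{j \in T} p_j\bigr) \ge |T|\bigl(1 - \textstyle\prod_{j \in T} p_j\bigr) \ge |T|\max_{i \in T}(1-p_i) \ge \sum_{i \in T}(1-p_i) \ge \sum_{i \in T} C_i,
\]
where the second step uses $\prod_{j \in T} p_j \le \min_{i \in T} p_i$ and the last step uses $C_i = (1-p_i)\prod_{j \ne i} p_j \le 1-p_i$.

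Combining the two quantities yields $\text{PoA} \ge \frac{n(1-\tilde{G}^n)}{n \tilde{G}^n(1-\tilde{H})/\tilde{H}} = \frac{\tilde{H}(1-\tilde{G}^n)}{\tilde{G}^n(1-\tilde{H})}$. I would close using the inequality $\tilde{H} \ge \tilde{G}^n$, equivalently $\sum_i \prod_{j \ne i} p_j \le n$, which is immediate since each summand is at most $1$. This gives $(1-\tilde{G}^n)/(1-\tilde{H}) \ge 1$ and hence $\text{PoA} \ge \tilde{H}/\tilde{G}^n$. For $n=2$, $\tilde{H}/\tilde{G}^2 = \tfrac{2p_1 p_2/(p_1+p_2)}{p_1 p_2} = \tfrac{2}{p_1+p_2}$, recovering the first claim. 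The main obstacle will be the optimality check for all-RE across the $2^n$ possible joint actions, which the ``savings vs.\ failure-cost'' inequality above handles uniformly; the algebraic identification of the constants with $\tilde{H}$ and $\tilde{G}^n$ is then routine.
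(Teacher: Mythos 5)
Your proof is correct and follows essentially the same route as the paper's: the same cost choice $C_i=(1-p_i)\prod_{j\ne i}p_j$, the same bad equilibrium $\text{DN}^n$, and the same final inequality $\sum_i\prod_{j\ne i}p_j\le n$ (which the paper applies to the denominator and you apply via $\tilde{H}\ge\tilde{G}^n$). The only difference is that you additionally prove all-RE is exactly the social optimum, whereas the paper just uses the trivial bound $\OPT\le\text{cost}(\text{RE}^n)$ and skips that verification; your extra optimality check is sound but not needed for the lower bound.
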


\noindent By Proposition \ref{prop:poa-no-subs}, if probabilities of component working $p_1=p_2={\epsilon}\ll 1$, then Price of Anarchy can be as large as $\frac{1}{\epsilon}$ (or as large as $\epsilon^{-(n-1)}$ for general $n$). We remark that our lower bounds also apply to the ratio PoA/PoS, i.e.\ when we compare the worst-case Nash equilibrium with the {\it global} Nash equilibrium instead of $\OPT$. We will now show that a constant price of anarchy can be achieved in this game using subsidy, more precisely, that $\widetilde{\text{PoA}}(\bbS)=1$ for some subsidy scheme $\bbS$. In fact, we characterize the total subsidy needed to guarantee this for any game parameters. The proof involves carefully considering cases for the parameters resulting in different sets of Nash equilibria and computing necessary and sufficient amounts of subsidy in each case (see Appendix \ref{app:2-series}).

\begin{theorem}\label{thm:poa-series}
    Consider the two-agent series component maintenance game with $C_1,C_2>0$ and $0<p_1,p_2<1$. Let $s^*=\I\{(C_1,C_2)\in [\overline{p_1}p_2,\overline{p_1}]\times [\overline{p_2}p_1,\overline{p_2}]\}\cdot\min\{C_1-\overline{p_1}p_2, C_2-\overline{p_2}p_1\}$, where $\I\{\cdot\}$ denotes the 0-1 valued indicator function. Then there exists a subsidy scheme $\bbS$  with total subsidy $s$ for any $s>s^*$ such that $\widetilde{\text{PoA}}(\mathbb{S})=1$. Moreover, a total subsidy of at least $s^*$ is necessary for any subsidy scheme $\bbS$ that guarantees $\widetilde{\text{PoA}}(\mathbb{S})=1$. 
\end{theorem}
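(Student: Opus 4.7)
The plan is to classify the Nash equilibria (NE) of the two-agent series game by a $3\times 3$ case analysis on $(C_1,C_2)$ relative to the best-response thresholds, and then compute the minimum subsidy required to break the ``bad'' equilibrium in the only non-trivial case. From the cost matrix in Table~\ref{tab:2-examples}, agent 1's best response to $s_2=0$ is RE iff $C_1\leq\overline{p_1}p_2$, and to $s_2=1$ it is RE iff $C_1\leq\overline{p_1}$ (symmetrically for agent 2). This case analysis shows that DN-DN and RE-RE are simultaneously NE precisely when $(C_1,C_2)\in[\overline{p_1}p_2,\overline{p_1}]\times[\overline{p_2}p_1,\overline{p_2}]$; in every other region the unsubsidized NE is unique, so $\widetilde{\text{PoA}}(\emptyset)=1$ already with the zero-subsidy scheme, consistent with the fact that the indicator defining $s^*$ is zero outside this region.

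In the non-trivial region, I would first observe that RE-RE is strictly cheaper than DN-DN: the gap $2(1-p_1p_2)-(C_1+C_2)\geq 2(1-p_1p_2)-(\overline{p_1}+\overline{p_2})=p_1\overline{p_2}+p_2\overline{p_1}>0$ since $p_1,p_2\in(0,1)$. Thus $\min_{s\in\cS_{NE}}\text{cost}(s)=C_1+C_2$, and because the rule $\text{subs}_i(0,\cdot)=0$ forbids subsidizing the DN action, the cost-plus-subsidy contribution of DN-DN to the numerator of $\widetilde{\text{PoA}}(\mathbb{S})$ stays equal to $2(1-p_1p_2)>C_1+C_2$ whenever DN-DN is a subsidized NE. Hence achieving $\widetilde{\text{PoA}}(\mathbb{S})=1$ forces DN-DN out of $\cS_{\text{NE}}(\mathbb{S})$.

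For the necessity direction, DN-DN remains a NE under $\mathbb{S}$ unless some agent $i$ strictly prefers to switch to RE against DN, which a direct calculation reduces to $\text{subs}_i(1,0)>C_i-\overline{p_i}p_{3-i}$ for some $i\in\{1,2\}$. Since at least one such quantity must exceed the minimum, the total-subsidy budget $\max_s\sum_j\text{subs}_j(s)$ must strictly exceed $\min_i\{C_i-\overline{p_i}p_{3-i}\}=s^*$, and can approach $s^*$ only in the limit. For sufficiency, assume WLOG $C_1-\overline{p_1}p_2\leq C_2-\overline{p_2}p_1$, and for any $s>s^*$ take the scheme with $\text{subs}_1(1,0)=s$ and all other subsidies zero. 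This strictly eliminates DN-DN; preserves RE-RE (the subsidy is activated only at $s_2=0$, so agent 1's incentives at RE-RE are unchanged and RE-RE was already NE); and a direct best-response check rules out DN-RE and RE-DN when all defining strict inequalities of the interior region hold. Since no subsidy is paid at RE-RE, the numerator of $\widetilde{\text{PoA}}(\mathbb{S})$ equals $C_1+C_2$, matching the denominator.

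The hard part will be the boundary of the cost region (e.g.\ $C_2=\overline{p_2}$), where agent 2's indifference at RE-RE can revive RE-DN as a subsidized NE of higher social cost $C_1+2\overline{p_2}$. I expect this to be resolvable by adding an arbitrarily small tie-breaking subsidy $\varepsilon>0$ to $\text{subs}_2(1,1)$: since this extra amount is paid exactly at RE-RE, the quantity $\text{cost}^{\mathbb{S}}(s)+\text{subs}(s)$ at RE-RE is invariant under the shift, so the numerator of $\widetilde{\text{PoA}}(\mathbb{S})$ stays equal to $C_1+C_2$ while the overall budget remains arbitrarily close to $s^*$.
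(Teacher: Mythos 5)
Your proposal is correct and follows essentially the same route as the paper's proof: a case analysis on $(C_1,C_2)$ identifying $[\overline{p_1}p_2,\overline{p_1}]\times[\overline{p_2}p_1,\overline{p_2}]$ as the only region where DN-DN and RE-RE coexist, sufficiency by a subsidy to $\argmin_i\{C_i-\overline{p_i}p_{3-i}\}$ exceeding that gap, and necessity by observing DN-DN (whose cost is unaffected by repair-only subsidies) survives otherwise. One small imprecision: outside the rectangle the unsubsidized NE is \emph{not} always unique (e.g.\ $C_1=\overline{p_1}p_2$, $C_2>\overline{p_2}$ gives both DN-DN and RE-DN with different costs, the paper's Cases 1 and 3), so $\widetilde{\text{PoA}}(\emptyset)=1$ fails on those boundary lines; but since $s^*=0$ there and your arbitrarily-small tie-breaking subsidy handles these exactly as the paper does, the argument goes through.
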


% \red{Next two could go into Appendix; this section could be a motivating example for subsidy design}

\begin{figure}[t]
  \centering
  \includegraphics[width=0.35\linewidth]{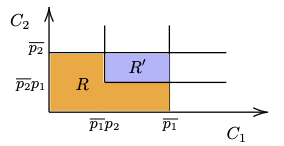}
  \caption{Cost region $R$ with system functioning in any NE ($\phi(\x')=1$, shaded orange) in a two-agent series game (Theorem \ref{thm:opt-series}).  $\text{PoA}\ne\text{PoS}$ in region $R'$ (Theorem \ref{thm:poa-series}). \label{figure: proof}}
  % \Description{A plot with the costs for two agents in a 2-series system. The region where  system functions in any Nash Equilibrium is highlighted.}
\end{figure}

\subsubsection{Guaranteeing that the system functions in any NE} As seen in Example \ref{ex:2-series}, the agents may be in an equilibrium (e.g. DN-DN) such that the system is not guaranteed to function. This problem can also be remedied using subsidy.
We will quantify the optimal subsidy needed to guarantee that the system functions, i.e.\ $\phi(\x')=1$, where $\x'$ denotes the component states after the agents' actions. % (in fact, we can always steer to $\OPT$ and achieve $\text{PoA}=1$). %\red{TODO: add diagram for proof below}

\begin{theorem}\label{thm:opt-series}
    Consider the two-agent series component maintenance game with $C_1,C_2>0$ and $0<p_1,p_2<1$. Define $R\subset\R^+\times \R^+$ as the set of cost pairs that satisfy $C_1\le \overline{p_1}\land C_2\le \overline{p_2} \land (C_1\le \overline{p_1}p_2\lor C_2\le \overline{p_2}p_1)$, depicted in Figure \ref{figure: proof}. Let $s^*=\min_{(x,y)\in R}||(C_1,C_2)-(x,y)||_1$, where $||\cdot||_1$ denotes the $L_1$-norm. Then there exists a subsidy scheme $\bbS$  with total subsidy $s$ for any $s>s^*$ such that the system functions in any NE. Moreover, a total subsidy of at least $s^*$ is necessary for any subsidy scheme $\bbS$  that guarantees that the system functions in any NE.
\end{theorem}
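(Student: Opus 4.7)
The plan is to reduce the statement to a geometric question about where the effective repair costs $(C_1', C_2') := (C_1 - s_1, C_2 - s_2)$ lie in the plane, under per-agent repair subsidies $(s_1, s_2)$. First, since $0 < p_1, p_2 < 1$, any action profile containing a DN leaves at least one component in its stochastic prior state, so $\phi(\x')=1$ surely holds only when both agents play RE; hence ``system functions in any NE'' is equivalent to (RE, RE) being the unique pure NE of the subsidized game. Following the convention used in the proof of Theorem \ref{thm:poa-series}, I would restrict to uniform-per-agent repair subsidies $(s_1, s_2) \in \R_{\geq 0}^2$ with $\text{subs}_i(1, s_{-i}) = s_i$, whose total $s_1 + s_2$ coincides with $\text{subs}(\text{RE},\text{RE})$ paid in the realized equilibrium.

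From the cost matrix in Table \ref{tab:2-series}, a unilateral-deviation calculation yields the pure-NE conditions in terms of $(C_1', C_2')$: (RE,RE) is NE iff $C_1' \le \overline{p_1}$ and $C_2' \le \overline{p_2}$; (DN,DN) iff $C_1' \ge \overline{p_1}p_2$ and $C_2' \ge \overline{p_2}p_1$; (RE,DN) iff $C_1' \le \overline{p_1}p_2$ and $C_2' \ge \overline{p_2}$; and (DN,RE) iff $C_1' \ge \overline{p_1}$ and $C_2' \le \overline{p_2}p_1$. Writing $R = R_A \cup R_B$ with $R_A = \{x \le \overline{p_1}p_2,\, y \le \overline{p_2}\}$ and $R_B = \{x \le \overline{p_1},\, y \le \overline{p_2}p_1\}$, I would then show that (RE, RE) is the strict unique pure NE precisely when $(C_1', C_2')$ lies in the interior of $R$. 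Indeed, being strictly inside $R_A$ simultaneously makes (RE,RE) an NE, kills (DN,DN) via $C_1' < \overline{p_1}p_2$, kills (RE,DN) via $C_2' < \overline{p_2}$, and kills (DN,RE) via $C_1' < \overline{p_1}$; the argument for $R_B$ is symmetric.

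Because $R_A$ and $R_B$ are closed ``southwest'' rectangles, the $L_1$ distance from $(C_1, C_2)$ to $R$ equals $s^* = \min(d_A, d_B)$ with $d_A = (C_1 - \overline{p_1}p_2)_+ + (C_2 - \overline{p_2})_+$ and $d_B = (C_1 - \overline{p_1})_+ + (C_2 - \overline{p_2}p_1)_+$, matching the definition in the theorem. For sufficiency, given $s > s^*$, assume WLOG $d_A \le d_B$ and set $s_1 = (C_1 - \overline{p_1}p_2)_+ + (s - s^*)/2$ and $s_2 = (C_2 - \overline{p_2})_+ + (s - s^*)/2$; then $s_1 + s_2 = s$ and $(C_1', C_2')$ lands strictly inside $R_A$, so (RE,RE) is the unique NE. For necessity, if $s_1 + s_2 < s^*$, then $\|(C_1, C_2) - (C_1', C_2')\|_1 < s^* = \dist_1((C_1, C_2), R)$, forcing $(C_1', C_2') \notin R$; a short case split then exhibits a bad NE: if $C_1' > \overline{p_1}$ (resp.\ $C_2' > \overline{p_2}$), then (RE,RE) itself is not NE and the enumeration shows one of (DN,DN), (DN,RE), or (RE,DN) is; otherwise $C_1' \le \overline{p_1}, C_2' \le \overline{p_2}$ but $C_1' > \overline{p_1}p_2$ and $C_2' > \overline{p_2}p_1$, so (DN,DN) is NE and the system fails with probability $1 - p_1 p_2 > 0$. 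The main technical obstacle is the boundary $\partial R$: the ``$>s^*$'' sufficiency versus ``$\ge s^*$'' necessity gap corresponds exactly to the closure where competing states remain weakly NE, which I would handle by the same tie-breaking perturbation used in Theorem \ref{thm:poa-series}.
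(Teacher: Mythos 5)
Your proof is correct and follows essentially the same route as the paper's: both reduce the problem to checking whether the effective (subsidized) repair costs land in the interior of the region $R$, so that RE--RE is the unique NE, and identify the minimal total subsidy with the $L_1$ distance from $(C_1,C_2)$ to $R$. Your decomposition $R = R_A \cup R_B$ into two southwest rectangles, with the explicit closed form $s^* = \min(d_A, d_B)$ and a single uniform construction achieving any $s > s^*$, is a cleaner packaging of the paper's five-case analysis and is actually more careful on the boundary than the paper's Case~4, whose prescribed subsidy places the effective costs exactly on the corner of $R$ where DN--DN remains a weak equilibrium.
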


\noindent The proof is deferred to Appendix \ref{app:2-series}. In constrast to Theorem \ref{thm:poa-series}, here the central agent needs to provide subsidy in the states where the price of anarchy may be 1 without subsidy but the system is not guaranteed to function as agents may choose to do nothing.

\subsubsection{Ensuring value of information is non-negative for each agent.} \cite{lin2021multi} exhibit several examples, including 2-agent series games, where the value of information for certain agents can be negative, when actions in the prior and posterior games are selected according to some Nash equilibria. We will demostrate the use of subsidy in tackling this undesirable information avoidance behavior. 

% \purple{\dots\dots}
%We consider an extension of the two agents connected in series, with additional components connected in series which are assumed to be uncontrolled by agents, and cannot be inspected or repaired. Let $p_R$ denote the probability that all remaining components connected in series will all work. 
In addition to the prior game, we will now consider posterior games where component $c_1$ is inspected, and its state $y_1$ is revealed on inspection ($y_1=1$ corresponds to $c_1$ is functioning, and $y_1=0$ corresponds to $c_1$ is broken). Table \ref{tab:2-series} summarizes prior and posterior costs for the two agents for each action pair (recall that DN denotes ``do nothing'' or $s_i=0$, and RE denotes ``repair'' or $s_i=1$). \cite{lin2021multi} show that the expected value of information (VoI) is non-negative for all agents if %only one component is inspected, and if 
a global NE is selected.

\begin{table}[t]
\centering
\caption{Matrix for cost-pairs (agent $1$, agent $2$) when component $c_1$ is inspected for the two-agent series system. Here $\overline{p_1p_2}=1-p_1p_2$ and $\overline{p_i}=1-p_i$.}
\begin{tabular}{@{}lcccc@{}}
\toprule
Conditions & $\quad$ DN-DN   $\quad$                  & $\quad$ DN-RE $\quad$                  & $\quad$ RE-DN   $\quad$                & $\quad$ RE-RE  $\quad$               \\ \midrule
Prior      & $\overline{p_1p_2},\overline{p_1p_2}$ & $\overline{p_1},\overline{p_1}+C_2$ & $\overline{p_2}+C_1,\overline{p_2}$ & $C_1,C_2$ \\
$y_1=1$    & $\overline{p_2},\overline{p_2}$       & $0,C_2$       & $\overline{p_2}+C_1,\overline{p_2}$ & $C_1,C_2$ \\
$y_1=0$    & $1,1$                     & $1,1+C_2$      & $\overline{p_2}+C_1,\overline{p_2}$               & $C_1,C_2$ \\ \bottomrule
\end{tabular}
\vspace*{2mm}
\label{tab:2-series}
\end{table}

% \begin{table}[t]
% \centering
% \begin{tabular}{@{}lcccc@{}}
% \toprule
% Conditions & DN-DN                     & DN-RE                   & RE-DN                   & RE-RE                 \\ \midrule
% Prior      & $\overline{p_Rp_1p_2},\overline{p_Rp_1p_2}$ & $\overline{p_Rp_1},\overline{p_Rp_1}+C_2$ & $\overline{p_Rp_2}+C_1,\overline{p_Rp_2}$ & $\overline{p_R}+C_1,\overline{p_R}+C_2$ \\
% $y_1=1$    & $\overline{p_Rp_2},\overline{p_Rp_2}$       & $\overline{p_R},\overline{p_R}+C_2$       & $\overline{p_Rp_2}+C_1,\overline{p_Rp_2}$ & $\overline{p_R}+C_1,\overline{p_R}+C_2$ \\
% $y_1=0$    & $1,1$                     & $1,1+C_2$      & $\overline{p_Rp_2}+C_1,\overline{p_Rp_2}$               & $\overline{p_R}+C_1,\overline{p_R}+C_2$ \\ \bottomrule
% \end{tabular}
% \vspace*{2mm}
% \caption{Matrix for cost-pairs (agent $1$, agent $2$) when component $c_1$ is inspected for the two-agent series system.}
% \label{tab:2-series}
% \end{table}

\begin{theorem}[\cite{lin2021multi}]\label{thm:global-voi}
In the two-agent series game described above, if component $c_j,j\in\{1,2\}$ is inspected, and the prior and posterior actions $s,\tilde{s}$ are selected from global  equilibria, then the expected value of information VoI$_{i,j}(s,\tilde{s})$ is non-negative for each agent $i\in\{1,2\}$. 
\end{theorem}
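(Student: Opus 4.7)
The plan is a case analysis on which joint action is the global Nash equilibrium (NE) in each of the three games: the prior and the two posteriors corresponding to $y_j \in \{0, 1\}$. Without loss of generality the inspected component is $c_1$; the case $j=2$ follows by symmetry of the series system.

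First I would exploit a structural simplification peculiar to the 2-agent series game: since $\text{cost}_i = C_i s_i + (1 - u')$, the expected cost under any belief $\mu$ decomposes as $l_i(s, \mu) = C_i s_i + P_{\text{fail}}(s, \mu)$, where $P_{\text{fail}}$ is the system failure probability, identical across the two agents. Hence
\begin{equation*}
\overline{\text{VoI}}_{i,1}(s, \tilde{s}) \;=\; C_i\bigl(s_i - \bbE_{y_1}[\tilde{s}^{y_1}_i]\bigr) \;+\; \bigl(P_{\text{fail}}(s, \theta) - \bbE_{y_1}[P_{\text{fail}}(\tilde{s}^{y_1}, \theta^{1, y_1})]\bigr),
\end{equation*}
which cleanly separates a deterministic action-change term from a belief-update term.

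Next I would characterize the global NE in each game from the cost matrix in Table~\ref{tab:2-series}. In the posterior $y_1 = 1$, agent 1 strictly prefers DN (component known to work and $C_1 > 0$), so the global NE is DN-RE if $C_2 \le \overline{p_2}$ and DN-DN otherwise. In the posterior $y_1 = 0$, a functioning system requires $s_1 = 1$, and the global NE is RE-RE, RE-DN, or DN-DN depending on how $C_1$ compares to $p_2$ and $1$, and $C_2$ to $\overline{p_2}$. In the prior, each of DN-DN, DN-RE, RE-DN, RE-RE is the global NE on an explicit region of $(C_1, C_2, p_1, p_2)$-space, cut out by the standard NE inequalities (e.g.\ RE-RE is NE iff $C_i \le \overline{p_i}$ for both $i$; DN-DN iff $C_i \ge \overline{p_i} p_{3-i}$) together with pairwise social-cost comparisons. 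I would then partition parameter space by the prior global NE and, within each region, substitute the identified posterior global NEs into the displayed formula to verify $\overline{\text{VoI}}_{i,1} \ge 0$ for $i \in \{1,2\}$. Each verification reduces to an algebraic inequality implied by the region-defining conditions and by $p_k \in (0,1)$.

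The main obstacle is the bookkeeping: in principle up to $4 \times 3 \times 3$ prior/posterior triples to consider, although most are ruled out by the NE conditions (for example, the prior global NE being RE-RE already forces $C_2 \le \overline{p_2}$, which pins down the posterior-$y_1{=}1$ global NE to DN-RE). Boundary parameter choices with ties in social cost force a few extra verifications where we check the bound for each tied choice of global NE, but each such check is routine given the structural identity above.
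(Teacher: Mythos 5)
The paper does not actually prove this statement: Theorem~\ref{thm:global-voi} is imported from \cite{lin2021multi} and stated without proof, so there is no in-paper argument to compare yours against. Judged on its own terms, your plan is sound. The decomposition $l_i(s,\mu)=C_i s_i + P_{\text{fail}}(s,\mu)$ with a failure term common to both agents is exactly right for this cost model, and your NE characterizations match Table~\ref{tab:2-series} (posterior $y_1=1$: agent 1 always plays DN, so the global NE is DN-RE iff $C_2\le\overline{p_2}$; posterior $y_1=0$: DN-RE is never an NE and the rest is governed by $C_1$ vs.\ $p_2$ and $1$ and $C_2$ vs.\ $\overline{p_2}$; prior: RE-RE iff $C_i\le\overline{p_i}$, DN-DN iff $C_i\ge\overline{p_i}p_{3-i}$, etc.). Spot-checking representative regions confirms the arithmetic closes: e.g.\ if the prior global NE is DN-DN and both posteriors resolve to DN-DN, the identity $\overline{p_1p_2}-\overline{p_1}\cdot 1=p_1\overline{p_2}$ gives expected VoI $p_1(\overline{p_2}-c_i^{(1)})\ge 0$; if the prior is RE-RE one gets $p_1C_1$ and $0$ for the two agents. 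Two things you should make explicit when executing the plan: (i) the expectation over $y_1$ is taken with the prior weights $p_1,\overline{p_1}$, which is what makes the belief-update term telescope; and (ii) ``global'' here means cost-minimizing \emph{among Nash equilibria}, so your pairwise comparisons must be restricted to joint actions satisfying the NE inequalities (a pure NE always exists here since $\overline{p_i}p_{3-i}\le\overline{p_i}$ makes the game a coordination game). With those noted, the remaining work is the bookkeeping you describe, and I see no case in which the verification would fail.
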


\noindent Therefore, if we avoid suboptimal local equilibria in this setting then expected VoI is guaranteed to be non-negative. Combined with Theorem \ref{thm:poa-series} above, this implies that negative Value of Information may be avoided via subsidizing repair costs in the two-agent series game. In more detail, we can employ a conditional subsidy scheme, with prior subsidy $\text{subs}_i$ according to the scheme $\bbS$ in Theorem \ref{thm:poa-series}, and posterior subsidy $\text{subs}_i^y$ when agent 1 is inspected employs the subsidy scheme from Theorem \ref{thm:poa-series} by setting the parameter $p_1=y$. Theorem \ref{thm:voi-series} in Appendix \ref{app:2-series} characterizes the optimal unconditional subsidy for ensuring the Value of Information is non-negative for each agent.

\section{Computational Hardness of Subsidy Design}\label{sec:hardness}
% Via reduction from a generalization of vertex cover that is NP-complete.

We will now consider the problem of computing the best subsidy scheme  in  general component maintenance  and cost-sharing games. %, i.e.\ when a general boolean function $\phi$ governs system failure. 
We will prove  computational hardness results in this section and assume familiarity with fundamental  concepts in complexity theory \cite{arora2009computational}. We will do this by reducing the \textsc{Vertex-Cover} problem (one of Karp's original NP-complete problems \cite{karp2010reducibility}) to the decision-problem version of computing the best subsidy for the component maintenance game and from the \textsc{Min-Set-Cover} problem for the cost-sharing game. %, for the different objectives that the central agent may care about. 
Recall that  \textsc{Vertex-Cover} is the following decision problem, specified by a graph $\cG=(V,E)$ and an integer $k$.

\textsc{Vertex-Cover}: Does a given (undirected, unweighted) graph $\cG=(V,E)$ admit a vertex cover\footnote{A set of vertices such that all edges of the the graph have at least one endpoint in the set.} of size $k$?

\noindent We also consider the optimization version of the minimum subset cover problem $(\cU,\cS,k)$ stated below, which will be useful in our hardness results for the cost-sharing game.

\textsc{Min-Set-Cover}: Given a finite set $\cU$ of size $n$ and a collection $\cS\subseteq2^\cU$ of subsets of $\cU$, does there exist a subset $S$ of $\cS$ of size $k<n$ that covers $\cU$, i.e.\ $\cup_{S_i\in S}S_i=\cU$?

In the next three subsections we will show computational hardness results for designing the optimal subsidy scheme for several different objectives of interest to the central agent,  including the Price of Anarchy under subsidy, guaranteeing the Value of Information is non-negative for all agents, and (in the component maintenance game) ensuring the system functions in any Nash equilibrium.

\subsection{Price of Anarchy under Subsidy} 

{\bf Component maintanence game}. Finding the best  subsidy in a given component maintanence game (CMG) to minimize the price of anarchy is an optimization problem. Here we will study the hardness of a corresponding decision problem stated below.

\textsc{CMG-PoAS}: Given a CMG $G$ and a subsidy budget  $n^*$, % and target approximation factor $\delta>0$, 
does there exist  subsidy scheme $\bbS$ with non-zero subsidy provided to  $n^*$ agents such that $\text{PoA}(\bbS)=1$?%<1+\delta$?

Note that here we define the best subsidy scheme to be the one that provides subsidy to fewest agents (as opposed to providing the smallest total subsidy paid out by the central agent, which we also study subsequently). This meaningfully models situations when subsidy consists of allocation of indivisible resources, for example the central agent providing commitment to intervene and assist but with a bandwidth constraint on the number of agents that could receive the assistance. The question whether the same hardness result can be established when the subsidy budget is continuous is left open in this case (we do consider real-valued subsidy budgets for other objectives, see Sections \ref{sec:hard-voi}, \ref{sec:hard-sys}). As mentioned above, we will do a Karp reduction from the \textsc{Vertex-Cover} problem. 
%Notice that we do not require the subsidy scheme to have $\text{PoA}(\bbS)=1$, but the above decision problem could be used to detect whether $\text{PoA}(\bbS)$ can be arbitrarily close to 1 for some subsidy budget $s^*$. 
The key idea is to construct a component maintenance game $G$ given any graph $\cG$, with agents corresponding to graph nodes, and the system state function $\phi$ corresponding to the graph edges. We show that the Nash equilibria of $G$ roughly correspond to minimal vertex covers of $\cG$. To ensure $\text{PoA}(\bbS)=1$, we provide subsidy to the agents in a smallest vertex cover of $\cG$.

\begin{theorem}\label{thm:cmg-poa-hard}
    \textsc{CMG-PoAS} is NP-Hard.
\end{theorem}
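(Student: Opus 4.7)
The plan is to reduce \textsc{Vertex-Cover} to \textsc{CMG-PoAS}. Given an instance $(\cG = (V, E), k)$ of \textsc{Vertex-Cover} (WLOG $\cG$ has at least one edge), let $n := |V|$. I construct a component maintenance game $G$ with one agent per vertex, prior $p_v = 0$ (so $x_v = 0$ almost surely), uniform repair cost $C := 1 + \tfrac{1}{2n}$, and monotone system function $\phi(\x) = \bigwedge_{(u,v) \in E}(x_u \lor x_v)$. Thus, after actions $s$, we have $\phi(\x'(s)) = 1$ iff $S_{RE}(s) := \{i : s_i = 1\}$ is a vertex cover of $\cG$. Set the budget $n^* := k$.

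Three structural facts drive the reduction. (i) The social cost of a joint action $s$ equals $C \lvert S_{RE}(s) \rvert + n(1 - \phi(\x'(s)))$; since $k^* \le n - 1$ and $C = 1 + \tfrac{1}{2n}$, a direct check yields $k^* C < n$, so $\OPT = k^* C$ where $k^*$ is the minimum vertex cover size of $\cG$, attained by any min VC playing RE. (ii) For an unsubsidized agent $i$, the best-response condition reduces to $\phi(1, s_{-i}) - \phi(0, s_{-i}) \le C$; since the LHS lies in $\{0, 1\}$ and $C > 1$, DN is strictly dominant. Consequently, in every NE $s$ of every subsidized game, $S_{RE}(s) \subseteq A$, where $A$ is the set of subsidized agents. (iii) A subsidized agent with $\alpha_i > C$ has RE strictly dominant, and one with $\alpha_i < C - 1 = \tfrac{1}{2n}$ still has DN dominant.

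Forward direction: given a min VC $A^*$ of $\cG$ with $\lvert A^* \rvert = k^* \le k$, subsidize every $v \in A^*$ with $\alpha = 2C$ and pad with $k - k^*$ arbitrary extras at $\alpha = \tfrac{1}{4n}$. By (iii), the unique NE has $A^*$ play RE and all others play DN, so $\text{cost} + \text{subs} = k^* C = \OPT$ and $\text{PoA}(\bbS) = 1$. Backward direction: if some $\bbS$ subsidizing $n^*$ agents attains $\text{PoA}(\bbS) = 1$, pick any NE $s$. By (ii), $S_{RE}(s) \subseteq A$, so $\lvert S_{RE}(s) \rvert \le n^*$. Since $\text{PoA}(\bbS) = 1$ forces $C\lvert S_{RE}(s) \rvert + n(1 - \phi(\x'(s))) = \OPT = k^* C$, and $n > k^* C$ rules out $\phi(\x'(s)) = 0$, we obtain $\phi(\x'(s)) = 1$, so $S_{RE}(s)$ is a VC of $\cG$ of size at most $n^*$. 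Hence $\cG$ admits a VC of size at most $k$.

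The main obstacle I anticipate is handling the ``indecisive window'' $\alpha_i \in (C-1, C)$, where a subsidized agent's best response depends on whether their switch toggles the system state, potentially producing volunteer's-dilemma equilibria with $S_{RE} \subsetneq A$. Fortunately the backward argument sidesteps this difficulty by invoking only the containment $S_{RE}(s) \subseteq A$ (true regardless of subsidy magnitudes) together with the tight budget $\OPT = k^* C$. For the forward direction, the separation $C - 1 = \tfrac{1}{2n}$ guarantees that the padding subsidies $\alpha = \tfrac{1}{4n}$ fall strictly below $C-1$, keeping their recipients in DN in every NE and ensuring uniqueness and exact achievement of $\OPT$, which is what makes $\text{PoA}(\bbS) = 1$ achievable with exactly $n^*$ nonzero subsidies whenever $\cG$ has a VC of size at most $n^*$.
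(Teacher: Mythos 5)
Your reduction is correct, and it targets the same source problem (\textsc{Vertex-Cover}) with essentially the same gadget as the paper: one agent per vertex, the 2-CNF system function $\bigwedge_{(u,v)\in E}(x_u\lor x_v)$, a point-mass prior on the all-broken state, and budget $n^*=k$. The internal argument, however, is genuinely different. The paper sets $C_i=1$, which makes unsubsidized agents exactly indifferent when their repair toggles the system; it must therefore characterize the full set of Nash equilibria (the all-DN profile together with all minimal vertex covers), exhibit the persistent bad equilibrium $s_{K'}$ in the NO case, and implicitly navigate the indifference and volunteer's-dilemma issues you flag. By instead taking $C=1+\tfrac{1}{2n}>1$, you make DN strictly dominant for every unsubsidized agent, so that $S_{RE}(s)\subseteq A$ in every Nash equilibrium of every subsidized game; the backward direction then reduces to the single cost identity $C\lvert S_{RE}(s)\rvert+n(1-\phi(\x'(s)))=k^*C$ plus the bound $k^*C<n$, with no need to enumerate equilibria or worry about partially subsidized agents in the window $(0,C)$. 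Your version buys a shorter and more robust case analysis (and your explicit handling of the padding subsidies $\tfrac{1}{4n}<C-1$ cleanly resolves the exactly-$n^*$-agents reading of the decision problem); the paper's version, with $C_i=1$, buys the slightly stronger structural statement that the unsubsidized game's equilibria are precisely the minimal vertex covers, which it reuses in its other hardness proofs. Both are valid; yours is, if anything, the tighter argument for this particular theorem.
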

\begin{proof}
We will reduce the \textsc{Vertex-Cover} problem to \textsc{CMG-PoAS}.  Given an instance $\cG,k$ of the \textsc{Vertex-Cover} problem, we create a corresponding \textsc{CMG-PoAS} problem as follows. Introduce an agent $i$ for every vertex  $i\in V$ and consider the (2-CNF) formula ${\phi(\x)}=\bigwedge_{(i,j)\in E}(x_i\lor x_j)$, where the clauses consist of component states $x_i,x_j$ for all pairs $i,j$ of agents corresponding to edges in $E$. Set the probability distribution $\theta$ to be the constant distribution with the entire probability mass on $0^n$ (i.e.\ all the components are guaranteed to fail without repair). Set repair cost $C_i=1$ for all components $i$. Then the cost function for agent $i$ for joint action $s=(s_i,s_{-i})$ is given by\looseness-1
\begin{align*}
    l_i(s_i,s_{-i},\theta)=\bbE_{\x\sim \theta}[\text{cost}_i]&=C_is_i+P_{\phi}(\theta)\\&=1\cdot s_i+1-\phi(\x')\\&=s_i+1-\phi(s),
\end{align*} where $x_i'=\max\{0,s_i\}=s_i$ denotes the  component state after agent $i$ takes action $s_i$.

We first proceed to characterize the set of Nash equilibria of this game. Note that  $s=0^n$ is a NE for this game, since any repair action by any agent increases the agent's cost by $1$ if the repair does not change the state $\phi$ of the system, and by $0$ otherwise. This is because $\phi$ is monotonic, $\phi(s)$ can only change from 0 to 1 on repair. For the same reason, no agent has any incentive to switch from DN to RE. We will now show that the remaining NEs for the game correspond to minimal vertex covers of $\cG$. %If ... minimum number of components needed to make the system function --- subsidize those agents --- prove NE...

Suppose $K\subseteq[n]$ is a set of agents for which the corresponding nodes in $\cG$ constitute a minimal vertex cover. Let $s_{K'}=(s_1,\dots,s_n)$ where $s_i=\I[i\in K']$, and $\I[\cdot]$ is the 0-1 valued indicator function, denote the joint action where agents in set $K'\subseteq[n]$ choose repair. Clearly,  $\phi(s_{K})=1$. If $i\in K$, agent $i$ does not reduce cost by switching from RE to DN since $K$ is a minimal cover therefore not repairing component $i$ causes the system to fail. As noted above, switching from DN to RE never improves an agent's cost in this game. %If $j\notin K$, agent $j$ does not reduce cost by switching from DN to RE as the system was already functioning. 
Further, if $K$ is the set of agents corresponding to a non-minimal vertex cover, then there must be some agent that can reduce its cost by switching from RE to DN (and system continues to function). %If $K'$ is the set of agents with just one agent short of a vertex cover, then that agent can reduce cost by switching from DN to RE. 
Finally, if  $K' \ne \emptyset$ is the set of agents with one or more agents short of a vertex cover, then any agent in $K'$  can reduce its cost by switching from RE to DN. This establishes that besides $0^n$, only possible NE must correspond to a minimal vertex cover. In particular, this implies that $\OPT=k^*$, where $k^*$ is the size of the smallest vertex cover $K^*$ of $\cG$, and the corresponding NE is $s_{K^*}$. %Similarly, if $K\ne \emptyset$ corresponds to a non vertex-cover, then there must be some agent that can reduce its cost by switching from DN to RE.

To complete the reduction, we consider the game defined above with subsidy budget $n^*=k$. We will show a bijection between the YES and NO instances of the two decision problems to complete the proof.

If there exists a vertex cover of size $k$, then the smallest vertex cover $K^*$ has size $k^*\le k$. We design a subsidy scheme with  subsidy allocated to $k^*\le n^*$ agents, allocating subsidy of $1+\frac{1}{2n}$ for repair (the total subsidy is no more than $k^*+\frac{1}{2}$) to all agents in the minimum cover $K^*$, a  and subsidy of 0 otherwise. %Clearly the only agent not given subsidy will choose repair since cost of repair $1-\epsilon$ is more than compensated by the change due to system state. 
As argued above, the only candidate NE without subsidy are $0^n$ and $s_K$ corresponding to some minimal vertex cover $K$. The social cost for $0^n$ is $n$ (except the trivial case $k^*=0$) and that for $s_{K^*}$ is $k^*$ which is smaller. If we provide subsidy in our scheme $\bbS$ to the agents in $K^*$ then $0^n$ is no longer an NE. In particular, every subsidized agent in $K^*$ would now always choose repair at subsidized cost $-\frac{1}{2n}$ over doing nothing (even when the system stays broken after the repair). Thus, the social cost plus subsidy is $k^*(-\frac{1}{2n})+k^*(1+\frac{1}{2n})=k^*$, and the price of anarchy for the subsidy scheme is 1 (any agent outside of $K^*$ will prefer to do nothing to reduce their cost).

On the other hand, suppose that $\cG$ has no vertex cover of size $k$. Any minimal vertex cover of $\cG$ therefore has size at least $k+1$. Suppose $\bbS$ is a subsidy scheme with  subsidy allocated to most $k$ agents. We will show that PoA$(\bbS)>1$. Indeed, let $K$ be an arbitrary minimal  vertex cover of $\cG$. By pigeonhole principle, at least one agent in $K$ does not receive subsidy. Let $K'$ denote the (possibly empty) set of agents that receive subsidy greater than 1. As argued above, these agents will always prefer the repair action. Thus, $s_{K'}$ is a Nash equilibrium (agents without subsidy never have incentive to switch from DN to RE in this game) in the subsidized game. Note that $\phi(s_{K'})=0$ since at least one vertex is missed in any minimal vertex cover $K$, and therefore we must have an uncovered edge. %Further let $\tilde{K}$ be a minimal vertex cover of $\cG\setminus K'$ (the subgraph of $\cG$ induced by removal of vertices in $K'$). $s_{\tilde{K}\cup K'}$ is also a Nash equilibrium with $\phi(s_{\tilde{K}\cup K'})=0$. Indeed, if any vertex in $\tilde{K}$ switches from RE to DN, the system stops functioning and therefore the agent has no incentive to switch. 
Now $\OPT\le \text{cost}(s_{{K}})=|K|<n$. Therefore, PoA$(\bbS)> 1$ as total cost plus subsidy is at least $n$ for $s_{K'}$.
% Conversely, suppose there exists a subsidy scheme $\bbS$ with total subsidy at most $s^*=k-1$, such that the system functions in any  NE. Then either the system functions in $0^n$ and there is a $0$-cover for graph $\cG$, or the  NE corresponds to a minimal vertex-cover $K'$ of size $k'$ as the repaired components (in the subsidized game). In the latter case, we seek to show $k'\le k$ to complete the proof. Since the system is not assumed to function for $s=s_{\kappa}$ for repair actions by agents in any $\kappa\subset K'$ with $\kappa=k'-1$, we
% need to provide subsidy at least $1-\epsilon$ to all but one agent in $K'$. That is, $k-1=s^*\ge (1-\epsilon)(k'-1)>k'-1-(k'-1)/n$ (since $\epsilon<1/n$), or  $k\ge k'$ since both $k,k'$ are integers.
%
    %OPT = $k(1-\epsilon)$
\end{proof}

\noindent We make a couple of remarks about the above result. Our proof implies a similar hardness result if the decision problem is stated for $\widetilde{\text{PoA}}(\bbS)$ instead of ${\text{PoA}}(\bbS)$. 
We further note that  our reduction from \textsc{Vertex-Cover} does not involve any negative literals in the boolean system function $\phi$, i.e.\ applies to monotone boolean functions where  if a component is repaired from broken to a working state, then it cannot cause the overall system to go from working to broken. {In fact, for monotone $\phi$ we can show an even stronger hardness result using parameterized complexity theory. Namely \textsc{CMG-PoAS} is W[2]-hard, by reduction from the \textsc{Dominating-Set} problem, i.e.\ deciding whether given a graph $\cG$ and integer $k$, does there exist a subset $X$ of vertices of size $k$ such that each vertex is either in the subset $X$ or has an edge connecting it to a vertex in $X$. The \textsc{Dominating-Set} problem is  known to be  W[2]-complete (\cite{downey2013fundamentals}, Lemma 23.3.1). This implies that it is unlikely that the problem is even fixed-parameter tractable (FPT), i.e.\ it is plausible that there is no computationally efficient algorithm for \textsc{CMG-PoAS} even for a fixed small subsidy budget. Further details and formal proofs are located in Appendix \ref{app:fpt}.}

{\bf Cost-sharing game}. We will now give a similar result for cost-sharing games. While it is known that computing OPT is NP-Hard in cost-sharing games \cite{meir2010minimal} and approximation algorithms are known for subsidy design \cite{buchbinder2008non}, we further show that designing a  subsidy scheme to guarantee that $\text{PoA}=1$  is also NP-hard.

\textsc{CSG-PoAS}: Given a cost-sharing game $G$ and a subsidy budget  $n^*$, % and target approximation factor $\delta>0$, 
does there exist  subsidy scheme $\bbS$ with non-zero subsidy provided to  $n^*$ actions such that $\text{PoA}(\bbS)=1$?

The key idea is to construct a cost-sharing game $G$ given an instance of the minimum set cover problem, with agents corresponding to set elements, and the actions $\cA$ corresponding to subsets available for covering plus some additional actions uniquely available to each agent. When there exists a set cover of size $k$, assigning subsidies to actions in the cover guarantees that the only possible NEs correspond to assigning actions from the set cover and the subsidized cost plus the subsidy equals cost of OPT. Conversely, we use the additional actions to show that the Price of Anarchy is greater than 1 when a subsidy smaller than the size of the minimum set cover is available.

\begin{theorem}\label{thm:csg-poa-hard}
    \textsc{CSG-PoAS} is NP-Hard.
\end{theorem}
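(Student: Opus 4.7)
I will reduce from \textsc{Min-Set-Cover}, in the same spirit as the reduction from \textsc{Vertex-Cover} in Theorem \ref{thm:cmg-poa-hard}. Given an instance $(\cU,\cS,k)$ with $|\cU|=n$, construct a cost-sharing game $G$ with agents $N=\cU$ and actions $\cA = \{a_S : S\in\cS\}\cup \{b_i : i\in\cU\}$, where $f(a_S)=S$, $f(b_i)=\{i\}$, and $c(a_S)=c(b_i)=1$. Set the subsidy budget $n^*=k$; under unit costs $\OPT = k^*$, the minimum set-cover size (singleton $b_i$ actions cannot strictly improve an optimal cover under unit costs). WLOG augment $\cS$ with all singletons $\{i\}$ before the reduction: this preserves NP-hardness of the underlying decision problem and leaves $k^*$ unchanged, so that any subsidy on $b_i$ can be replaced by the equivalent subsidy on $a_{\{i\}}\in\cS$.

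\textbf{Forward direction ($k^*\le k$).} Let $K^*$ be a minimum cover and subsidize each $a_S\in K^*$ fully ($c^{\bbS}(a_S)=1$). Every agent $i$ has at least one free action in $K^*$, strictly cheaper than any $b_i$ (cost $1$) or unsubsidized $a_S$ (cost at least $1/n$). So every NE uses only subsidized cover actions, and by minimality of $K^*$ every such action must be used by at least one agent. Hence social cost in every NE equals $k^* = \OPT$, giving $\text{PoA}(\bbS)=1$.

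\textbf{Backward direction ($k^*>k$).} For any scheme $\bbS$ with $|T|\le k$ subsidized actions, I exhibit a NE whose cost exceeds $\OPT$. In this NE, each covered agent plays an arbitrary subsidized action, while each uncovered agent $i\in R$ plays their own $b_i$. The NE condition holds: an uncovered agent $i$ on $b_i$ (cost $1$) cannot strictly improve by deviating to any $a_S$ with $i\in S$, since every other uncovered agent is on their own $b_j$ rather than on $a_S$, so $i$ would be alone on $a_S$ at cost $1$, matching their current cost; covered agents strictly prefer their free options. Assuming WLOG that $T$ has no redundant actions (else remove them without changing the NE structure), all $|T|$ subsidized actions are used, and the social cost of this NE is $|T| + |R|$. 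Since $|T|\le k < k^*$, $T$ cannot cover $\cU$ and hence $|R|\ge 1$.

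\textbf{Main obstacle.} The subtle step is ensuring the strict inequality $|T|+|R| > k^*$ in NO instances, which can fail in the borderline case $k^*=k+1$ when some $T$ with $|T|=k$ covers $n-1$ agents (giving NE cost $k+1=k^*$, matching $\OPT$). I plan to close this gap by a standard amplification: either reduce from a gap version of \textsc{Min-Set-Cover} (NP-hard by Feige's $\ln n$ inapproximability result), so that NO instances satisfy $k^*\ge k+\Omega(\log n)$; or pad the instance with $\Omega(\log n)$ extra isolated elements each coverable only by their own singleton, forcing $|R|\ge k^*-k+1$ no matter which $k$ actions the designer subsidizes. Either way, $|T|+|R|\ge k+(k^*-k+1) = k^*+1 > k^*$, so the exhibited NE witnesses $\text{PoA}(\bbS)>1$ and completes the reduction.
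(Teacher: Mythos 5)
Your construction and forward direction match the paper's in spirit (the paper likewise uses one unit-cost action per set plus per-agent singleton actions), and you are right to flag the backward direction as the danger point. But the gap there is more serious than the single borderline case $k^*=k+1$, and neither of your proposed patches closes it. The failure is structural, not quantitative: your exhibited worst equilibrium costs $|T|+|R|$, and this equals $k^*$ exactly whenever $T$ together with the singletons of $R$ happens to form a minimum cover --- which can occur no matter how large $k^*-k$ is. Concretely, take $\cU=[n]$, $\cS$ equal to all singletons plus the single set $\{1,2\}$, and $k=1$. Then $k^*=n-1\gg k\ln n$, so this is a NO instance even of the gap version, yet fully subsidizing $a_{\{1,2\}}$ forces agents $1,2$ onto that action and leaves each of $3,\dots,n$ alone on a singleton in every Nash equilibrium, so every equilibrium costs exactly $n-1=\OPT$ and $\text{PoA}(\bbS)=1$: a NO instance maps to a YES instance. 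The padding fix fails for the same reason --- the isolated elements contribute identically to every equilibrium and to $\OPT$ (and need no subsidy), so they cancel out of the ratio and leave the original instance untouched.

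What is missing is a gadget that manufactures a second, strictly more expensive equilibrium whenever some agent is left unsubsidized. The paper's proof does exactly this by adding one globally shared action $\cV=\{1,\dots,n\}$ of cost $n-\epsilon$: the unsubsidized agents congregating on $\cV$ is meant to yield an equilibrium whose cost differs from the all-singletons equilibrium, and two equilibria of different cost already force $\text{PoA}(\bbS)>1$. If you repair your reduction along these lines, verify the equilibrium condition for that state carefully: an agent sharing $\cV$ with only part of the population pays $(n-\epsilon)/|A|>1$ and would prefer its singleton, so the incentive analysis of the expensive state is itself delicate. A secondary point, applying equally to the paper: the decision problem only bounds the \emph{number} of subsidized actions, so the backward direction must also rule out partial subsidy amounts, which your argument handles only for full subsidies.
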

\begin{proof}
Consider the cost-sharing game $G$ with $n$ agents that correspond to elements of $\cU$ via a bijection $\zeta:\cU\rightarrow[n]$, set of actions $\cA=\cS\uplus \cT\uplus \cV$ with $\cT=\{\{1\},\dots,\{n\}\}$ and $\cV=\{\{1,2,\dots,n\}\}$ being two distinct collections of actions available uniquely to each agent and all agents simultaneously respectively, function $f:S\mapsto\{\zeta(s)\mid s\in S\}$ which assigns action $S$ to agents corresponding its elements, and cost function $c$ given by
%the constant function $c(S)=1$ for $S\in\cS\uplus \cT\uplus \cV$.
\begin{align*}
    c(S)=\begin{cases}
        1 &\text{ if }S\in\cS\uplus \cT,\\
        n-\epsilon&\text{ if }S\in\cV.
    \end{cases}
\end{align*}
\noindent for some $\epsilon\in(0,1)$. 
We set $n^*=k$.

Given a YES instance of \textsc{Min-Set-Cover}, we show that the above contruction yields a YES instance of \textsc{CSG-PoAS}. Indeed, let $k^*$ denote the size of the smallest set cover of $(\cU,\cS)$.  In the YES instance this means $k^*\le k<n$, and we provide subsidy of value 1 to all actions corresponding the sets in the smallest set cover.  Now any assignment of the actions to agents consistent with the set cover (i.e.\ each agent is assigned an action corresponding to one of the sets in the cover that include the agent) is a Nash Equilibrium with social cost 0. This is because each agent has (subsidized) cost 0 and therefore no incentive to switch actions. Moreover, any other state is not an NE as at least one agent will have non-zero cost and would switch to an action in the cover. Therefore, social cost plus subsidy is $k^*$ for the only NE in this case. Next we argue that the social cost OPT is $k^*$ in this case, which would imply $\text{PoA}(\bbS)=1$ for the above subsidy scheme. If possible, let there be a state $s=(a_1,\dots,a_n)$ with social cost less than $k^*$. Let  $s_\cT=\{i\mid a_i\in \cT\}$ and $s_\cV=\{i\mid a_i\in \cV\}$. Now the smallest set cover of $\cU\setminus s_{\cT}\setminus s_{\cV}$ has size at least $k^*-|s_{\cT}|-|s_{\cV}|$ since each set in the optimal cover must cover at least one element. We consider cases based on $|s_{\cV}|$. If $s_{\cV}=0$, $\text{cost}(s)\ge (k^*-|s_{\cT}|) +|s_{\cT}|= k^*$, a contradiction. If $s_{\cV}=n$, then $\text{cost}(s)=n-\epsilon>k^*$, again contradicting the assumption that cost of $s$ is less than $k^*$. Else $1\le|s_{\cV}|\le n-1$. But then, $\text{cost}(s)\ge (k^*-|s_{\cT}|-|s_{\cV}|) +|s_{\cT}|+n-\epsilon\ge k^*+1-\epsilon>k^*$. Thus, $\text{cost}(\text{OPT})=k^*$ and we have $\text{PoA}(\bbS)=1$ in this case.%. Since OPT defines a set cover (union of actions assigned to all the agents cover them all), we also have .
%\red{looks like OPT can be less than $k^*$, can we avoid this? maybe only add singletons in $\cT$ for those not already in $\cS$? or set $\epsilon=0$?}

Conversely, consider a NO instance of \textsc{Min-Set-Cover}. The smallest set cover of $(\cU,\cS)$ has size $k^*>k$. Consider any subsidy scheme $\bbS$ assigning subsidy of value 1 to at most $k$ actions. Clearly, all the agents that have at least one of their actions subsidized will select the subsidized action in any NE. Since the smallest set cover has size greater than $k$, there exists at least one agent  with no subsidized action. Let $A\subset[n]$ denote the set of these agents. We will show the existence of two Nash equilibria with different social costs, implying that $\text{PoA}(\bbS)>1$ in this case. Consider states $s_\cT$ and $s_\cV$ for which agents in $A$ are assigned the corresponding actions from $\cT$ and $\cV$ respectively, and agents in $[n]\setminus A$ are assigned one of the subsidized actions in either case. The social cost plus subsidy $\text{cost}(s_\cT)+k=|A|+k$ for $s_\cT$ and  $\text{cost}(s_\cV)+k=n-\epsilon+k$ for $s_\cV$. Thus, $\text{PoA}(\bbS)>1$. %We have $\text{cost}(\text{OPT})\le \text{cost}(s_\cT)+k=|A|+k<n-\epsilon+k=\text{cost}(s_\cV)+k$, and therefore $\text{PoA}(\bbS)>1$.
% \red{2-epsilon can help with NO but affects YES.}
\end{proof}

\subsection{Value of Information}\label{sec:hard-voi} We also show NP-hardness of the problem of determining the minimal subsidy needed to avoid negative Value of Information in the component inspection game. Formally, the decision problem is stated below.

\textsc{CIG-VoI}: Given a component inspection game (CIG) and subsidy budget $s^*$, does some subsidy scheme $\bbS$ with total subsidy at most $s^*$ guarantee that no agent has negative value of information when a single component $j$ is inspected (i.e. for fixed inspected component $j$, $\text{VoI}_{i,j}(s,\tilde{s}^{j,y_j})\ge0$ for any $s\in\cS_{NE}(\mathbb{S})$, $\tilde{s}^{j,y_j}\in\cS_{NE}^{y_j}(\mathbb{S})$, for each agent $i\in [n]$, and each posterior component state $y_j\in\{0,1\}$)?

Our proof (Appendix \ref{app:hardness}) again involves a reduction from the \textsc{Vertex-Cover} problem, but we use a different  subsidy budget and examine the non-negativity of VoI when potentially different equilibria are selected in the prior and posterior games.

\begin{theorem}\label{thm:cig-hard}
    \textsc{CIG-VoI} is NP-Hard.
\end{theorem}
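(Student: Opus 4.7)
The plan is to Karp-reduce from \textsc{Vertex-Cover}, analogous to the proof of Theorem~\ref{thm:cmg-poa-hard} but adapted to the value-of-information objective with a real-valued subsidy budget. Given an instance $(\cG=(V,E),k)$ with $|V|=n$, I would build a CIG with $n+1$ components: vertex components $c_1,\ldots,c_n$ with prior probability $p_i=0$, and a distinguished inspected component $c_0$ with $p_0=q\in(0,1)$. Take the system function $\phi(\x)=x_0\lor\bigwedge_{(i,j)\in E}(x_i\lor x_j)$, so that $c_0$ acts as a stochastic ``backup'' that short-circuits the vertex-cover conjunction whenever it happens to be working. Set $C_0$ prohibitively large so agent $0$ never repairs, $C_i=1-q-\delta$ for $i\ge 1$ with small $\delta>0$, and subsidy budget $s^*=k(1-q-\delta+\epsilon)$ for small $\epsilon>0$.

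The next step is to characterize the equilibria of the three games in the VoI definition. A direct adaptation of the argument in Theorem~\ref{thm:cmg-poa-hard} shows that, for an appropriate choice of $\delta<1-q$, the prior game and the posterior under $y_0=0$ share the same NE set, namely $0^n$ together with $s_K$ for each minimal vertex cover $K$ of $\cG$; the posterior under $y_0=1$ has $0^n$ as its unique NE since $\phi\equiv 1$ and repair is strictly costly. This construction exhibits negative VoI already in the unsubsidized game: selecting $s=0^n$ as prior NE and $\tilde{s}=0^n$ as posterior NE when $y_0=0$ gives $\text{VoI}(s,\tilde{s})=(1-q)-1=-q<0$ for every agent, so subsidy is genuinely required.

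For the forward direction, given a vertex cover $K^*$ of size at most $k$, I would set $\text{subs}_i(s)=(C_i+\epsilon')\,\I[s_i=1]$ for each $i\in K^*$ (small $\epsilon'>0$) and zero otherwise. Every subsidized agent then strictly prefers RE in all three games, and since $K^*$ is a cover no other agent has an incentive to repair, so $s_{K^*}$ is the unique NE of each of the three games; a direct calculation using the cost formulas gives $\text{VoI}=0$ for every agent, and the total subsidy at $s_{K^*}$ is $|K^*|(C_i+\epsilon')\le s^*$. For the reverse direction, I would argue that any $\bbS$ with total budget at most $s^*$ either leaves an NE $\tilde{s}$ of the $y_0=0$ posterior with $\phi'(\tilde{s})=0$---in which case any agent with $\tilde{s}_i=0$ has posterior loss $1$ while the prior loss is at most $1-q$, forcing $\text{VoI}\le-q$---or else $\bbS$ must force a set $T$ of agents to always repair, and $T$ is necessarily a vertex cover. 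The latter requires subsidy strictly exceeding $C_i$ per agent in $T$, yielding total budget greater than $s^*$ whenever $|T|>k$, contradicting the assumption that $\cG$ has no vertex cover of size at most $k$.

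The main obstacle will be making the reverse-direction argument rigorous under arbitrary state-dependent subsidy schemes, since $\text{subs}_i$ is a function of the joint action $s$ and need not uniformly incentivize RE for a fixed set of agents. I plan to handle this by showing that, for the binary action space and the monotone $\phi$ used in the construction, it suffices to restrict attention to schemes of the form $\text{subs}_i(s)=w_i\,\I[s_i=1]$, and to lower-bound each required $w_i$ via the incremental cost savings of DN that the agent forgoes in any cover-completing posterior equilibrium. Combined with the observation that the $y_0=0$ posterior NE must lie in the vertex-cover states of $\cG$ whenever VoI is non-negative, this yields $|T|>k$ and hence a contradiction, completing the reduction.
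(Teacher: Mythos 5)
Your construction and forward direction are sound, and your first reverse-direction branch (if the $y_0=0$ posterior retains an NE with the system broken, then any agent doing nothing there has posterior loss $1$ while its loss at \emph{any} prior NE is at most $1-q$ by the best-response property, forcing $\text{VoI}\le -q$) is a nice, correct argument. The gap is in the second branch. You assert that eliminating all system-failing NEs of the $y_0=0$ posterior forces the always-repair set $T$ to be a full vertex cover, with $w_i>C_i$ per member, hence budget $>s^*$ when $|T|>k$. This is false: $T$ only needs to be \emph{one vertex short} of some vertex cover. In the $y_0=0$ posterior the benefit to the last agent of completing the cover is $1$, which exceeds $C_i=1-q-\delta$, so that agent repairs voluntarily and unsubsidized; the same holds in the prior game since $1-q>C_i$. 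Consequently, in a NO instance with minimum cover size $k'=k+1$, a scheme subsidizing $|T|=k$ agents at $w_i=C_i+\epsilon'$ each fits inside your budget $s^*=k(1-q-\delta+\epsilon)$ and does eliminate every broken-system posterior NE. Your budget arithmetic therefore does not yield the contradiction you claim, and whether such a scheme still leaks negative VoI depends on a much more delicate phenomenon: when $T$ is all-but-one of a minimal cover, there are typically \emph{several} minimal covers containing $T$, each of which is an NE of both the prior and the $y_0=0$ posterior, and selecting different ones as prior vs.\ posterior equilibria gives some agent VoI $=0-C_i<0$. Making that rigorous (including the degenerate case where the completion of $T$ is unique) is the real content of the reverse direction, and it is missing from your proposal; your final paragraph acknowledges difficulty with state-dependent schemes but not this structural issue.

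For comparison, the paper's reduction sidesteps the prior-vs-posterior asymmetry entirely: it adds a dummy inspected component that does not appear in $\phi$ at all, so the prior and both posterior games have \emph{identical} cost matrices and NE sets, and negative VoI arises purely from equilibrium multiplicity (two distinct minimal-cover NEs, one chosen as prior and the other as posterior). It sets $C_i=1-\epsilon$ with $\epsilon<1/n$ and budget exactly $k$, so that at most $k$ agents can receive subsidy $\ge 1-\epsilon$, and then splits the NO case into $|K'|>k+1$ (two NEs, one with the system broken) and the boundary case $|K'|=k+1$, which it resolves by explicitly constructing a second minimal cover $K_2\ne K'$ via the private edges of the unsubsidized vertex. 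Your approach could likely be repaired by either (i) lowering the budget to $(k-1)(1-q-\delta)+\epsilon$ so that at most $k-1$ agents can be forced to repair, and then showing $T$ one-short-of-a-cover is unattainable, or (ii) importing the paper's two-minimal-covers argument for the $|T|=k$ case; as written, however, the reduction does not go through.
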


We also establish a similar hardness result for designing the optimal subsidy allocation in cost-sharing games to ensure that the value of information is non-negative for all agents.

\textsc{CSG-VoI}: Given a cost sharing  game (CSG) and subsidy budget $s^*$, does some subsidy scheme $\bbS$ with total subsidy at most $s^*$ guarantee that no agent has negative value of information when a single action $j$ is inspected (i.e. for fixed inspected action $j$, $\text{VoI}_{i,j}(s,\tilde{s}^{j,c_j})\ge0$ for any $s\in\cS_{NE}(\mathbb{S})$, $\tilde{s}^{j,c_j}\in\cS_{NE}^{c_j}(\mathbb{S})$, for each agent $i\in [n]$, and any posterior action cost $c_j$)?

\noindent We have the following hardness result, again using a reduction from  the min set cover problem.

\begin{theorem}\label{thm:csg-voi-hard}
    \textsc{CSG-VoI} is NP-Hard.
\end{theorem}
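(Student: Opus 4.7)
The plan is to reduce from \textsc{Min-Set-Cover}, paralleling the approach in the proof of Theorem \ref{thm:csg-poa-hard} but leveraging the negative Value of Information phenomenon illustrated in Example 3 in place of the Price of Anarchy structure. Given an instance $(\mathcal{U}, \mathcal{S}, k)$ with $|\mathcal{U}| = n$, I would construct a cost-sharing game whose agents correspond to elements of $\mathcal{U}$ via a bijection $\zeta$. For each $S \in \mathcal{S}$, introduce a shared action $a_S$ with $f(a_S) = \zeta(S)$ and deterministic cost calibrated so that the per-agent share, when all of $\zeta(S)$ select $a_S$, equals some constant. For each agent $i$, include a private fallback $b_i$ with $f(b_i)=\{i\}$. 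The inspected action $j$ would have a random cost whose prior mean keeps the shared actions relatively attractive, but whose low realization undercuts the per-agent share of some $a_S$, triggering a drop-out in the posterior equilibrium exactly as agent $2$ drops out of $A$ in Example 3. Set $s^* = k \cdot u$, where $u$ is the per-action subsidy required to keep a shared action strictly preferred over any posterior private alternative.

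For the forward direction, given a set cover $S^* \subseteq \mathcal{S}$ of size at most $k$, I would subsidize each $a_S$ with $S \in S^*$ by exactly $u$, using total budget at most $s^*$. Then every agent has at least one subsidized shared action whose per-agent share, after subsidy, stays cheaper than any posterior realization of a private option. Consequently the prior Nash equilibrium in which agents commit to $S^*$ remains an equilibrium in the posterior game, the equilibrium costs weakly decrease, and $\text{VoI}_{i,j} \geq 0$ for every agent $i$ and every admissible pair of prior/posterior equilibria.

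For the converse, suppose $(\mathcal{U}, \mathcal{S}, k)$ is a \textsc{NO} instance and fix any subsidy scheme $\bbS$ with total subsidy at most $s^*$. Because the actions subsidized at level at least $u$ form a sub-collection of $\mathcal{S}$ of size strictly less than the minimum cover, some agent $i^*$ has no sufficiently subsidized shared option. I would exhibit a prior equilibrium in which $i^*$ co-shares some $a_S$ with other agents and a posterior equilibrium, triggered by a suitably low revealed cost of the inspected action, in which $i^*$'s co-sharers switch to their now-cheaper private actions; this strictly increases $i^*$'s share, giving $\text{VoI}_{i^*,j} < 0$ and contradicting the non-negativity required by the decision problem.

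The main obstacle will be calibrating the random cost distribution of the inspected action together with the deterministic costs of the other actions so that the worst-case VoI (minimum over prior/posterior equilibrium pairs) tracks the set-cover threshold exactly: the construction must rule out alternative prior equilibria that sidestep the drop-out without subsidy, and must make the per-action subsidy $u$ both necessary and sufficient. A secondary subtlety is that elements appearing in multiple sets create multiple shared-action choices per agent, complicating the equilibrium characterization; I expect this can be handled by small perturbations of the action costs, or by attaching auxiliary agents tied uniquely to each set, so that in the unsubsidized game only equilibria corresponding to minimal covers survive.
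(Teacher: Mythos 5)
Your high-level plan (Karp reduction from \textsc{Min-Set-Cover}, agents as elements, shared actions as sets, private fallbacks) is the right skeleton, but the mechanism you propose for generating negative VoI is where the proof actually lives, and as sketched it has a gap you yourself flag but do not close. The decision problem fixes a \emph{single} inspected action $j$, and your converse direction needs its low cost realization to pull the co-sharers of $i^*$ out of whatever shared action $a_S$ the agent $i^*$ occupies in the prior equilibrium. But $i^*$ depends on the adversarially chosen subsidy scheme, and its co-sharers are an arbitrary subset of agents determined by $\cS$; a single inspected action cannot be calibrated in advance to be the preferred posterior deviation for exactly those agents and not for $i^*$ itself (if $j$ is available to everyone, $i^*$ can follow the deviators and may be no worse off; if it is available only to some agents, you cannot guarantee it hits the right ones). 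Separately, your forward direction argues non-negativity of VoI only for the equilibrium in which agents commit to the cover $S^*$, whereas the definition quantifies over \emph{all} prior/posterior equilibrium pairs, so you would also have to show no other equilibria survive the subsidy. These are not routine calibrations; they are the crux of the construction.

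The paper's proof avoids both issues with a trick worth knowing: it adds one auxiliary agent $n+1$ whose only action $a_{n+1}$ has infinite cost and is available to nobody else, and makes $a_{n+1}$ the inspected action. The inspection is then informationally vacuous — prior and posterior cost matrices and equilibrium sets coincide for agents $1,\dots,n$ — so worst-case VoI is non-negative if and only if every agent receives the same cost in every Nash equilibrium. The set-cover structure is then used exactly as in Theorem~\ref{thm:csg-poa-hard}: in a YES instance, fully subsidizing a minimum cover makes every equilibrium give all agents cost $0$; in a NO instance, the uncovered agents $A$ admit two equilibria (everyone in $A$ on private actions from $\cT$ versus everyone in $A$ on the shared action in $\cV$) with different per-agent costs, and picking the cheaper as prior and the dearer as posterior yields negative VoI. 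If you want to salvage your ``genuinely informative inspection'' route, you would need to pin down concretely which agents the inspected action serves and verify equilibrium uniqueness in the subsidized YES case; as written, the argument does not go through.
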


\subsection{Guaranteeing that the system functions in any NE}\label{sec:hard-sys}
Here we consider a more challenging optimization objective applicable only to the component maintenance game. We seek  the optimal subsidy scheme with the least total value across all agents that receive the subsidy, and the goal of the central agent is to disburse sufficient subsidy to guarantee that the system functions in any Nash equilibrium. The decision problem in this case is stated as follows.

\textsc{CMG-System}: Given a component maintenance game (CMG) and subsidy budget $s^*$, does some subsidy scheme $\bbS$ with total subsidy at most $s^*$ guarantee that the system functions in any  NE (i.e. $\phi(s)=1$ for any $s\in\cS_{NE}(\mathbb{S})$)? %minimizes $\max_{s\in\cS_{NE}(\mathbb{S})}\{\text{cost}(s)+\text{subs}(s)\}$ (i.e.\ the numerator in $\text{PoA}(\bbS)$, over possible subsidy schemes).

\purple{The reduction is similar to the proof of Theorem \ref{thm:cmg-poa-hard}.} We give a Karp mapping from any vertex cover instance $\cG,k$ to a \textsc{CMG-System} instance and use a slightly different game instance for the reduction. We show that the system is guaranteed to function in any NE iff the subsidy budget is one less than the size of a smallest vertex cover. See Appendix \ref{app:hardness} for a proof of the following result.

% At a high level, we create an instance of \textsc{CMG-System} given any graph $\cG$ with agents corresponding to graph nodes, and the system state function $\phi$ corresponding to the graph edges. Roughly speaking, we show that the agents that choose the repair action in a Nash equilibria of the CMG instance correspond to minimal vertex covers of the graph, and further that the system is guaranteed to function in any NE iff the subsidy budget corresponds to the size of a minimal vertex cover.

\begin{theorem}\label{thm:cmg-hard}
    \textsc{CMG-System} is NP-Hard.
\end{theorem}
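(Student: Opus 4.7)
The plan is to give a Karp reduction from \textsc{Vertex-Cover}, adapting the construction used in the proof of Theorem~\ref{thm:cmg-poa-hard}. Given a \textsc{Vertex-Cover} instance $(\cG=(V,E), k)$ with $n=|V|$, I use essentially the same game: $n$ agents corresponding to vertices, boolean system function $\phi(\x) = \bigwedge_{(i,j)\in E}(x_i \lor x_j)$, prior $\theta$ concentrated on $0^n$, and unit repair costs $C_i = 1$ for all $i$. The key new ingredient is the budget, which I set to $s^* = k - 1 + \tfrac{1}{2}$; this reflects the ``one less than the minimum vertex cover'' threshold mentioned in the theorem's lead-in.

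For the forward direction (a vertex cover of size at most $k$ yields a subsidy scheme with the desired property), I start from a minimum vertex cover $K^*$ of size $k^* \le k$, pick an arbitrary $v \in K^*$, and set $\text{subs}_v(1,s_{-v}) = \epsilon$ and $\text{subs}_i(1,s_{-i}) = 1 + \delta$ for each $i \in K^* \setminus \{v\}$ (with zero subsidies for the DN action), where $\epsilon, \delta$ are chosen small enough that the total stays strictly below $s^*$. I then verify that the only NE is $s_{K^*}$ with $\phi = 1$: the $k^*-1$ agents in $K^* \setminus \{v\}$ always play RE since their subsidy strictly exceeds their repair cost; in the state $s_{K^* \setminus \{v\}}$, agent $v$ is pivotal for $\phi$ because $K^* \setminus \{v\}$ is not a vertex cover but $K^*$ is, so any positive subsidy $\epsilon > 0$ makes RE strictly better for $v$; and every agent outside $K^*$ has zero subsidy and strictly prefers DN once $\phi = 1$.

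For the backward direction (no vertex cover of size at most $k$ implies no adequate subsidy scheme), I consider an arbitrary scheme $\bbS$ with total subsidy at most $s^*$ and let $A_{>1} = \{i : \text{subs}_i(1,s_{-i}) > 1\}$. The budget bound forces $|A_{>1}| < s^* = k - 1 + \tfrac{1}{2}$, so $|A_{>1}| \le k - 1$, and therefore $|A_{>1} \cup \{i\}| \le k < k^*$ is not a vertex cover for any $i$. Consequently $\phi$ evaluates to $0$ at $s_{A_{>1}}$ and at $s_{A_{>1} \cup \{i\}}$ for every $i \notin A_{>1}$. I then check that $s_{A_{>1}}$ is a NE in which the system fails: each $i \in A_{>1}$ strictly prefers RE (since their subsidy exceeds $C_i = 1$ and $\phi$ stays $0$ regardless of their action), while each $i \notin A_{>1}$ has DN cost $1 - 0 = 1$ and RE cost $2 - \text{subs}_i(1,s_{-i}) \ge 1$, making DN a best response.

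The main obstacle I anticipate is handling subsidy schemes in which $\text{subs}_i(s)$ depends in an essential way on the joint action $s$, since the paper's definition allows this generality. For the backward direction I plan to replace each $\text{subs}_i(1, s_{-i})$ by its maximum over the $s_{-i}$ relevant to the candidate NE $s_{A_{>1}}$ and its single-agent deviations; this only affects a small collection of states, and any scheme with per-state budget $s^*$ is dominated by the resulting essentially state-independent scheme for the purpose of preventing $s_{A_{>1}}$ from being a NE. Combining the two directions yields the Karp reduction, proving \textsc{CMG-System} NP-hard.
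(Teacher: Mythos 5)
Your reduction is essentially the paper's: the same game (one agent per vertex, $\phi(\x)=\bigwedge_{(i,j)\in E}(x_i\lor x_j)$, prior concentrated on $0^n$), the same ``subsidize all but one agent of a vertex cover'' idea in the forward direction, and the same budget-counting in the converse; you merely trade the paper's repair costs $1-\epsilon$ with integer budget $k-1$ for unit costs with budget $k-\tfrac12$, which creates the needed strict/weak preference gaps just as well. One small correction: with $C_i=1$ exactly, an unsubsidized pivotal agent is indifferent between RE and DN, so $s_{K^*}$ is generally \emph{not} the unique NE (any minimal vertex cover containing $K^*\setminus\{v\}$ yields another); this is harmless because \textsc{CMG-System} only requires $\phi=1$ in every NE, and your ingredients do rule out any NE with $\phi=0$ --- the over-subsidized agents force $K^*\setminus\{v\}$ into the repair set, any other agent repairing while the system still fails strictly prefers DN, and $v$ strictly prefers RE when exactly $K^*\setminus\{v\}$ repairs --- but you should state the conclusion that way rather than claiming uniqueness. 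The state-dependent-subsidy caveat you flag is genuine, but the paper's own proof elides it in the same manner, so your treatment is on par.
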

\purple{For the more general game where $\phi$ need not be monotone, a very similar argument as above can be used to show an even stronger hardness result. Namely \textsc{CMG-System} (with general boolean functions) is W[2]-hard, by reduction from \textsc{Weighted CNF-SAT}, i.e.\ deciding whether a CNF formula has a satisfying assignment with $k$ variables assigned 1 (\cite{downey2013fundamentals}, Lemma 23.3.1). This implies that it is unlikely that the problem is even fixed-parameter tractable (FPT), i.e.\ it is plausible that there is no computationally efficient algorithm for \textsc{CMG-System} even for a fixed small subsidy budget.}

 \blue{
\subsection{Games with incomplete information}

In a game of incomplete information, there are n players. Player $i$ has a type space $T_i$ and an action space $S_i$. We write $T = T_1 \times\dots\times T_n$ and $S = S_1 \times\dots\times S_n$. We assume that the type vector $t$ is drawn from a distribution $D$ (prior) over $T$ that is common knowledge. The distribution $D$ may or may not be a product distribution — that is, players’ types may or may not be stochastically independent. The cost $c_i(t_i;s)$ of player $i$ is determined by its type $t_i$ and by the actions $s$ chosen by all of the players. For example, in the component inspection game, each agent have two types, corresponding to the binary state of their component.

The expected cost of action $s_i$ to agent $i$ is $c_i(s_i,D)=\bbE_{t\sim D}[c_i(t_i;s_i)]$.  Suppose the inspection of agent $j$ reveals their type $t_j$. Denote posterior belief after the inspection of agent $j$ by $\Tilde{D}^{j,t_j}$. If agent $i$ switches action from $s_i$ to $\Tilde{s}_i^{j,t_j}$ after the inspection, the value of information about inspection of agent $j$ for agent $i$ is given by $\text{VoI}_{i,j}(s_i,\Tilde{s}_i^{j,t_j}) := l_i(s_i,D)-l_i(\Tilde{s}_i^{j,t_j},\Tilde{D}^{j,t_j})$. Typically we will assume that the actions $\Tilde{s}_i^{j,t_j}$ are Nash equilibria. We want this to be non-negative for each agent $i$.

We can extend the definition of PoIA to the more general setting.

\textbf{TODO: extend to multiple inspections?}

\begin{definition}[Price of Information Avoidance in the component inspection game]
Let $\mathbb{S}=\{\text{subs}_i\}$ denote the subsidy scheme. Consider the single agent inspection in an incomplete information game $(T,S,D)$  for inspection of agent $j$. Let ${\cS}_{NE%, \text{VoI}\ge0
}(\mathbb{S}),{\cS}_{NE}^{t_j}(\mathbb{S})\subseteq S$ (for $t_j\in T_j$) denote the subset of states corresponding to Nash equilibria when the cost for agent $i$ is $\text{cost}_i-\text{subs}_i$ for prior, posteriors $y_j=0$ and $y_j=1$ respectively. Let $\text{VoI}_j(\mathbb{S})=\min_{i,s\in {\cS}_{NE}(\mathbb{S}),s'\in\cup_{t_j\in T_j} {\cS}_{NE}^{t_j}(\mathbb{S})}\text{VoI}_{i,j}(s_i,s'_i)$ denote the least value of information for any agent $i$ for equilibria under $\mathbb{S}$.  Then the {\it Price of Information Avoidance}  is given by 

$$\text{PoIA}(j)=\frac{\min_{\mathbb{S}\mid \text{VoI}_{j}(\mathbb{S})\ge0}\max_{s\in{\cS}_{NE%, \text{VoI}\ge0
}(\mathbb{S})}\text{cost}(s)}{\min_{\mathbb{S}}\max_{s\in\cS_{NE}(\mathbb{S})}}=\frac{\min_{\mathbb{S}\mid \text{VoI}_{j}(\mathbb{S})\ge0}\text{PoA}(\mathbb{S})}{\min_{\mathbb{S}}\text{PoA}(\mathbb{S})}.$$
\end{definition}
}

\section{Data-driven subsidy in repeated games}\label{sec:data-driven}
The above computational hardness results for optimal subsidy design under vaious objectives motivate us to consider a beyond worst-case approach to finding a good subsidy for a given game. Specifically, we will consider the data-driven algorithm design paradigm introduced by \cite{gupta2016pac}, and further studied by \cite{balcan2017learning,balcan2018dispersion}. In this framework, we will assume access to multiple games coming from  the same  domain (e.g.\ infrastructure management in similar counties) and determine a good value of subsidy for unseen game instances from the same domain. We will consider %two settings---
games drawn i.i.d.\  from an (arbitrary, unknown) game distribution, or games arriving in an online sequence. 
In the former we 
and will be interested in having a small {\it sample complexity} of the number of game samples needed to generalize well to an unseen sample from the same distribution. For the latter, we will study regret relative to the best possible subsidy scheme over the online sequence, in hindsight. We leave open questions related to computational complexity, some recently proposed techniques are applicable to our setting \cite{balcan2022faster}.

\subsection{Sample complexity for subsidy schemes}\label{sec:sample-complexity-defs}

In this section, we define the notion of {\it sample complexity} for designing sample-based subsidy schemes for cost minimization games. The sample complexity for (uniform convergence of) a given set of subsidy schemes measures how many samples are sufficient to ensure the expected social cost of any subsidy scheme in the set approximately matches its average cost over the game samples with high probability, for any given approximation and confidence level. %Therefore, if a mechanism class comes with strong generalization guarantees, the designer can be confident that the set of samples can serve as a surrogate for knowledge of the distribution. 
In particular, if there is a subsidy scheme in the set with small social cost over a sufficiently large set of game samples, then that scheme will almost certainly have low cost in expectation over the distribution over games from which the samples are generated. Guarantees on sample complexity are a central topic in computational learning theory \cite{vapnik1971uniform,anthony1999neural}. 

\begin{definition}[Sample complexity] The sample complexity for a class $\cS$ of subsidy schemes is a function $\cN : \R_{\ge0}\times(0,1) \rightarrow  \mathbb{Z}_{\ge1}$ defined such that for any $\epsilon>0$, any $\delta \in (0,1)$, any sample size $N\in\mathbb{Z}_{\ge1}$, and any distribution $\cD$ over the games, with probability at least $1 -\delta$ over the draw of a set $S \sim \cD^N$, for any scheme $\S$ in $\cS$, the difference between the average cost of $\S$ over 
$S$ and the expected cost of $\S$ over $\cD$ is at most $\epsilon$, whenever $N\ge \cN(\epsilon,\delta)$. In other words, 
$$\Pr_{S\sim\cD^N} \left[\exists \S \in\cS \text{ s.t. } \Bigg\lvert\frac{1}{N}\sum_{v\in S} \textsc{cost}_\S (v)- \bbE [\textsc{cost}_\S (v)]\Bigg\rvert> \epsilon\right]< \delta.$$
\end{definition}
% Definition 3.1 (Generalization guarantee). 

%If M(N,δ) tends to 0 as N tends to infinity, then we say that M is uniformly learnable. 
\noindent Note that the existence of a single $\S\in\cS$ that violates the $\epsilon$-approximation for its expected cost is sufficient to cause the ``failure'' event which happens with probability $\delta$. In other words, with probability $1-\delta$, all schemes $\S$ must observe {\it uniform convergence} of sample cost to expected cost (for sufficiently large sample). The $1 -\delta$ high probability condition is needed because it is always possible that (with a very small but non-zero probability) the set of samples $S$, no matter how large, is completely unrepresentative of the distribution $\cD$ over the games. %Therefore, we can only guarantee that the difference between the average and expected profit of a mechanism is small with probability 1 − δ over the draw of the set of samples. 
Clearly, $\cN(\epsilon,\delta)$ should grow as $\delta$ or $\epsilon$ shrinks since we need to ensure that the difference between the average and expected cost of each subsidy in $\cS$ is at most $\cN(\epsilon,\delta)$ with probability at least $1 -\delta$. %Moreover, the larger the mechanism designer’s set of samples is, the smaller the difference between average profit over the samples and expected profit will be. Since this difference is bounded by M(N,δ), we can conclude that M(N,δ) ought to shrink as N grows. 
The sample complexity $\cN(\epsilon,\delta)$ of class $\cS$ of course also depends on the specific subsidy class $\cS$. According to classic computational learning theory, the more ``complex'' the subsidy class $\cS$ is, the more challenging it is to bound the difference between the average and expected cost of every subsidy in $\cS$, i.e. richer subsidy classes $\cS$ have larger sample camplexity $\cN(\epsilon,\delta)$. %Generalization guarantees are useful beyond their ability to bound the difference between the average cost of a mechanism and its expected cost. 

For an arbitrary class $\cS$, a bound on the sample complexity allows the subsidy scheme designer to relate the expected cost of a scheme in $\cS$ which achieves minimum average cost over the set of samples to the expected cost of an optimal scheme in $\cS$, using classic arguments from learning theory \cite{mohri2018foundations}. More precisely, for a set of samples $S$ from the distribution over buyers’ values, let $\hat{S}$ be the scheme in $\cS$ that minimizes average cost over the set of samples and let $\S^*$ be the scheme in $\cS$ that minimizes expected cost over $\cD$. Finally, let $P$ be the minimum cost achievable by any scheme in $\cS$ over the support of the distribution $\cD$. For any $\delta\in (0,1)$, with probability at least $1 -\delta$ over the draw of a set of $N\ge \cN(\epsilon,\delta)$ samples $S$ from $\cD$, the difference between the expected cost of $\hat{S}$ over $\cD$ and the expected cost of $\S^*$ over $\cD$ is at most $2\epsilon$. %In other words, if ˆ M =argmaxM∈Mv∈SprofitM(v), M∗ = argmaxM∈M{Ev∼D[profitM(v)]}, and U =maxM∈M,v∈supp(D){profitM(v)}, then Pr S∼DN E v∼D profitM∗ (v) −profit ˆM (v)> 2M(N,δ)< δ. 
Therefore, so long as there is a good sample complexity $\cN(\epsilon,\delta)$ bound for subsidy scheme class $\cS$, the scheme designer can be confident that an optimal scheme over the set of observed samples competes with an optimal scheme in $\cS$.

\paragraph{Pseudo-dimension.} Pseudo-dimension \cite{pollard2012convergence} is a well-known learning theoretic measure of complexity of a class of functions (it generalizes the Vapnik-Chervonenkis  or VC dimension to real-valued functions), and is useful in obtaining bounds on the sample complexity of fitting functions from that class to given data.  %A formal definition
\begin{definition}[Pseudo-dimension] Let $\cH$ be a set of real valued functions from input space $\cX$. We say that
$C = (x_1, \dots, x_m)\in \cX^m$ is pseudo-shattered by $\cH$ if there exists a vector
$r = (r_1, \dots, r_m)\in\R^m$ (called ``witness”) such that for all
$b= (b_1, \dots, b_m)\in \{\pm 1\}^m $ there exists $h_b\in \cH$ such that $\text{sign}(h_b(x_i)-r_i)=b_i$. Pseudo-dimension of $\cH$  is the cardinality of the largest set
pseudo-shattered by $\cH$.
\end{definition}

\noindent For a function class with  range $[0,H]$ and pseudo-dimension $d$, a sample complexity bound of $\cN(\epsilon,\delta)=O(\frac{H^2}{\epsilon^2}(d+\log\frac{1}{\delta}))$ is well-known \cite{anthony1999neural,balcan2020data}. We conclude this section with a  useful general lemma from data-driven algorithm design, restated in the specific context of games, for giving upper bounds on the pseudo-dimension of certain loss function classes. %The  result is for single-parameter subsidy schemes parameterized with subsidy parameter $\sigma$. 

\begin{lemma} (Lemma 2.3, \cite{balcan2020data})
     Suppose that for every game $G \in \G$, the objective function $L_G(\sigma) : \R\rightarrow\R$ which maps game parameter $\sigma$ (e.g.\ subsidy allocation) to the objective value (e.g.\ cost of worst-case equilibrium) is piecewise
constant with at most $N$ pieces. Then the dual class family $\{L_\sigma(\cdot):\G\rightarrow\R\mid L_\sigma(G)= L_G(\sigma)\}$ defined on games in $\G$ has pseudo-dimension $O(\log N)$. \label{lem:ddad-1}
\end{lemma}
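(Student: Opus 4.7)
The plan is to bound the pseudo-dimension directly by counting how many distinct sign patterns the dual functions can realize on a pseudo-shattered set. Suppose for contradiction that a set of $m$ games $G_1, \ldots, G_m \in \G$ is pseudo-shattered by the dual class, witnessed by some $r = (r_1, \ldots, r_m) \in \R^m$. By definition, as $\sigma$ ranges over $\R$, the vector
\[
\bigl(\text{sign}(L_{G_1}(\sigma) - r_1), \ldots, \text{sign}(L_{G_m}(\sigma) - r_m)\bigr) \in \{\pm 1\}^m
\]
must take all $2^m$ possible values.

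The key combinatorial step is to upper bound the number of distinct such sign vectors. By hypothesis, for each game $G_i$ the function $L_{G_i}: \R \to \R$ is piecewise constant with at most $N$ pieces, so it partitions $\R$ into at most $N$ intervals on each of which $L_{G_i}(\sigma)$ is constant. Overlaying these $m$ partitions produces a common refinement of $\R$ with at most $m(N-1)+1 \le mN$ intervals, and on each such interval every coordinate $L_{G_i}(\sigma)$ is constant, hence so is every $\text{sign}(L_{G_i}(\sigma) - r_i)$. Therefore the joint sign vector takes at most $mN$ distinct values as $\sigma$ ranges over $\R$, regardless of the witness $r$.

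Combining the two observations, pseudo-shattering of $m$ games forces $2^m \le mN$, which rearranges to $m \le \log_2(mN) = \log_2 m + \log_2 N$, so $m = O(\log N)$. This gives the desired pseudo-dimension bound. The main step requiring care is the refinement argument: I need to argue that merging $m$ piecewise-constant functions with at most $N$ pieces each yields at most $mN$ joint pieces, which follows because each ``breakpoint'' in the common refinement must come from a breakpoint of at least one individual $L_{G_i}$, and there are at most $m(N-1)$ such breakpoints in total. No obstacle beyond this routine counting is anticipated; once the refinement bound is in place, the pseudo-dimension bound follows by pure arithmetic.
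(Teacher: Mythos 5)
Your proof is correct and is essentially the standard argument for this lemma: the paper states it as a citation to Lemma 2.3 of the data-driven algorithm design survey without reproving it, and the cited proof proceeds exactly as you do, by bounding the number of realizable sign patterns by the number of pieces in the common refinement of the $m$ piecewise-constant duals and comparing with $2^m$. The only cosmetic point is that, depending on whether breakpoints are allowed to form degenerate pieces, the refinement bound might be $2mN$ rather than $mN$, which changes nothing in the $O(\log N)$ conclusion.
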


\subsection{Sample complexity for subsidizing games drawn from a distribution}

\textbf{Learning uniform subsidies.} We start with some initial results on learning a good value of the subsidy even in the absence of considerations about value of information, or possible equilibria in posterior games. We will consider uniform subsidy $\sigma\in\R_{\ge 0}$ conditional on repair (i.e. reduces cost of repair for all agents that choose to repair).
A simple loss objective in this uniform subsidy setting is given by

$${L}_{\text{prior}}(\sigma):=  \max_{s\in\cS_{NE}(\sigma)}\text{cost}^{\sigma}(s)+n_s\sigma,$$
where $n_s$ is the number of agents that choose repair in joint state $s$, $\cS_{NE}(\sigma)$ and $\text{cost}^{\sigma}(s)$ denote the set of Nash equilibria and (respectively) the updated total cost, when a uniform  subsidy of $\sigma$ is applied. We further assume all the repair costs as well as subsidy budget is no more than $H$, i.e. $\sigma, C_i\le H$ for each $i\in[n]$. Therefore, ${L}_{\text{prior}}(\sigma)\le (2H+1)n$.

%Here $\text{subs}(s)$ is the total subsidy pa
Suppose the central agent (learner) who needs to set the subsidy has repeated instances of this game (e.g. cost matrices) drawn from a distribution. Can we learn a good value of uniform subsidy $\sigma^*$, that has small expected loss over the distribution?
Our proof involves a bound on the number of critical subsidy values at which the set of Nash equilibria could possibly change by examining the subsidy values at which the preferred actions of an individual agent $i$ conditional on any fixed joint action $s_{-i}$ of the remaining agents, and use of Lemma \ref{lem:ddad-1} \cite{balcan2020data}. \purple{Add more details/sketch on tools used.}

\begin{theorem}\label{thm:spdim}
For any $\epsilon,\delta>0$ and any distribution $\cD$ over component maintenance games with $n$ agents, $O(\frac{n^2H^2}{\epsilon^2}(n+\log\frac{1}{\delta}))$ samples of the game drawn from $\cD$ are sufficient to ensure that with probability at least $1-\delta$ over the draw of the samples, the best value of uniform subsidy $\hat{\sigma^*}$ over the sample  has expected loss $L_{\text{prior}}(\hat{\sigma^*})$ that is at most $\epsilon$ larger than the expected loss of the best value of subsidy over $\cD$.
\end{theorem}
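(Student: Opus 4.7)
The plan is to apply the pseudo-dimension machinery from Lemma \ref{lem:ddad-1}, and the only real content is bounding the number of ``pieces'' of the map $\sigma \mapsto L_{\text{prior}}(\sigma)$ for a fixed game. The crucial simplification that I would exploit first is that the subsidy term $n_s \sigma$ exactly cancels the reduction it induces in the per-agent cost: since each of the $n_s$ agents that repair receives a subsidy of $\sigma$, we have $\text{cost}^{\sigma}(s) + n_s \sigma = \text{cost}(s)$. Therefore
\[
L_{\text{prior}}(\sigma) \;=\; \max_{s \in \cS_{NE}(\sigma)} \text{cost}(s),
\]
and the dependence on $\sigma$ is entirely through the set $\cS_{NE}(\sigma)$. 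In particular, on any interval of $\sigma$ on which $\cS_{NE}(\sigma)$ is constant, $L_{\text{prior}}$ is a constant function of $\sigma$.

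Next I would count how many times $\cS_{NE}(\sigma)$ can change as $\sigma$ increases. Given the other agents' joint action $s_{-i}$, agent $i$ prefers RE to DN in the $\sigma$-subsidized game iff $C_i - \sigma + 1 - \phi(x'(1,s_{-i})) \le 1 - \phi(x'(0,s_{-i}))$, i.e.\ iff $\sigma \ge C_i - (\phi(x'(1,s_{-i})) - \phi(x'(0,s_{-i})))$. This is a single linear threshold in $\sigma$ per pair $(i, s_{-i})$, and there are at most $n \cdot 2^{n-1}$ such pairs. The hyperplanes $\{\sigma = t_{i,s_{-i}}\}$ carve the $\sigma$-axis into at most $N = n \cdot 2^{n-1} + 1 = 2^{O(n)}$ intervals on each of which every agent's best-response correspondence (and hence $\cS_{NE}(\sigma)$) is fixed. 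Combined with the observation in the previous paragraph, $L_G(\sigma)$ is piecewise constant with at most $2^{O(n)}$ pieces for every game $G$ in the support of $\cD$.

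Now I invoke Lemma \ref{lem:ddad-1} to conclude that the dual class $\{L_\sigma : G \mapsto L_G(\sigma)\}$ has pseudo-dimension $d = O(\log N) = O(n)$. Since $L_{\text{prior}}(\sigma) \le (2H+1)n = O(nH)$, the standard pseudo-dimension based sample complexity bound $\cN(\epsilon,\delta) = O\!\left(\tfrac{H'^2}{\epsilon^2}(d + \log \tfrac{1}{\delta})\right)$ with $H' = O(nH)$ yields sample complexity $O\!\left(\tfrac{n^2 H^2}{\epsilon^2}(n + \log \tfrac{1}{\delta})\right)$. A standard uniform convergence argument (applied with accuracy parameter $\epsilon/2$) then shows that with probability at least $1-\delta$ the empirical risk minimizer $\hat{\sigma}^*$ has true loss within $\epsilon$ of the best $\sigma$ over $\cD$, completing the proof.

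The main obstacle is the structural step: establishing that $L_G(\sigma)$ is piecewise constant and bounding the pieces by $2^{O(n)}$. The cancellation $\text{cost}^{\sigma}(s) + n_s \sigma = \text{cost}(s)$ is what makes $L_G$ actually piecewise constant (and not piecewise linear); without it, one would have to reason about maxima of linear functions on subsets that themselves depend on $\sigma$, which is messier. The counting of thresholds is then elementary once one restricts attention to a single agent's best-response condition with the other agents' actions fixed. Everything else is a black-box application of data-driven algorithm design tools.
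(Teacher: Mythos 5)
Your proof is correct and follows essentially the same route as the paper's: count the at most $n2^{n-1}$ critical subsidy values arising from each agent's best-response threshold for each fixed $s_{-i}$, conclude the loss is piecewise constant, and apply Lemma \ref{lem:ddad-1} plus the standard pseudo-dimension sample-complexity bound. Your explicit cancellation $\text{cost}^{\sigma}(s)+n_s\sigma=\text{cost}(s)$ is a nice touch that justifies the piecewise-constant (rather than piecewise-linear) structure, a step the paper asserts without spelling out.
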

\begin{proof}
Consider any fixed component maintenance game $G$. Observe that if actions of all agents except agent $i$ i.e.\ $s_{-i}'$ is fixed ($2^{n-1}$ possibilities), then the agent $i$ will repair their component provided the  cost under subsidy $C_i-\sigma^*+1-\bbE_\theta\phi(\x'(1,s_{-i}'))$ is smaller than $1-\bbE_\theta\phi(\x'(0,s_{-i}'))$. That is we have at most $2^{n-1}$ critical values of $\sigma^*$ where the preferred action of agent $i$ may change. Over $n$ agents, we have at most $n2^{n-1}$ such points. Moreover, the loss is piecewise constant in any fixed piece.

Given the piecewise-constant structure with a bound on the total number of pieces, the sample complexity bound follows from standard learning theoretic arguments. In more detail, by Lemma \ref{lem:ddad-1}, this implies that the pseudo-dimension of the loss function class parameterized by the subsidy value is at most $O(\log(n2^{n-1}))=O(n)$ and   classic bounds \cite{anthony1999neural,balcan2020data} imply the sample complexity result.
\end{proof}

{
\begin{remark}
Note that a naive bound of $O(2^n)$ could be derived on the pseudo-dimension of any $n$ player game, where each player has $2$ possible actions. This is because there are $2^n$ distinct states and therefore at most $2^{2^n}$ possible distinct state subsets which could correspond to a Nash Equilibrium. The critical subsidy values $\sigma^*$ correspond to values at which the set of NE changes, and for any pair of state subsets exactly one could be the set of NE for all values of subsidy above (respectively below) some critical value $\sigma^*$. The loss is again piecewise constant, and by Lemma \ref{lem:ddad-1} we have that the pseudo-dimension is $O(2^n)$. The above proof makes use of cost matrix of the component maintenance game to obtain the exponentially better upper bound of $O(n)$.
\end{remark}

\noindent \textbf{Learning non-uniform subsidies.} We are further able to obtain a sample complexity bound even for the non-uniform subsidy scheme defined above, where the central agent can provide a different subsidy to each agent depending on their component cost, failure probability and how critical the component is to overall system functionality. The subsidy scheme consists of a vector of  multiple real-valued parameters, one for each agent.\looseness-1 

$${L}_{\text{prior}}(\bbS):=  \max_{s\in\cS_{NE}(\bbS)}\text{cost}^{\bbS}(s)+\text{subs}(s).$$

\noindent We assume that $\text{subs}_i(s), C_i\le H$ for each $i\in[n]$, and therefore ${L}_{\text{prior}}(\bbS)\le (2H+1)n$. Again, we are able to give a polynomial sample complexity for the number of games needed to learn a good value of subsidy with high probability over the draw of game samples coming from some fixed but unknown distribution (proof in Appendix \ref{app:data-driven}).
% \red{TODO: recall non-uniform targeted subsidy from above.}
%\red{TODO: Add result for targeted subsidy, using tools from \cite{balcan2018general} or Bartlett et al. COLT'22}

\begin{theorem}\label{thm:sc-nonuniform}
For any $\epsilon,\delta>0$ and any distribution $\cD$ over component maintenance games with $n$ agents, $O(\frac{n^2H^2}{\epsilon^2}(n^2+\log\frac{1}{\delta}))$ samples of the game drawn from $\cD$ are sufficient to ensure that with probability at least $1-\delta$ over the draw of the samples, the best vector of subsidies over the sample $\hat{\sigma^*}$ has expected loss $L_{\text{prior}}$ that is at most $\epsilon$ larger than the expected loss of the best vector of subsidies over $\cD$.
\end{theorem}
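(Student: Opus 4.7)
The plan is to extend the proof of Theorem~\ref{thm:spdim} from a scalar parameter to the $n$-dimensional subsidy vector $\bbS=(c_1,\dots,c_n)\in\R_{\ge 0}^n$, where $c_i$ is the subsidy paid to agent $i$ conditional on the repair action. First, I would exploit an algebraic simplification: since $\text{cost}^{\bbS}(s)=\text{cost}(s)-\text{subs}(s)$, the loss collapses to $L_{\text{prior}}(\bbS)=\max_{s\in\cS_{NE}(\bbS)}\text{cost}(s)$, so its entire dependence on $\bbS$ is funneled through the equilibrium set $\cS_{NE}(\bbS)$ and not through the cost terms directly.

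Second, I would characterize how $\cS_{NE}(\bbS)$ varies. Fixing a game $G$, agent $i$ switches best response at a single threshold on $c_i$, namely $\tau_{i,s_{-i}}(G)=C_i+\bbE_\theta[\phi(\x'(0,s_{-i}))]-\bbE_\theta[\phi(\x'(1,s_{-i}))]$, for each configuration $s_{-i}\in\{0,1\}^{n-1}$ of the other agents. Because this boundary involves only the coordinate $c_i$, the parameter space is partitioned by at most $n\cdot 2^{n-1}$ axis-aligned hyperplanes, giving at most $(2^{n-1}+1)^n\le 2^{n^2}$ axis-parallel boxes on which every agent's best response to every $s_{-i}$ is fixed, so $\cS_{NE}(\bbS)$ and hence $L_{\text{prior}}$ is constant on each box.

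Third, I would bound the pseudo-dimension of the dual class $\{G\mapsto L_{\text{prior}}^{G}(\bbS):\bbS\in\R_{\ge 0}^n\}$ by a multi-dimensional analog of Lemma~\ref{lem:ddad-1}. Given $m$ sample games $G_1,\dots,G_m$, overlay their threshold hyperplanes: this is still an axis-aligned arrangement with at most $m\cdot 2^{n-1}$ hyperplanes perpendicular to each coordinate, hence at most $(m\cdot 2^{n-1}+1)^n$ cells. The vector $(L_{\text{prior}}^{G_j}(\bbS))_{j=1}^m$ is constant on each cell, so at most $(m\cdot 2^{n-1}+1)^n$ distinct sign patterns of $(L_{\text{prior}}^{G_j}(\bbS)-r_j)_{j=1}^m$ can be realized as $\bbS$ varies. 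Pseudo-shattering $m$ games requires $2^m\le(m\cdot 2^{n-1}+1)^n$, which solves to $m=O(n^2)$. Combined with the range bound $L_{\text{prior}}(\bbS)\le(2H+1)n=O(nH)$ and the standard pseudo-dimension-to-sample-complexity translation $\cN(\epsilon,\delta)=O((H'/\epsilon)^2(d+\log(1/\delta)))$, this gives the claimed $O\bigl(\tfrac{n^2H^2}{\epsilon^2}(n^2+\log(1/\delta))\bigr)$ bound.

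The main obstacle is the multi-dimensional pseudo-dimension bound in the third step, since the cited Lemma~\ref{lem:ddad-1} is stated only for scalar parameters; the key observation that saves the argument is that the thresholds are axis-aligned because each agent's best response depends on at most one coordinate of the subsidy vector, letting us bound the cells of the overlaid arrangement by the simple product $(m\cdot 2^{n-1}+1)^n$ rather than the much larger cell count that would arise for general (non-axis-aligned) boundaries, in which case agent $i$'s best response would depend simultaneously on several coordinates of $\bbS$.
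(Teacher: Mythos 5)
Your proof is correct and the structural heart of it is the same as the paper's: you identify the same $n\cdot 2^{n-1}$ axis-parallel decision boundaries $\tau_{i,s_{-i}}$ in the subsidy parameter space, observe that $L_{\text{prior}}$ is piecewise constant on the resulting cells (your preliminary remark that $\text{cost}^{\bbS}(s)+\text{subs}(s)=\text{cost}(s)$, so all dependence on $\bbS$ flows through $\cS_{NE}(\bbS)$, is exactly the right justification for this), and you land on pseudo-dimension $O(n^2)$ and the stated sample complexity. Where you diverge is the final step: the paper invokes the $(d,t)$-delineability framework of \cite{balcan2018general}, declaring the class $(n,n2^{n-1})$-delineable and citing the black-box bound that such classes have pseudo-dimension $O(d\log(dt))=O(n^2)$, whereas you prove the pseudo-dimension bound from scratch by overlaying the arrangements of $m$ sample games, counting cells, and solving $2^m\le(m2^{n-1}+1)^n$. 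Your route is more self-contained and makes the multi-dimensional analog of Lemma \ref{lem:ddad-1} explicit rather than deferring to an external result. One small caveat: you present the axis-alignment of the boundaries as what ``saves'' the cell-counting, but it is not actually essential — an arrangement of $k$ general hyperplanes in $\R^n$ has at most $O(k^n)$ cells, which with $k=mn2^{n-1}$ still yields $2^m\le (mn2^{n-1})^{O(n)}$ and hence $m=O(n^2)$; this is precisely why the paper's delineability citation needs no axis-alignment hypothesis. The axis-parallel structure merely makes your product bound $(m2^{n-1}+1)^n$ elementary to state, which is a fair simplification but not a load-bearing one.
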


\noindent 
A similar sample complexity bound can also be given for learning conditional subsidies from game samples, by minimizing a loss based on the social cost in the posterior game. See Theorem \ref{thm:sc-inspection} in Appendix \ref{app:data-driven}. Note that minimization of $L_{\text{prior}}$ corresponds to minimization of PoA$(\bbS)$. To guarantee that the system functions, we can simply add a regularization term $\lambda (1-\phi(s))$, for sufficiently large $\lambda > (2H+1)n$. % and again obtain similar sample complexity guarantees.

In learning terminology, our results imply a bound on the number of sample games in the ``training set'' to do well on an unseen ``test'' game instance from the same distribution. We note that optimization over the training set is still computationally hard, but we can avoid solving the hard problem over and over again for repeated test instances.

Our techniques extend beyond losses that are based on worst case equilibria in the subsidized game. We define the following loss that considers average case Nash Equilibrium, and obtain similar same complexity guarantees as in Theorem \ref{thm:sc-nonuniform} above. 

$$\tilde{L}_{\text{prior}}(\bbS):= \frac{1}{|\cS_{NE}(\bbS)|} \sum_{s\in\cS_{NE}(\bbS)}\text{cost}^{\bbS}(s)+\text{subs}(s).$$

We remark that, generally speaking, average case NE are known to be hard to analyze and give any useful guarantees for \cite{nisan2007algorithmic}. Our result below indicates the potential of data-driven algorithm design to handle such challenging objectives and obtain meaningful learning guarantees. In particular, given a sufficiently large sample of games, we can compute near-optimal subsidy schemes with high confidence for minimizing the average cost of Nash equilibria, not just the worst case Nash equilibrium.

\begin{theorem}\label{thm:sc-nonuniform-avg}
Suppose $\text{subs}_i(s), C_i\le H$ for each $i\in[n]$. For any $\epsilon,\delta>0$ and any distribution $\cD$ over component maintenance games with $n$ agents, $O(\frac{n^2H^2}{\epsilon^2}(n^2+\log\frac{1}{\delta}))$ samples of the game drawn from $\cD$ are sufficient to ensure that with probability at least $1-\delta$ over the draw of the samples, the best vector of subsidies over the sample $\hat{\sigma^*}$ has expected loss $\tilde{L}_{\text{prior}}$ that is at most $\epsilon$ larger than the expected loss $\tilde{L}_{\text{prior}}$ of the best vector of subsidies over $\cD$.
\end{theorem}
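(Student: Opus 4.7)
The plan is to reuse the hyperplane-arrangement structure from the proof of Theorem \ref{thm:sc-nonuniform} and observe that the average-cost loss $\tilde{L}_{\text{prior}}$ is also piecewise constant on the same partition of the $n$-dimensional subsidy parameter space, which will immediately yield the same pseudo-dimension bound and hence the same sample complexity.

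The first key observation is that for any fixed joint action $s \in S$, the quantity $\text{cost}^{\bbS}(s) + \text{subs}(s) = \sum_i (\text{cost}_i(s) - \text{subs}_i(s)) + \sum_i \text{subs}_i(s) = \text{cost}(s)$ does not depend on the subsidy scheme $\bbS$ at all. Consequently, for any fixed game $G$,
\[
\tilde{L}_{\text{prior}}(\bbS) \;=\; \frac{1}{|\cS_{NE}(\bbS)|}\sum_{s\in\cS_{NE}(\bbS)}\text{cost}(s),
\]
so the only way $\bbS$ enters the loss is through the identity of the set $\cS_{NE}(\bbS)$ of pure Nash equilibria. (When $\cS_{NE}(\bbS)=\emptyset$, following the convention used throughout the paper we may restrict attention to $\bbS$ for which the equilibrium set is non-empty, or set the loss to the natural upper bound $(2H+1)n$ on that piece; either convention is piecewise constant.)

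Next I would copy the delineability argument from Theorem \ref{thm:sc-nonuniform}: membership of any joint action $s=(s_i,s_{-i})$ in $\cS_{NE}(\bbS)$ is determined by $n$ single-agent best-response inequalities, each of which is a linear (in fact axis-parallel) inequality in the vector $(\text{subs}_1,\dots,\text{subs}_n)\in\R^n$. Ranging over all agents $i$ and all opponent profiles $s_{-i}\in\{0,1\}^{n-1}$ gives at most $n\cdot 2^{n-1}$ hyperplanes. Inside any connected region of the complement of this arrangement, every agent's best-response relation is fixed for every $s_{-i}$, so $\cS_{NE}(\bbS)$ is constant, and hence by the observation above $\tilde{L}_{\text{prior}}(\bbS)$ is constant as well. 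Therefore the loss class is $(n, n2^{n-1})$-delineable in the sense of \cite{balcan2018general}, exactly as in Theorem \ref{thm:sc-nonuniform}.

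Invoking the general pseudo-dimension bound for $(d,t)$-delineable classes from \cite{balcan2018general} gives pseudo-dimension $O(n\log(n\cdot n 2^{n-1}))=O(n^2)$. Since $\tilde{L}_{\text{prior}}$ is an average of per-equilibrium terms each bounded by the same $(2H+1)n$ that bounds the worst-case loss, the same range bound applies, and the standard conversion from pseudo-dimension to sample complexity \cite{anthony1999neural,balcan2020data} yields the stated $O(\tfrac{n^2 H^2}{\epsilon^2}(n^2+\log\tfrac{1}{\delta}))$ bound. The only mildly subtle point, and the step I would take most care with, is handling the denominator $|\cS_{NE}(\bbS)|$ across piece boundaries (where the set can shrink or grow discontinuously); the observation that this denominator is itself constant on each cell of the arrangement, combined with a convention for the empty-equilibrium cells, disposes of it cleanly.
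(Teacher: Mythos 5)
Your proposal is correct and follows essentially the same route as the paper: reuse the $n2^{n-1}$ axis-parallel hyperplanes from the proof of Theorem \ref{thm:sc-nonuniform}, observe that $\cS_{NE}(\bbS)$ (and hence the averaged loss) is constant on each cell of the arrangement, and invoke the delineability-based pseudo-dimension bound with the same range $(2H+1)n$. Your explicit observation that $\text{cost}^{\bbS}(s)+\text{subs}(s)=\text{cost}(s)$ is independent of $\bbS$ is a nice way to make the piecewise-constant claim immediate, and it matches how the paper handles the analogous point elsewhere.
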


{\bf Cost-sharing games.} We first consider the problem of learning a good value of subsidy in the prior game. The subsidy scheme consists of a vector of  multiple real-valued parameters, one for each action. We define the loss of the central agent in a game $G$ as the total social cost in the worst-case Nash Equilibrium under the subsidy scheme $\bbS$, plus the total subsidy paid out by the central agent in the scheme $\bbS$, i.e.,

$${L}_{\text{prior}}(\bbS):=  \max_{s\in\cS_{NE}(\bbS)}\text{cost}^{\bbS}(s)+\text{subs}(s).$$

 \noindent We assume that $c^\bbS(a), c(a)\le H$ for each $a\in\cA$, and therefore ${L}_{\text{prior}}(\bbS)\le 2H|\cA|$.

  \begin{theorem}\label{thm:sc-nonuniform-csg}
For any $\epsilon,\delta>0$ and any distribution $\cD$ over fair cost sharing games with $N$ agents and $|\cA|$ actions, $O\left(\frac{|\cA|^2H^2}{\epsilon^2}(|\cA|\log |\cA|N+\log\frac{1}{\delta})\right)$ samples of the game drawn from $\cD$ are sufficient to ensure that with probability at least $1-\delta$ over the draw of the samples, the best vector of subsidies over the sample $\hat{\sigma^*}$ has expected loss $L_{\text{prior}}$ that is at most $\epsilon$ larger than the expected loss of the best vector of subsidies over $\cD$.
\end{theorem}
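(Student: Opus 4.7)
The plan is to follow the same template as in the proofs of Theorems \ref{thm:sc-nonuniform} and \ref{thm:sc-inspection}: exhibit a piecewise-constant structure for the loss as a function of the $|\cA|$-dimensional subsidy vector, bound the number of pieces by counting the defining hyperplanes, invoke the delineability-based pseudo-dimension bound of \cite{balcan2018general}, and conclude with the standard uniform convergence inequality for bounded real-valued function classes of bounded pseudo-dimension.

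The first step is to simplify the loss. Since $\text{cost}^{\bbS}(s) = \text{cost}(s) - \text{subs}(s)$, we immediately get $L_{\text{prior}}(\bbS) = \max_{s\in\cS_{NE}(\bbS)}\text{cost}(s)$, i.e.\ the dependence on $\bbS$ enters only through the set $\cS_{NE}(\bbS)$ of Nash equilibria. Fix a cost-sharing game $G$. For any joint action $s\in S$, any agent $i\in[N]$, and any candidate deviation from $s_i$ to some $a'\in\cA$ with $i\in f(a')$, the deviation is profitable iff
$$\frac{c(a')-c^{\bbS}(a')}{k'+1} < \frac{c(s_i)-c^{\bbS}(s_i)}{k},$$
where $k$ is the number of agents using $s_i$ in $s$ (including $i$) and $k'$ is the number of agents using $a'$ in $s$ (excluding $i$). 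This is an affine condition on the subsidy vector $(c^{\bbS}(a))_{a\in\cA}\in\R^{|\cA|}$ and hence defines a hyperplane.

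Next I would count the distinct hyperplanes arising in this way. Crucially, each such hyperplane depends only on the ordered pair of actions $(s_i,a')\in\cA^2$ and the pair of counts $(k,k')\in\{0,\dots,N\}^2$, and not on the remaining coordinates of $s$. Therefore, ranging over all $s$, $i$, and $a'$, there are at most $O(|\cA|^2 N^2)$ distinct hyperplanes in $\R^{|\cA|}$. On any connected region carved out by these hyperplanes, the full set of profitable-deviation indicators is fixed, so $\cS_{NE}(\bbS)$ is constant and hence $L_{\text{prior}}(\bbS)$ is constant. Thus the loss function class is $(|\cA|, O(|\cA|^2 N^2))$-delineable in the sense of \cite{balcan2018general}, and their general result gives a pseudo-dimension bound of $O(|\cA|\log(|\cA|^3 N^2))=O(|\cA|\log(|\cA|N))$.

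Finally, to turn pseudo-dimension into sample complexity, I bound the range of the loss: since $c(a)\in[0,H]$, we have $\text{cost}(s)=\sum_{a\text{ used in }s}c(a)\le H|\cA|$ and likewise $\text{subs}(s)\le H|\cA|$, so $L_{\text{prior}}(\bbS)\in[0,2H|\cA|]$. Plugging $H_{\max}=O(H|\cA|)$ and $d_{\text{pseudo}}=O(|\cA|\log(|\cA|N))$ into the classical bound $\cN(\epsilon,\delta)=O\!\bigl(\frac{H_{\max}^2}{\epsilon^2}(d_{\text{pseudo}}+\log\tfrac{1}{\delta})\bigr)$ from \cite{anthony1999neural,balcan2020data} yields the stated sample complexity. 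The main obstacle will be the hyperplane-counting step: although there are exponentially many joint actions in $S$, the deviation hyperplane for a fixed pair $(s_i,a')$ depends on $s$ only through the two counts $(k,k')$, so the distinct-hyperplane count collapses from exponential to $O(|\cA|^2 N^2)$, which is what drives the polynomial pseudo-dimension and the polynomial sample complexity.
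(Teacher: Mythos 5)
Your proposal is correct and follows essentially the same route as the paper's proof: the same deviation inequality defining affine hyperplanes in the $|\cA|$-dimensional subsidy space, the same $O(|\cA|^2N^2)$ count via the observation that each hyperplane depends only on the action pair and the two occupancy counts, the same $(|\cA|, O(|\cA|^2N^2))$-delineability and pseudo-dimension bound from \cite{balcan2018general}, and the same range bound $L_{\text{prior}}\le 2H|\cA|$ feeding the classical uniform-convergence inequality. Your up-front simplification $L_{\text{prior}}(\bbS)=\max_{s\in\cS_{NE}(\bbS)}\text{cost}(s)$ is just a cleaner phrasing of the paper's remark that the subsidy cancels within each piece.
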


    Another goal of the central agent is to avoid an increase in the social cost due to knowledge of the true costs. We denote the true costs by $\tilde{c}:\cA\rightarrow\R_{\ge0}$, which are assumed to be known in the posterior game, and the corresponding social costs by $\widetilde{\text{cost}}(\cdot)$. Let $\tilde{\cS}_{\text{NE}}$ denote the corresponding set of Nash Equilibria. We have

    $${L}_{\text{VoI}}(\bbS):=  \max_{s\in\tilde{\cS}_{NE}(\bbS)}\widetilde{\text{cost}}^{\bbS}(s) -  \max_{s\in\cS_{NE}(\bbS)}\text{cost}^{\bbS}(s).$$

\noindent This corresponds to the increase in the social cost relative to the prior game, when the true costs are known. The central agent would like to minimize this ``VoI'' loss.

 \begin{theorem}\label{thm:sc-voi}
For any $\epsilon,\delta>0$ and any distribution $\cD$ over fair cost sharing games with $N$ agents and $|\cA|$ actions, $O\left(\frac{|\cA|^2H^2}{\epsilon^2}(|\cA|\log |\cA|N+\log\frac{1}{\delta})\right)$ samples of the game drawn from $\cD$ are sufficient to ensure that with probability at least $1-\delta$ over the draw of the samples, the best vector of subsidies over the sample $\hat{\sigma^*}$ has expected loss $L_{\text{VoI}}$ that is at most $\epsilon$ larger than the expected loss of the best vector of subsidies over $\cD$.
\end{theorem}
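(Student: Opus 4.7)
\begin{proofoutline}
My plan follows the same template that yielded Theorems \ref{thm:sc-nonuniform}, \ref{thm:sc-nonuniform-csg} and \ref{thm:sc-nonuniform-avg}: identify a hyperplane arrangement in the subsidy parameter space $\R^{|\cA|}$ that induces a piecewise-constant structure on the loss, count the hyperplanes, invoke a delineability argument to bound the pseudo-dimension of the dual loss class, and finally translate this into a sample complexity bound via the standard pseudo-dimension $\Rightarrow$ uniform convergence result (range $[0,2H|\cA|]$, pseudo-dimension $d$ gives $O(\tfrac{H^2|\cA|^2}{\epsilon^2}(d+\log 1/\delta))$ samples).

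First I would fix an arbitrary game $G$ in the support of $\cD$ and argue that $L_{\text{VoI}}(\bbS)$ is piecewise constant in $\bbS\in\R_{\ge0}^{|\cA|}$. The key observation is that the loss depends on $\bbS$ only through the two sets of Nash equilibria $\cS_{NE}(\bbS)$ and $\tilde{\cS}_{NE}(\bbS)$ of the prior and posterior (true-cost) games, together with the subsidized costs. For any agent $i$, any pair of actions $a, a'$ available to $i$, and any joint action $s_{-i}$ of the remaining agents, the indifference condition between $a$ and $a'$ (in either the prior or the posterior game, under the fair cost-sharing rule) reduces to a linear equation in the coordinates of $\bbS$ because the subsidized cost for an action is $(c(a)-\text{subs}(a))/k$. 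This yields at most $N\cdot|\cA|^2\cdot|\cA|^{N-1}$ hyperplanes per game, doubled to account for the prior and the posterior equilibrium conditions, giving $t = O(N|\cA|^{N+1})$ hyperplanes. Within every open cell of the induced arrangement, both $\cS_{NE}(\bbS)$ and $\tilde{\cS}_{NE}(\bbS)$ are constant as subsets of $S$, the subsidized costs are linear in $\bbS$, and the two ``$\max$'' terms in $L_{\text{VoI}}(\bbS)$ are therefore each a maximum over a fixed finite collection of linear functions; taking their difference again yields a piecewise linear function, which can further be partitioned by $O(|\cA|^{2N})$ extra hyperplanes into pieces where both maxima are achieved by the same state, so that $L_{\text{VoI}}$ is piecewise constant. (Alternatively, I could leave $L_{\text{VoI}}$ piecewise linear on each cell and apply the delineability result directly.)

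Next, invoking the $(d,t)$-delineability lemma from \cite{balcan2018general} with $d=|\cA|$ and the hyperplane count above (the same lemma used in the proof of Theorem \ref{thm:sc-nonuniform}), the pseudo-dimension of the dual class $\{L_{\bbS}(\cdot): \G\to\R \mid L_{\bbS}(G)=L_{\text{VoI}}(\bbS)\}$ is $O(|\cA|\log(|\cA|\cdot t))=O(|\cA|\log(|\cA|N))$ up to the $N$-dependence absorbed into the logarithm, matching the bound claimed in the theorem. Plugging this into the standard pseudo-dimension based uniform convergence theorem for real-valued classes with range bounded by $2H|\cA|$ gives the stated sample complexity
\[
\cN(\epsilon,\delta) \;=\; O\!\left(\tfrac{|\cA|^2 H^2}{\epsilon^2}\bigl(|\cA|\log(|\cA|N)+\log\tfrac{1}{\delta}\bigr)\right).
\]

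The main obstacle, and the only place where the argument genuinely differs from Theorem \ref{thm:sc-nonuniform-csg}, is verifying that the \emph{difference} of two piecewise-linear $\max$-of-linear functions inherits the piecewise-constant (or at least piecewise-linear) structure on the same hyperplane arrangement. I expect to handle this by noting that any change in either $\arg\max_{s\in\cS_{NE}(\bbS)}\text{cost}^{\bbS}(s)+\text{subs}(s)$ or $\arg\max_{s\in\tilde{\cS}_{NE}(\bbS)}\widetilde{\text{cost}}^{\bbS}(s)$ requires either a change in the underlying NE set (captured by the indifference hyperplanes above) or a tie between two linear ``total cost plus subsidy'' expressions inside one of the NE sets, and the latter contribute only polynomially many additional hyperplanes, which does not affect the asymptotic pseudo-dimension bound.
\end{proofoutline}
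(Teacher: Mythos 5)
Your overall template (indifference hyperplanes $\to$ piecewise structure $\to$ delineability $\to$ pseudo-dimension $\to$ uniform convergence) is exactly the paper's, but your hyperplane count breaks the quantitative claim. You enumerate one indifference condition per joint action $s_{-i}$ of the remaining agents, arriving at $t=O(N|\cA|^{N+1})$ hyperplanes, and then assert that this $N$-dependence is ``absorbed into the logarithm.'' It is not: $\log\bigl(|\cA|\cdot N|\cA|^{N+1}\bigr)=\Theta(N\log|\cA|+\log N)$, so the $(d,t)$-delineability lemma yields pseudo-dimension $O(|\cA|\,N\log|\cA|)$, weaker than the claimed $O(|\cA|\log(|\cA|N))$ by essentially a factor of $N$. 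The paper's proof (which defers to the proof of Theorem \ref{thm:sc-nonuniform-csg}) avoids this by exploiting the anonymity of fair cost-sharing: agent $i$'s subsidized cost for an action $a$ is $(c(a)-c^{\bbS}(a))/k$ and depends on $s_{-i}$ only through the \emph{count} $k$ of agents choosing $a$. Every indifference condition is therefore determined by the quadruple $(s_i,\overline{s_i},k,\overline{k})$, giving only $|\cA|^2N^2$ distinct hyperplanes per game, doubled to $2|\cA|^2N^2$ to cover both the prior (mean-cost) and posterior (true-cost) games; this gives $(|\cA|,2|\cA|^2N^2)$-delineability and the stated sample complexity.

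A secondary issue is your proposed $O(|\cA|^{2N})$ extra hyperplanes to pin down the argmax inside each fixed NE set: this again injects a linear-in-$N$ term into the logarithm and contradicts your own remark that these ties contribute ``only polynomially many'' hyperplanes. Anonymity rescues this step too: the social cost of a state depends only on the \emph{set} of actions in use, so there are at most $2^{|\cA|}$ distinct linear forms and hence $O(4^{|\cA|})$ tie hyperplanes, independent of $N$. (The paper itself is terse here; note that for $L_{\text{VoI}}$, unlike $L_{\text{prior}}$, the subsidy is not added back into the objective, so within a cell the loss is genuinely piecewise linear rather than constant, and some such argument is needed.)
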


% \begin{proofoutline}
%     The key arguments are similar to the proof of Theorem \ref{thm:sc-nonuniform}. We can show that the loss function class as a function of the subsidy is $(|\cA|,2|\cA|^2N^2)$-delineable and the result follows.
% \end{proofoutline}

\purple{TODO: Extensions for VoI?}

\purple{add cost-sharing results}

\subsection{No-regret when subsidizing in an online sequence of games}\label{sec:data-driven-ol}
% \subsection{Online learning}
In the online setting, we receive a sequence of games at times (rounds) $t=1,\dots, T$. In each round $t$, the central agent must set a value of the (say uniform) subsidy $\sigma_t$, with potentially some feedback on previous rounds but no knowledge of the game parameters (costs/priors) of the current or future rounds. This is more pessimistic (but potentially also more realistic) than the distributional setting above. In particular, the sequence of games may be adversarially picked. The performance of the algorithm is measured by the difference in the cumulative loss for the selected subsidy values and the cumulative loss of the best fixed value of subsidy in hindsight, also known as regret ($R_T$).

$$R_T:= \sum_{t=1}^T{L}_{\text{prior}}(\sigma_t)-\min_{\sigma\in[0,H]}{L}_{\text{prior}}(\sigma).$$

\noindent ``No-regret'' corresponds to $R_T$ being sublinear in $T$, and the average regret $R_T/T$ approaches zero for large $T$ in this case. We will impose a mild assumption on the repair costs $C_i$ to obtain good results in the online setting. We will assume that the costs are not known exactly, but come from some smooth distribution. Formally, 

\begin{assumption}\label{asm1}
    We assume that the probability distributions generating the costs have $\kappa$-bounded probability density, i.e. $\max_{x\in\R}f_i(x)\le \kappa$ for some $\kappa\in\R^+$, where $f_i$ denotes the probability density function for cost $C_i$.
\end{assumption}
 The adversary designing the sequence of games may  select any bad distribution as long as it is smooth. Under this assumption, our analysis above and  tools from \cite{balcan2020semi} can be used to show that the online sequence of loss functions is {\it dispersed} \cite{balcan2018dispersion}. Dispersion, informally speaking, captures how amenable a non-Lipschitz function is to online learning. As noted in \cite{balcan2018dispersion,sharma2020learning}, dispersion is a sufficient condition for learning piecewise Lipschitz functions online, even in changing environments. A formal definition is included below.\looseness-1

\begin{definition}[\label{def:dis}Dispersion, \cite{balcan2018dispersion,balcan2020semi}]
The sequence of random loss functions $L_1, \dots,L_T$ is $\beta$-{\it dispersed} for the Lipschitz constant $\ell$ if, for all $T$ and for all $\epsilon\ge T^{-\beta}$, we have that, in expectation, at most
$\Tilde{O}(\epsilon T)$ functions (here $\tilde{O}$ suppresses dependence on quantities beside $\epsilon,T$ and $\beta$, as well as logarithmic terms)
are not $\ell$-Lipschitz for any pair of points at distance $\epsilon$ in the domain $\C$. That is, for all $T$ and  $\epsilon\ge T^{-\beta}$, 

$$\bbE\left[
\max\!_{\substack{\rho,\rho'\in\C\\||\rho-\rho'||_2\le\epsilon}}\!\big\lvert
\{ t\!\in\![T] \mid |L_t(\rho)-L_t(\rho')|>\ell||\rho\!-\!\rho'||_2\} \big\rvert \right]
\le  \Tilde{O}(\epsilon T).$$
% \begin{align*}
%     \bbE\left[
% \max_{\substack{\rho,\rho'\in\C\\||\rho-\rho'||_2\le\epsilon}}\big\lvert
% \{ t\in[T] \mid l_t(\rho)-l_t(\rho')>L||\rho-\rho'||_2\} \big\rvert \right] 
% \le  \Tilde{O}(\epsilon T).
% \end{align*}
\end{definition}

\noindent Under Assumption \ref{asm1}, we have the following guarantee about online learning of uniform subsidy in a sequence of games, namely one can predict good values of subsidy (with $\tilde{O}\left(\sqrt{\frac{n}{T}}\right)$ average expected regret over $T$ online rounds). We establish $\frac{1}{2}$-dispersion of the sequence of loss functions under the above assumption and use results from \cite{balcan2020semi} to obtain the regret bound (proof in Appendix \ref{app:data-driven}). %We include an informal statement and proof sketch below, and defer formal details to the full version.

\begin{theorem}\label{thm:online-regret}
Suppose Assumption \ref{asm1} holds. Let $L_1,\dots, L_T:[0,H]\rightarrow[0,(2H+1)N]$ denote an independent sequence of losses ${L}_{\text{prior}}(\sigma)$ as a function of the uniform subsidy value $\sigma$, in an online sequence of $T$ component maintenance games. 
Then the sequence of functions is $\frac{1}{2}$-dispersed and there is an online algorithm with $\Tilde{O}(\sqrt{nT})$ expected regret.
%    There exists an online learning algorithm for setting the subsidy in an online sequence of component inspection games such that the expected total regret of the algorithm is $\tilde{O}(\sqrt{nT})$ in $T$ rounds.
\end{theorem}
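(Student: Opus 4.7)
\begin{proofoutline}
The plan is to first identify the structure of the discontinuities of each loss $L_t(\sigma)$ as a function of the uniform subsidy $\sigma$, then invoke a smoothed-analysis style argument to establish $\tfrac12$-dispersion, and finally apply an existing online learning algorithm for dispersed piecewise-Lipschitz functions to obtain the regret bound.

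First I would reuse the structural analysis from the proof of Theorem~\ref{thm:spdim}: each loss $L_t(\sigma)$ is piecewise constant in $\sigma$ with at most $n 2^{n-1}$ pieces, and the critical $\sigma$ values where the set of Nash equilibria (hence $L_t$) may change are exactly those at which some agent $i$'s preferred action flips against some fixed $s_{-i}$, i.e.\ values of the form $\sigma^* = C_i - \mathbb{E}_\theta[\phi(\mathbf{x}'(1,s_{-i}))] + \mathbb{E}_\theta[\phi(\mathbf{x}'(0,s_{-i}))]$. Crucially, conditional on $\theta$ and the choice of $s_{-i}$, this critical location is an affine shift of the random repair cost $C_i$. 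Under Assumption~\ref{asm1} the density of $C_i$ is bounded by $\kappa$, so the density of each discontinuity location of $L_t$ is also bounded by $\kappa$ (marginally over the adversary's choice of game parameters other than the costs).

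Next I would establish $\tfrac12$-dispersion using the standard smoothness-to-dispersion lemma from \cite{balcan2020semi} (and its refinement via a VC/interval union bound). Fix an interval of length $\epsilon \ge T^{-1/2}$ in the subsidy domain $[0,H]$. Since each loss contributes at most $n 2^{n-1}$ discontinuities, each with density bounded by $\kappa$ on the real line, the expected number of discontinuities that fall into any such fixed interval is at most $\kappa \epsilon \cdot n 2^{n-1} T$. A uniform bound over all intervals of length $\epsilon$ (via VC-dimension $2$ of intervals on $\mathbb{R}$ together with Hoeffding/McDiarmid) shows that in expectation at most $\tilde O(\kappa \epsilon n 2^{n-1} T + \sqrt{T \log T})$ losses are non-Lipschitz between any two points at distance $\epsilon$. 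For $\epsilon \ge T^{-1/2}$, this is $\tilde O(\epsilon T)$ in the sense of Definition~\ref{def:dis}, establishing $\tfrac12$-dispersion of the sequence $L_1,\dots,L_T$.

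Finally, I would apply the continuous exponential-weights (Exp-type) algorithm for online optimization of dispersed piecewise-Lipschitz functions from \cite{balcan2018dispersion,balcan2020semi}. For a $\beta$-dispersed sequence of $[0,H']$-valued piecewise constant functions on a one-dimensional domain whose pieces at each round are bounded in number by $P$, the algorithm enjoys expected regret $\tilde O\bigl(H'\sqrt{T \log P} + H' T^{1-\beta}\bigr)$. Plugging in $H' = (2H+1)n = O(n)$, $P = n 2^{n-1}$ (so $\log P = O(n)$), and $\beta = \tfrac12$, both terms collapse to $\tilde O(\sqrt{nT})$ after hiding the polynomial factors in $n$ and $H$ inside the $\tilde O$ (which matches the convention of the theorem statement). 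The main obstacle is the uniform-over-intervals step of the dispersion argument; the density bound per discontinuity is easy, but one must be careful that the adversary's choice of $\theta$ and of which $s_{-i}$ triggers the discontinuity does not correlate across rounds in a way that would concentrate discontinuities into a sub-$T^{-1/2}$ window, and this is exactly what the $\kappa$-bounded density assumption rules out.
\end{proofoutline}
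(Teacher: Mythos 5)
Your proposal is correct and follows essentially the same route as the paper: it reuses the discontinuity structure from Theorem~\ref{thm:spdim} (at most $n2^{n-1}$ critical subsidy values, each an affine shift of a $\kappa$-bounded-density cost $C_i$), converts the per-interval density bound into a uniform-over-intervals bound via the VC/concentration machinery of \cite{balcan2020semi} to get $\frac{1}{2}$-dispersion, and then invokes the Exponential Forecaster regret guarantee for dispersed piecewise-Lipschitz losses. The only differences are presentational: you carry the $n2^{n-1}$ factor explicitly before absorbing it into the $\tilde O$, and you quote the black-box regret bound in the $\tilde O(H'\sqrt{T\log P}+H'T^{1-\beta})$ form rather than the paper's $\tilde O(\sqrt{dT}+K_T)$ form, but both yield $\tilde O(\sqrt{nT})$.
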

\iffalse
\begin{proof}
% \red{TODO: Define dispersion}
    The key idea is to observe that each loss function $L_t$ has  at most $K=n2^{n-1}$ discontinuities (as in proof of Theorem \ref{thm:spdim} above). Further, any interval of length $\epsilon$ has at most ${O}(\kappa\epsilon{T})$  functions that are non-Lipschitz in that interval, in expectation. This uses Assumption \ref{asm1}, and the observation that critical values of $s^*$ are linear in some cost $C_i$. Indeed, as shown in the proof of Theorem \ref{thm:spdim}, the critical values of subsidy are given by $s^*=C_i+\bbE_\theta\phi(0,s_{-i}')-\bbE_\theta\phi(1,s_{-i}')$ for some agent $i$ and joint action $s_{-i}'$.
    By \cite{balcan2020semi} then the expected number of non-Lipschitz losses on the worst interval of length $\epsilon$ is at most $\tilde{O}(T\epsilon+\sqrt{T\log(TK)})=\tilde{O}(\sqrt{(n+\log T)T})$ for $\epsilon\ge\frac{1}{\sqrt{T}}$.
    
    This implies $1/2$-dispersion of the sequence of loss functions in the sense of Definition \ref{def:dis}. %As in the proof of Theorem 4.1 above, the discontinuities in $L(\sigma)$ occur at points where $s^*$ is a linear function of $C_i$, for the repair cost of some agent $i$. 
    Then Theorem 5 from \cite{balcan2020semi} with $M=1$ implies the desired regret bound. %
\end{proof}
\fi

\noindent The online algorithm which achieves the above regret guarantee is the Exponential Forecaster algorithm of \cite{balcan2018dispersion}, a continuous version of the well-known exponential weights update algorithm \cite{cesa2006prediction}. We can further extend the result to learning a non-uniform subsidy scheme under Assumption \ref{asm1}, with a $\tilde{O}(\sqrt{nT})$ regret bound, using the same online algorithm. 

\begin{theorem}\label{thm:online-regret-nu}
Suppose Assumption \ref{asm1} holds. Let $L_1,\dots, L_T:[0,H]^n\rightarrow[0,(2H+1)N]$ denote an independent sequence of losses ${L}_{\text{prior}}(\bbS)$ as a function of the  subsidy scheme $\bbS$ parameterized by subsidy values $\{\sigma_i\}$, in an online sequence of $T$ component maintenance games. 
Then the sequence of functions is $\frac{1}{2}$-dispersed and there is an online algorithm with $\Tilde{O}(\sqrt{nT})$ expected regret.
%    There exists an online learning algorithm for setting the subsidy in an online sequence of component inspection games such that the expected total regret of the algorithm is $\tilde{O}(\sqrt{nT})$ in $T$ rounds.
\end{theorem}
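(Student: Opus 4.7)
\begin{proofoutline}
The plan is to follow the same blueprint as the proof of Theorem \ref{thm:online-regret}, lifting the one-dimensional dispersion argument to the $n$-dimensional subsidy space. The structural piece is already done for us in the proof of Theorem \ref{thm:sc-nonuniform}: for a fixed game, the loss $L_{\text{prior}}(\bbS)$ viewed as a function of $(\sigma_1,\dots,\sigma_n)\in[0,H]^n$ is piecewise constant, with its pieces carved out by at most $n\cdot 2^{n-1}$ \emph{axis-parallel} hyperplanes of the form
\[
\sigma_i \;=\; C_i + \bbE_\theta[\phi(\x'(1,s_{-i}'))]-\bbE_\theta[\phi(\x'(0,s_{-i}'))],
\]
one for each agent $i$ and each joint action $s_{-i}'\in\{0,1\}^{n-1}$ of the remaining agents. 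Each such hyperplane is perpendicular to coordinate axis $i$ and its offset is an affine function of the random cost $C_i$, whose density is bounded by $\kappa$ under Assumption \ref{asm1}. Hence the offset of every discontinuity hyperplane is itself a $\kappa$-smooth random variable.

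Next I would establish $\tfrac{1}{2}$-dispersion of the sequence $L_1,\dots,L_T$. Because all discontinuities are axis-parallel, the multi-dimensional dispersion question decomposes along coordinates: for any two points $\bbS,\bbS'\in[0,H]^n$ with $\|\bbS-\bbS'\|_2\le\epsilon$, a function $L_t$ can fail to be Lipschitz between them only if some discontinuity hyperplane of $L_t$ separates $\bbS$ from $\bbS'$, which requires the offset of that hyperplane to fall in an interval of length at most $\epsilon$ along the corresponding axis. Using the smoothness of the $C_i$ together with the axis-aligned dispersion lemma of Balcan, Dick and Sharma (as invoked in Theorem \ref{thm:online-regret}; see also \cite{balcan2020semi}) and a union bound over the $n\cdot 2^{n-1}$ hyperplanes, the expected number of losses among $L_1,\dots,L_T$ that are non-Lipschitz on the worst $\epsilon$-ball is bounded by
\[
\tilde{O}\!\left(\kappa\epsilon T + \sqrt{T\log(n 2^{n-1} T)}\right) \;=\; \tilde{O}\!\left(\epsilon T + \sqrt{nT}\right),
\]
which gives $\tfrac{1}{2}$-dispersion once $\epsilon\ge T^{-1/2}$.

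Finally, I would run the Exponential Forecaster algorithm of \cite{balcan2018dispersion} on the domain $[0,H]^n$ with the piecewise-constant losses $L_t$, which take values in $[0,(2H+1)N]$. Applying the general regret bound for dispersed piecewise-Lipschitz losses in $d$ dimensions (Theorem 5 of \cite{balcan2020semi} with $d=n$, $M=1$, and dispersion parameter $\beta=\tfrac{1}{2}$) yields expected regret $\tilde{O}\!\big(H N\sqrt{nT}\big)=\tilde{O}(\sqrt{nT})$, as claimed.

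The main obstacle I expect is the dispersion step: in going from $n=1$ to general $n$, one has to be careful that the $\kappa$-smoothness of the one-dimensional offsets really does translate into a uniform bound on the number of discontinuity hyperplanes cutting any $\epsilon$-ball, rather than only any axis-aligned $\epsilon$-interval. Fortunately the axis-parallel structure of the hyperplanes reduces this to exactly the one-dimensional case used in Theorem \ref{thm:online-regret}, composed with a union bound; no new smoothness assumption on the priors is needed beyond Assumption \ref{asm1}.
\end{proofoutline}
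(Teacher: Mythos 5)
Your proposal is correct and follows essentially the same route as the paper: the axis-parallel hyperplane structure from Theorem \ref{thm:sc-nonuniform}, $\kappa$-smoothness of the offsets via Assumption \ref{asm1}, the dispersion machinery of \cite{balcan2020semi,balcan2021data} to get $K_T=\tilde{O}(\epsilon T+\sqrt{T\log(TK)})$ with $K=n2^{n-1}$, and then the Exponential Forecaster regret bound. The only (immaterial) difference is in the geometric reduction to one dimension: you observe directly that an axis-parallel hyperplane can separate two points at $\ell_2$ distance $\epsilon$ only if its offset lands in an interval of length $\epsilon$ along its axis, whereas the paper walks an axis-aligned path between the two points and bounds $\sum_i\epsilon_i\le\sqrt{n}\,\epsilon$ by Cauchy--Schwarz; both yield the same $\tilde{O}(\sqrt{nT})$ bound.
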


\purple{TODO: Add algo for semi-bandit feedback set computation?}

\section{Discussion}

We study the problem of resource allocation for infrastructure maintenance in systems with privately owned components, and in classical cost-sharing games. The former captures the typical organization of engineering systems, computer networks, or project pipelines, and the latter corresponds to typical market systems. We identify useful metrics related to the well-studied price of anarchy, as well as the recently introduced value of information metric that a central agent may care about, and examine the challenge in optimally allocating resources to optimize these metrics. %Our metrics might be of interest in analyzing other games involving shared costs or information.

Our work employs a data-driven approach, which has not been previously employed in the literature of subsidy or taxation design where the focus has largely been on designing approximations for worst-case game parameters. An interesting further question is to extend this idea of analyzing ``typical'' games to potentially obtain better domain-specific subsidy schemes in other interesting games. Also our learning-based approach allows modeling more realistic settings, where the game parameters or even set of agents may change over time.

\section*{Acknowledgement}
We thank Chaochao Lin for helpful initial discussions, and are grateful to Avrim Blum, Hedyeh Beyhaghi, Siddharth Prasad, Keegan Harris and Rattana Pukdee for useful comments. This material is based on work supported by the National Science Foundation under grants CCF1910321, IIS 1901403, and SES 1919453.

% \subsubsection{Acknowledgements} Please place your acknowledgments at
% the end of the paper, preceded by an unnumbered run-in heading (i.e.
% 3rd-level heading).

%
% ---- Bibliography ----
%
% BibTeX users should specify bibliography style 'splncs04'.
% References will then be sorted and formatted in the correct style.
%
\bibliographystyle{alpha}
\bibliography{main}
%
% \begin{thebibliography}{8}
% \bibitem{ref_article1}
% Author, F.: Article title. Journal \textbf{2}(5), 99--110 (2016)

% \bibitem{ref_lncs1}
% Author, F., Author, S.: Title of a proceedings paper. In: Editor,
% F., Editor, S. (eds.) CONFERENCE 2016, LNCS, vol. 9999, pp. 1--13.
% Springer, Heidelberg (2016). \doi{10.10007/1234567890}

% \bibitem{ref_book1}
% Author, F., Author, S., Author, T.: Book title. 2nd edn. Publisher,
% Location (1999)

% \bibitem{ref_proc1}
% Author, A.-B.: Contribution title. In: 9th International Proceedings
% on Proceedings, pp. 1--2. Publisher, Location (2010)

% \bibitem{ref_url1}
% LNCS Homepage, \url{http://www.springer.com/lncs}. Last accessed 4
% Oct 2017
% \end{thebibliography}

\appendix
\section*{Appendix}
\section{Proofs from Section \ref{sec:opt-subs-2-series}}\label{app:2-series}

\textbf{Proposition \ref{prop:poa-no-subs} (restated).} \textit{In the two-agent series prior game (defined above and cost matrix noted in the first row of Table \ref{tab:2-series}), the Price of Anarchy in the absence of subsidy is at least $\text{PoA}\ge \frac{2}{p_1+p_2}$, 
for some repair costs $C_1,C_2$. More generally, for n agents, $\text{PoA}\ge \tilde{H}/\tilde{G}^n$ for some repair costs $C_1,\dots,C_n$, where $\tilde{H}$ and $\tilde{G}$ are the harmonic and geometric means, respectively, of the prior probabilities $p_1,\dots,p_n$.
}
\begin{proof}
    We set $C_1=\overline{p_1}p_2$ and $C_2=\overline{p_2}p_1$. Observe that DN-DN is an equilibrium since\footnote{In our notation, $\overline{p_1p_2}:=1-p_1p_2, \text{ while } \overline{p_1}\cdot \overline{p_2}:=(1-p_1)\cdot(1-p_2)$.} $\overline{p_1p_2}=\overline{p_1}+\overline{p_2}p_1\le \overline{p_1}+C_2$, and similarly $\overline{p_1p_2}\le \overline{p_2}+C_1$. Also, RE-RE is an equilibrium since $C_1=\overline{p_1}p_2\le \overline{p_1}$ and $C_2=\overline{p_2}p_1\le \overline{p_2}$. Clearly,
    $$\text{PoA}\ge \frac{\text{cost}(\text{DN,DN})}{\text{cost}(\text{RE,RE})}=\frac{2\overline{p_1p_2}}{C_1+C_2}\ge\frac{2}{p_1+p_2},$$
    where the last inequality follows from the observations
    \begin{align*}
        \overline{p_1p_2}(p_1+p_2)=p_1+p_2-(p_1+p_2)p_1p_2\ge p_1+p_2-2p_1p_2 = C_1+C_2.
    \end{align*}
    For the $n$ agent series game, we set the costs $C_i=\overline{p_i}\Pi_{j\ne i}p_j$. DN$^n$ (i.e., $s=0^n$) is a Nash equilibrium, since $$C_i+\overline{\Pi_{j\ne i}p_j}=(1-p_i)\Pi_{j\ne i}p_j+1-\Pi_{j\ne i}p_j\ge \overline{\Pi_{j}p_j}.$$ Moreover, RE$^n$ is also an equilibrium as $C_i\le\overline{p_i}$ for all $i\in[n]$. Therefore,
    \begin{align*}
        \text{PoA}&\ge \frac{\text{cost}(\text{DN}^n)}{\text{cost}(\text{RE}^n)}=\frac{n\overline{\Pi_{i}p_i}}{\sum_iC_i}\\
        &=\frac{n\left(1-{\Pi_{i}p_i}\right)}{\sum_i\Pi_{j\ne i}p_j - n\Pi_{i}p_i}\\
        &\ge \frac{n\left(1-{\Pi_{i}p_i}\right)}{\sum_i\Pi_{j\ne i}p_j - (\sum_i\Pi_{j\ne i}p_j)\Pi_{i}p_i}\\
        &=\frac{n}{\sum_i\Pi_{j\ne i}p_j}\\
        &=\frac{n}{\Pi_{i}p_i\cdot\sum_i\frac{1}{p_i}},
    \end{align*}
    and the claim follows by noting $\tilde{H}=\frac{n}{\sum_i\frac{1}{p_i}}$ and $\tilde{G}=(\Pi_{i}p_i)^{1/n}$.
\end{proof}

\noindent\textbf{Theorem \ref{thm:poa-series} (restated).} \textit{
    Consider the two-agent series component maintenance game with $C_1,C_2>0$ and $0<p_1,p_2<1$. Let $s^*=\I\{(C_1,C_2)\in [\overline{p_1}p_2,\overline{p_1}]\times [\overline{p_2}p_1,\overline{p_2}]\}\cdot\min\{C_1-\overline{p_1}p_2, C_2-\overline{p_2}p_1\}$, where $\I\{\cdot\}$ denotes the 0-1 valued indicator function. Then there exists a subsidy scheme $\bbS$  with total subsidy $s$ for any $s>s^*$ such that $\widetilde{\text{PoA}}(\mathbb{S})=1$. Moreover, a total subsidy of at least $s^*$ is necessary for any subsidy scheme $\bbS$ that guarantees $\widetilde{\text{PoA}}(\mathbb{S})=1$. 
}

\begin{proof} We will characterize the set of values of $C_1,C_2$ for which there are multiple Nash equilibria and design subsidy schemes that achieve $\widetilde{\text{PoA}}(\mathbb{S})=1$. We consider the following cases.

\begin{itemize}
    \item[Case 0:] $C_1<\overline{p_1}p_2,C_2>\overline{p_2}$. In this case the only NE is RE-DN (Table \ref{tab:2-series}). Thus, PoA $=1$ even in the absense of subsidy and $s^*=0$ in this case.
    \item[Case 1:] $C_1=\overline{p_1}p_2,C_2>\overline{p_2}$. Both RE-DN and DN-DN are Nash equilibria, and $$\text{cost(RE,DN)}=C_1+2\overline{p_2}=\overline{p_1}p_2+2\overline{p_2}=2-p_1p_2-p_2<2\overline{p_1p_2}=\text{cost(DN,DN)}.$$ \noindent An arbitrarily small subsidy to agent 1 is sufficient to guarantee $\widetilde{\text{PoA}}(\mathbb{S})=1$ (therefore $s^*=0$ works) as DN-DN would no longer be a NE.
    \item[Case 2:] $C_1>\overline{p_1}p_2,C_2>\overline{p_2}$. In this case the only NE is DN-DN. Thus, PoA $=1$ even in the absense of subsidy.
    \item[Case 3:] $C_1<\overline{p_1}p_2,C_2=\overline{p_2}$. Both RE-DN and RE-RE are Nash equilibria, and $$\text{cost(RE,DN)}=C_1+2\overline{p_2}>C_1+\overline{p_2}=C_1+C_2=\text{cost(RE,RE)}.$$ \noindent An arbitrarily small subsidy to agent 2 is sufficient to guarantee $\widetilde{\text{PoA}}(\mathbb{S})=1$ (therefore $s^*=0$ works)  as RE-DN would no longer be a NE.
    \item[Case 4:] $C_1<\overline{p_1}p_2,C_2<\overline{p_2}$. In this case the only NE is RE-RE. Thus, PoA $=1$ even in the absense of subsidy.
    \item[Case 5:] $(C_1,C_2)\in [\overline{p_1}p_2,\overline{p_1}]\times [\overline{p_2}p_1,\overline{p_2}]$. Both RE-RE and DN-DN are Nash equilibria, and OPT corresponds to RE-RE. A subsidy greater than $C_1-\overline{p_1}p_2$ to agent 1, or a subsidy greater than $C_2-\overline{p_2}p_1$ to agent 2 guarantees that the only NE is RE-RE. Further, in either case $\widetilde{\text{PoA}}(\mathbb{S})=1$ as the subsidy equals the reduction in the repair cost of the respective agent.

    Further suppose a subsidy of $s^*=s^*_1+s^*_2$ is sufficient to ensure $\widetilde{\text{PoA}}(\mathbb{S})=1$ in this case. Now if subsidy to agent 1 $s_1^*\le C_1-\overline{p_1}p_2$ and subsidy to agent 2 $s_2^*\le C_2-\overline{p_2}p_1$. Then both DN-DN and RE-RE are NE and $\widetilde{\text{PoA}}(\mathbb{S})>1$ since the worst-case equilibrium (i.e.\ DN-DN) cost does not depend on the subsidy. Therefore either $s_1^*> C_1-\overline{p_1}p_2$ or $s_2^*> C_2-\overline{p_2}p_1$, establishing that a subsidy of at least $s^*$ is necessary in this case to ensure $\widetilde{\text{PoA}}(\mathbb{S})=1$.

    % \item[Case 6:] $C_1>\overline{p_1},\overline{p_2}p_1<C_2<\overline{p_2}$. In this case the only NE is DN-DN. Thus, PoA $=1$ even in the absense of subsidy.

    \item[Case 6:] Otherwise. By symmetry, the case is similar to one of  C0 through C4 with agents 1 and 2 switched. $s^*=0$ and price of anarchy of 1 is achieved by no or arbitrarily small subsidy as above.
\end{itemize}
Note that $s^*$ is non-zero only in case C5, in which case we have established both sufficiency and necessity of a total subsidy of $s^*$ to ensure $\widetilde{\text{PoA}}(\mathbb{S})=1$.
\end{proof}

\noindent\textbf{Theorem \ref{thm:opt-series} (restated).} \textit{
    Consider the two-agent series component maintenance game with $C_1,C_2>0$ and $0<p_1,p_2<1$. Define $R\subset\R^+\times \R^+$ as the set of cost pairs that satisfy $C_1\le \overline{p_1}\land C_2\le \overline{p_2} \land (C_1\le \overline{p_1}p_2\lor C_2\le \overline{p_2}p_1)$, depicted in Figure \ref{figure: proof}. Let $s^*=\min_{(x,y)\in R}||(C_1,C_2)-(x,y)||_1$, where $||\cdot||_1$ denotes the $L_1$-norm. Then there exists a subsidy scheme $\bbS$  with total subsidy $s$ for any $s>s^*$ such that the system functions in any NE. Moreover, a total subsidy of at least $s^*$ is necessary for any subsidy scheme $\bbS$  that guarantees that the system functions in any NE.
}

\begin{proof}
    (Sufficiency of $s^*$). We do this by cases on the cost vector $(C_1,C_2)$, as follows.

    \begin{itemize}
        \item[Case 0:]  $(C_1,C_2)$ lies in the interior of $R$. In this case it is easy to see that $\text{PoA}=1$ and $s^*=0$. In particular, the conditions $C_1< \overline{p_1}$, $C_2< \overline{p_2}$ rule out DN-RE, RE-DN as candidate equilibria respectively, and since $C_1< \overline{p_1}p_2$ or $ C_2< \overline{p_2}p_1$,  DN-DN  cannot be a NE either (agent 1 or agent 2 will prefer to repair).
        \item[Case 1:] $C_1\le \overline{p_1}p_2, C_2\ge \overline{p_2}$. In this case, a  subsidy more than $s^*=C_2-\overline{p_2}$ to agent 2 is sufficient to bring the cost vector to the interior of $R$.
        \item[Case 2:]$C_2\le \overline{p_2}p_1, C_1\ge \overline{p_1}$. Symmetric to C1.
        \item[Case 3:] $\overline{p_1}p_2<C_1\le \overline{p_1}, \overline{p_2}p_1< C_2\le \overline{p_2}$. A subsidy more than $\min_i\{C_i-\overline{p_i}p_{3-i}\}$ to agent $\argmin_i\{C_i-\overline{p_i}p_{3-i}\}$ is sufficient to bring the cost vector to the interior of $R$.
        \item[Case 4:] Otherwise. It is straighforward to verify by direct calculation that a subsidy of $s^*_1=\max\{C_1-\overline{p_1},0\}+\overline{p_1}\cdot \overline{p_2}$ to agent 1 and a subsidy of $s^*_2=\max\{C_2-\overline{p_2},0\}+\overline{p_1}\cdot \overline{p_2}$ to agent 2 is sufficient.
    \end{itemize}
    The necessity argument essentially follows by updating the cost matrix with conditional subsidies $(s_1,s_2)$ and noting that the system is guaranteed to function in any NE if the costs $(C_1,C_2)$ are in the interior of $R$.
\end{proof}

\noindent In addition to expected VoI, we can also ensure posterior conditioned VoI$_{i,j}$ when component $c_j$ is inspected is non-negative for each agent $i$ and each posterior $y_j$ via subsidy. \purple{Compared to expected VoI, this is a more worst-case perspective as it includes the case when the component is broken which is typically when the agents are more likely to avoid the information about the component state.} The following result gives the optimal value of subsidy to ensure this.

\begin{theorem}\label{thm:voi-series}
    Consider the two-agent series component inspection game with $C_1,C_2>0$ and $0<p_1,p_2<1$. Then \begin{itemize}
        \item[(a)] $\text{VoI}_{i,j}(s_i,\Tilde{s}^{j,1})\ge0$ for any agents $i,j$, any  prior NE $s_i$ and any posterior NE $\Tilde{s}^{j,1}$, when the inspected component $j$ is working, except when $C_{3-j}=\overline{p_{3-j}}$ and an arbitrarily small subsidy is sufficient to ensure VoI is non-negative in this case. 
        \item[(b)] Define $R_1\subset\R^+\times \R^+$ as the set of cost pairs that satisfy $(C_j\le \overline{p_j}\land C_{3-j}\le \overline{p_{3-j}}p_j) \lor (C_j\le \overline{p_j}p_{3-j})$. Let $s^*=\min_{(x,y)\in R_1}||(C_1,C_2)-(x,y)||_1$, where $||\cdot||_1$ denotes the $L_1$-norm. Then there exists a subsidy scheme $\bbS$  with total (unconditional)  subsidy $s$ for any $s>s^*$ such that $\text{VoI}_{i,j}(s_i,\Tilde{s}^{j,0})\ge0$ for any agents $i,j$, any  prior NE $s_i$ and any posterior NE $\Tilde{s}^{j,0}$, when the inspected component $j$ is broken. Moreover, a total (unconditional) subsidy of at least $s^*$ is necessary for any subsidy scheme $\bbS$  that guarantees that the system functions in any NE.
    \end{itemize}
    
    %Let $s^*=\I\{(C_1,C_2)\in [\overline{p_1}p_2,\overline{p_1}]\times [\overline{p_2}p_1,\overline{p_2}]\}\cdot\min\{C_1-\overline{p_1}p_2, C_2-\overline{p_2}p_1\}$, where $\I\{\cdot\}$ denotes the 0-1 valued indicator function. Then there exists a subsidy scheme $\bbS$  with total subsidy $s$ for any $s>s^*$ such that $\text{PoA}(\mathbb{S})=1$. Moreover, a total subsidy of at least $s^*$ is necessary for any subsidy scheme $\bbS$ that guarantees $\text{PoA}(\mathbb{S})=1$. 
\end{theorem}

\begin{figure}[t]
  \centering
  \includegraphics[width=0.35\linewidth]{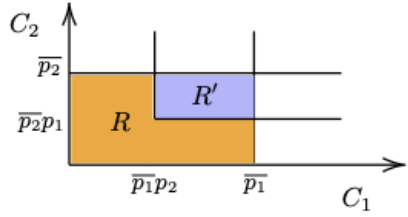}
  \caption{Cost region $R_1$ with  non-negative VoI for each agent when component 1 is inspected in a two-series game (Theorem \ref{thm:voi-series}). \label{figure: voi}}
  % \Description{A plot with the costs for two agents in a 2-series system. The region where each agent has non-negative value of information upon inspection of the first component is highlighted.}
\end{figure}

\begin{proof}
WLOG $j=1$. \begin{itemize}
    \item[(a)]
    If $y_1=1$, the only candidate posterior equilibria are DN-DN if $C_2\ge \overline{p_2}$ and DN-RE if $C_2\le \overline{p_2}$.

    If $C_2> \overline{p_2}$, then the prior NE is either DN-DN or RE-DN. In either case both agents have a non-negative Value of Information (Table \ref{tab:2-series}).
If $C_2< \overline{p_2}$, then the prior NE is either DN-DN, RE-RE or DN-RE. In each case both agents can be readily verified to have a non-negative Value of Information. If $C_2=\overline{p_2}$ and $C_1\le \overline{p_2}$, then VoI can be negative for agent 1 if the prior equilibrium is RE-RE and the posterior equilibrium is DN-DN. But in this case a (unconditional, or conditional on inspection) subsidy of $s^*_2>0$ to agent 2 ensures that DN-DN is not a posterior equilibrium and both agents have non-negative VoI. %conditional: challenge of RE-DN prior when C1 is too small
    \item[(b)] For $y_1=0$, we consider the following cases.
    
\begin{itemize}
    \item[Case 0:] $C_1\le \overline{p_1}p_2,C_2>\overline{p_2}$. If $C_1<\overline{p_1}p_2$, the only prior NE is RE-DN and the only posterior NE is also RE-DN. Thus, VoI $=0$ for each agent even in the absense of subsidy and $s^*=0$ in this case. If $C_1=\overline{p_1}p_2$, DN-DN is also a prior NE but VoI is still non-negative for each agent, since the posterior costs
    $$\text{cost}_1\text{(RE,DN)}=C_1+\overline{p_2}\le \overline{p_1}p_2+\overline{p_2}=\overline{p_1p_2},$$
    and 
    $$\text{cost}_2\text{(RE,DN)}=\overline{p_2}<\overline{p_1p_2}.$$
    \item[Case 1:] $\overline{p_1}p_2< C_1\le p_2,C_2>\overline{p_2}$. DN-DN is the only prior NE and the only posterior equilibrium is RE-DN (except if $C_1=p_2$, when DN-DN is also a posterior NE), and $$\text{cost}_1\text{(RE,DN)}=C_1+\overline{p_2}> \overline{p_1p_2}.$$ \noindent A subsidy of $C_1-\overline{p_1}p_2$ to agent 1 is sufficient to ensure VoI is non-negative for agent 1 as noted in  case C0. Alternatively, a subsidy of $C_2-p_1\overline{p_2}$ to agent 2  ensures that RE-RE is the only prior and posterior NE, and VoI is non-negative for each agent. The smaller of the two subsidies works and is necessary in this case.

    \purple{If we require the subsidy to be conditional on inspection, DN-DN remains the only prior NE as prior costs are not updated by the subsidy. A subsidy of $C_1-\overline{p_1}p_2$ to agent 1 works as above. In the alternative case, if we provide a smaller subsidy of $C_2-\overline{p_2}$ to agent 2, the posterior NE is switched to RE-RE but the prior remains  DN-DN and a %slightly smaller 
    subsidy of $C_1-\overline{p_1p_2}$ to agent 1 is still required to ensure non-negative VoI for agent 1.}
    % costs$$\text{cost}_1\text{(RE,RE)}=C_1+\overline{p_R}< p_Rp_2+\overline{p_R}=\overline{p_Rp_1p_2},$$
    % and 
    % $$\text{cost}_2\text{(RE,RE)}=\overline{p_Rp_2}<\overline{p_Rp_1p_2}.$$
    
    \item[Case 2:] $C_1>p_2,C_2>\overline{p_2}$. In this case the only NE is DN-DN in both prior and posterior games. The value of information is negative for both agents in the absense of subsidy as the posterior costs are 1. Subsidy schemes described in case C1 above can be verified to be optimal in this case as well.
    \item[Case 3:] $C_1<\overline{p_1}p_2,C_2=\overline{p_2}$. Both RE-DN and RE-RE are prior and posterior Nash equilibria, and $$\text{cost(RE,DN)}=C_1+2\overline{p_2}>C_1+\overline{p_2}=C_1+C_2=\text{cost(RE,RE)}.$$ \noindent An arbitrarily small subsidy to agent 2 is sufficient to guarantee non-negative VoI  as RE-DN would no longer be a NE.
    % \item[Case 4:] $C_1<\overline{p_1}p_2,C_2<\overline{p_2}$, or $C_1<\overline{p_1},C_2<\overline{p_2}p_1$. In this case the only NE is RE-RE. Thus, VoI is zero for each agent in the absense of subsidy.
    % \item[Case 5:] $(C_1,C_2)\in [\overline{p_1}p_2,\overline{p_1}]\times [\overline{p_2}p_1,\overline{p_2}]$. Both RE-RE and DN-DN are Nash equilibria, and OPT corresponds to RE-RE. A subsidy greater than $C_1-\overline{p_1}p_2$ to agent 1, or a subsidy greater than $C_2-\overline{p_2}p_1$ to agent 2 guarantees that the only NE is RE-RE. Further, in either case $\text{PoA}(\mathbb{S})=1$ as the subsidy equals the reduction in the repair cost of the respective agent.

    % Further suppose a subsidy of $s^*=s^*_1+s^*_2$ is sufficient to ensure $\text{PoA}(\mathbb{S})=1$ in this case. Now if subsidy to agent 1 $s_1^*\le C_1-\overline{p_1}p_2$ and subsidy to agent 2 $s_2^*\le C_2-\overline{p_2}p_1$. Then both DN-DN and RE-RE are NE and $\text{PoA}(\mathbb{S})>1$ since the worst-case equilibrium (i.e.\ DN-DN) cost does not depend on the subsidy. Therefore either $s_1^*> C_1-\overline{p_1}p_2$ or $s_2^*> C_2-\overline{p_2}p_1$, establishing that a subsidy of at least $s^*$ is necessary in this case to ensure $\text{PoA}(\mathbb{S})=1$.

    % \item[Case 6:] $C_1>\overline{p_1},\overline{p_2}p_1<C_2<\overline{p_2}$. In this case the only NE is DN-DN. Thus, PoA $=1$ even in the absense of subsidy.

    \item[Case 4:] Otherwise. Value of Information for agent 1 is negative since either DN-DN or DN-RE is a posterior NE with cost 1, and there is a prior NE with smaller cost to agent 1. It may be verified that the subsidy scheme from Theorem \ref{thm:opt-series} is optimal in ensuring non-negative VoI in this case. %By symmetry, the case is similar to one of  C0 through C4 with agents 1 and 2 switched. $s^*=0$ and price of anarchy of 1 is achieved by no or arbitrarily small subsidy as above.}
\end{itemize}%
\end{itemize}
\end{proof}

\section{Optimal subsidy in two agent parallel game}\label{app:2-parallel}

The cost matrix for a two-agent parallel game is summarized in Table \ref{tab:2-parallel}. Here we consider a slight generalization where there are two agents with reparable components connected in parallel, and there are additional components connected in parallel which are not maintained by any agent. Let $p_R$ denote the probability that remaining components work. %These additional components are not controlled by any agent, and connected in parallel to the components controlled by the two agents.
\purple{
\subsection{Price of Anarchy}

\begin{theorem}\label{thm:poa-parallel}
    Consider the two-agent parallel component maintenance game with $C_1,C_2>0$ and $0<p_1,p_2<1$. Let $s^*=\I\{(C_1,C_2)\in [0,\overline{p_R}\cdot\overline{p_1}\cdot\overline{p_2}]^2\}\cdot|C_1-C_2|$, where $\I\{\cdot\}$ denotes the 0-1 valued indicator function. Then there exists a subsidy scheme $\bbS$  with total subsidy $s$ for any $s>s^*$ such that $\text{PoA}(\mathbb{S})=1$. Moreover, a total subsidy of at least $s^*$ is necessary for any subsidy scheme $\bbS$ that guarantees $\text{PoA}(\mathbb{S})=1$. 
\end{theorem}

\begin{proof} We will characterize the set of values of $C_1,C_2$ for which there are multiple Nash equilibria and design subsidy schemes that achieve $\text{PoA}(\mathbb{S})=1$. For convenience set $p^*:=\overline{p_R}\cdot\overline{p_1}\cdot\overline{p_2}$. We consider the following cases. 

\begin{itemize}
    \item[Case 0:] $C_1<p^*,C_2>p^*$. In this case the only NE is RE-DN (Table \ref{tab:2-parallel}). Thus, PoA $=1$ even in the absense of subsidy and $s^*=0$ in this case.
    \item[Case 1:] $C_1=p^*,C_2>p^*$. Both RE-DN and DN-DN are Nash equilibria, and $$\text{cost(RE,DN)}=C_1=p^*=\frac{1}{2}\text{cost(DN,DN)}.$$ \noindent An arbitrarily small subsidy to agent 1 is sufficient to guarantee $\text{PoA}(\mathbb{S})=1$ (therefore $s^*=0$ works) as DN-DN would no longer be a NE.
    \item[Case 2:] $C_1>p^*,C_2>p^*$. In this case the only NE is DN-DN. Thus, PoA $=1$ even in the absense of subsidy.
    \item[Case 3:] $C_1<p^*,C_2=p^*$. Both RE-DN and DN-RE are Nash equilibria, and $$\text{cost(RE,DN)}=C_1+2\overline{p_2}>C_1+\overline{p_2}=C_1+C_2=\text{cost(RE,RE)}.$$ \noindent An arbitrarily small subsidy to agent 2 is sufficient to guarantee $\text{PoA}(\mathbb{S})=1$ (therefore $s^*=0$ works)  as RE-DN would no longer be a NE.
    \item[Case 4:] $C_1<\overline{p_1}p_2,C_2<\overline{p_2}$. In this case the only NE is RE-RE. Thus, PoA $=1$ even in the absense of subsidy.
    \item[Case 5:] $(C_1,C_2)\in [\overline{p_1}p_2,\overline{p_1}]\times [\overline{p_2}p_1,\overline{p_2}]$. Both RE-RE and DN-DN are Nash equilibria, and OPT corresponds to RE-RE. A subsidy greater than $C_1-\overline{p_1}p_2$ to agent 1, or a subsidy greater than $C_2-\overline{p_2}p_1$ to agent 2 guarantees that the only NE is RE-RE. Further, in either case $\text{PoA}(\mathbb{S})=1$ as the subsidy equals the reduction in the repair cost of the respective agent.

    Further suppose a subsidy of $s^*=s^*_1+s^*_2$ is sufficient to ensure $\text{PoA}(\mathbb{S})=1$ in this case. Now if subsidy to agent 1 $s_1^*\le C_1-\overline{p_1}p_2$ and subsidy to agent 2 $s_2^*\le C_2-\overline{p_2}p_1$. Then both DN-DN and RE-RE are NE and $\text{PoA}(\mathbb{S})>1$ since the worst-case equilibrium (i.e.\ DN-DN) cost does not depend on the subsidy. Therefore either $s_1^*> C_1-\overline{p_1}p_2$ or $s_2^*> C_2-\overline{p_2}p_1$, establishing that a subsidy of at least $s^*$ is necessary in this case to ensure $\text{PoA}(\mathbb{S})=1$.

    % \item[Case 6:] $C_1>\overline{p_1},\overline{p_2}p_1<C_2<\overline{p_2}$. In this case the only NE is DN-DN. Thus, PoA $=1$ even in the absense of subsidy.

    \item[Case 6:] Otherwise. By symmetry, the case is similar to one of  C0 through C4 with agents 1 and 2 switched. $s^*=0$ and price of anarchy of 1 is achieved by no or arbitrarily small subsidy as above.
\end{itemize}
Note that $s^*$ is non-zero only in case C5, in which case we have established both sufficiency and necessity of a total subsidy of $s^*$ to ensure $\text{PoA}(\mathbb{S})=1$.
\end{proof}
}
\subsection{Guaranteeing system functions in any NE}

\begin{theorem}\label{thm:opt-parallel}
    Consider the two-agent parallel component maintenance game such that $C_1,C_2>0$ and $0<p_1,p_2<1$. %Define $R\subset\R^+\times \R^+$ as the set of cost pairs that satisfy $C_1\le \overline{p_1}\land C_2\le \overline{p_2} \land (C_1\le \overline{p_1}p_2\lor C_2\le \overline{p_2}p_1)$, depicted in Figure \ref{figure: proof}. 
    Let $s^*=\I\{(C_1,C_2)\in [\overline{p_R}\cdot\overline{p_1}\cdot\overline{p_2},\infty)^2\}\cdot(\min\{C_1,C_2\}-\overline{p_R}\cdot\overline{p_1}\cdot\overline{p_2})$, where $\I\{\cdot\}$ denotes the 0-1 valued indicator function. Then there exists a subsidy scheme $\bbS$  with total subsidy $s$ for any $s>s^*$ such that the system functions in any NE. Moreover, a total subsidy of at least $s^*$ is necessary for any subsidy scheme $\bbS$  that guarantees that the system functions in any NE.
\end{theorem}

\begin{proof}
    Note that the system does not function only in DN-DN, for which to be a NE we must have $(C_1,C_2)\in [\overline{p_R}\cdot\overline{p_1}\cdot\overline{p_2},\infty)^2$. Now a subsidy $s^*_i>C_i-\overline{p_R}\cdot\overline{p_1}\cdot\overline{p_2}$ to agent $i$ guarantees that agent $i$ chooses repair, establishing the first part of the theorem.

    Conversely, suppose $(C_1,C_2)\in [\overline{p_R}\cdot\overline{p_1}\cdot\overline{p_2},\infty)^2$ and a subsidy scheme with total subsidy $s^*$ guarantees that the system functions in any NE. But if $s^*_i\le C_i-\overline{p_R}\cdot\overline{p_1}\cdot\overline{p_2}$ for $i\in\{1,2\}$, then DN-DN is still an NE, and the system does not function for this NE. By contradiction, the total subsidy must be at least $s^*$ as claimed. 
\end{proof}

\subsection{Value of Information} For this case as well, %it is noted in \cite{lin2021multi} that the value of information for inspecting the state of component 1 is non-negative for both agents if a global equilibrium is selected. However,
the Value of Information may be negative if a local equilibrium is selected for some parameter settings. In the following we will show a dichotomy---if the repair costs are small then  a central agent using subsidy must subsidize the full costs of repair to avoid negative Value of Information of component inspection for the agents. Otherwise, the central agent can partially subsidize to avoid negative VoI.

%The key idea for the proof is to consider two types of scenarios where the VoI can be negative. One situation is where DN-DN is a Nash equilibrium both in prior and posterior games, and VoI is negative for both agents. The second situation is when $C_2<C_1$, the prior equibrium is DN-RE and posterior equilibrium is RE-DN, making VoI negative for agent 1. It turns out giving agent 2 a subsidy of $C_2+\epsilon$ and agent 1 a subsidy of $C_1-\overline{p_Rp_2}+\epsilon$ for $\epsilon>0$ is sufficient to avoid negative VoI for both agents.

\begin{theorem}\label{thm:parallel-2-agents}
Suppose $C_1,C_2\notin\{0,\overline{p_R}\cdot\overline{p_2},\overline{p_R}\cdot\overline{p_1}\cdot\overline{p_2}\}$. A subsidy scheme with $s_1^*>\max\{C_1-\overline{p_R}\cdot\overline{p_1}\cdot\overline{p_2},\min\{C_1,\overline{p_R}\cdot\overline{p_2}\}\}$ and $s_2^*>\max\{C_2-\overline{p_R}\cdot\overline{p_1}\cdot\overline{p_2},\min\{C_2,\overline{p_R}\cdot\overline{p_2}\}\}$, conditional on inspection, is sufficient to avoid negative VoI for both agents when component 1 is inspected.
\end{theorem}
\begin{proof}
Note that RE-RE cannot be an equilibrium since $C_1,C_2>0$. Also negative VoI is not possible when the posterior is $y_1=1$ as the only Nash equilibrium is DN-DN with zero cost for each agent. 

First suppose $\min\{C_1,C_2\}\ge \overline{p_R}\cdot\overline{p_1}\cdot\overline{p_2}$. This implies DN-DN is the prior equilibrium. In this case, the conditional subsidies of $s_i^*=C_i-\overline{p_R}\cdot\overline{p_1}\cdot\overline{p_2}$ are sufficient to ensure that posterior repair costs for each agent is less than $\overline{p_R}\cdot\overline{p_1}\cdot\overline{p_2}\le \overline{p_R}\cdot\overline{p_2}$. Thus, DN-DN cannot be a posterior NE for $y_1=0$. For DN-RE and RE-DN, the subsidy ensures that VoI is non-negative for both agents.

\begin{table}[t]
\centering
\begin{tabular}{@{}lcccc@{}}
\toprule
Conditions & DN-DN                     & DN-RE                   & RE-DN                   & RE-RE                 \\ \midrule
Prior      & $\overline{p_R}\cdot\overline{p_1}\cdot\overline{p_2},\overline{p_R}\cdot\overline{p_1}\cdot\overline{p_2}$ & $0,C_2$       & $C_1,0$ & $C_1,C_2$ \\
$y_1=1$    & $0,0$       & $0,C_2$       & $C_1,0$ & $C_1,C_2$ \\
$y_1=0$    & $\overline{p_R}\cdot\overline{p_2},\overline{p_R}\cdot\overline{p_2}$                     & $0,C_2$       & $C_1,0$ & $C_1,C_2$ \\ \bottomrule
\end{tabular}
\vspace*{2mm}
\caption{Matrix for cost-pairs (agent $1$, agent $2$) when component $c_1$ is inspected for the two-agent parallel system.}
\label{tab:2-parallel}
\end{table}

Otherwise, we have three cases to consider w.r.t. relative choice values of repair costs and failure probabilities,

\begin{itemize}
    \item[Case 0:] $C_1< \overline{p_R}\cdot\overline{p_1}\cdot\overline{p_2} \le C_2$. In this case, RE-DN is the prior equilibrium. %It is necessary (to avoid negative VoI for agent 2) and sufficient to ensure that RE-DN is the only posterior NE for $y_1=0$. 
    Moreover, since $C_1< \overline{p_R}\cdot\overline{p_1}\cdot\overline{p_2}\le \overline{p_R}\cdot\overline{p_2}$, agent 1 would prefer action RE over DN and so DN-DN cannot be a posterior NE for $y_1=0$. If RE-DN is the posterior NE, then by Table \ref{tab:2-parallel} clearly VoI is non-negative for both agents. So it only remains to consider the posterior equilibrium DN-RE. If $C_2>\overline{p_R}\cdot\overline{p_2}$, DN-RE cannot be an equilibrium, and we are done. If $C_2\le \overline{p_R}\cdot\overline{p_2}$, then the subsidy $s_2^*>\min\{C_2, \overline{p_R}\cdot\overline{p_2}\}$ ensures that VoI of agent 2 is non-negative even if DN-RE is the posterior equilibrium.  %The central agent sets a subsidy of $\s_1^*\overline{p_R}\cdot\overline{p_2}$ for agent 1.
    %can set a subsidy of $s_1^*=0$ since $C_1< \overline{p_R}\cdot\overline{p_1}\cdot\overline{p_2}\le \overline{p_R}\cdot\overline{p_2}$, to ensure negative VoI for agent 1. For agent 2, the central agent needs to provide a full subsidy $C_2$ to avoid negative VoI as the prior cost for agent 2 is zero.
    \item[Case 1:] $C_2< \overline{p_R}\cdot\overline{p_1}\cdot\overline{p_2} \le C_1$. The argument for this case is symmetric to the previous case, with DN-RE as the only possible prior equilibrium.
    \item[Case 2:] $\max\{C_1,C_2\}< \overline{p_R}\cdot\overline{p_1}\cdot\overline{p_2} $. In this case DN-RE as well as RE-DN can be prior Nash equilibria. Since $\max\{C_1,C_2\}< \overline{p_R}\cdot\overline{p_1}\cdot\overline{p_2} \le \overline{p_R}\cdot\overline{p_2}$, DN-RE and RE-DN are the only candidate posterior NE. The setting of subsidies $s_i^*>\overline{p_R}\cdot\overline{p_2}\ge \overline{p_R}\cdot\overline{p_1}\cdot\overline{p_2} > \max\{C_1,C_2\}$ ensures that VoI is non-negative for both agents in this case.  %Thus the central agent needs to provide a subsidy $s^*_i=C_i$.
\end{itemize}
\end{proof}

\section{Fixed parameter intractability of designing subsidy to minimize the Price of Anarchy}\label{app:fpt}

We  strengthen our hardness results from Section \ref{sec:hardness} here, by showing that it is unlikely that the optimal subsidy design problem is even {\it fixed parameter tractable} with the subsidy budget as the fixed parameter. Formally, a problem is fixed parameter tractable (FPT) with respect to parameter $k\in\bbZ^+$ if there exists an algorithm running in $f(k)\times n^{O(1)}$ time, where $f$ is a function of $k$ which is independent of the input size $n$. The {\it W hierarchy} is a sequence of computational complexity classes which, roughly speaking, indicate fixed parameter intractability in an increasing order of conjectured hardness. A parameterized problem $L$ is in the class W$[i]$, if every instance 
$(x,k)$ can be transformed in FPT time to a combinatorial circuit that has {\it weft} at most $i$, such that 
$(x,k)\in L$ if and only if there is a satisfying assignment to the inputs that assigns {\it true} to exactly $k$ inputs. The weft is the largest number of logical units with fan-in greater than two on any path from an input to the output. Note that W$[i]\subseteq$W[$j$] for each $i\le j \in\bbZ_{\ge 0}$, and the inclusion is conjectured to be strict. We refer the reader to standard texts for further details on parameterized complexity including the W hierarchy \cite{downey2013fundamentals}.

Our hardness result involves reduction from the \textsc{Dominating-Set} problem, which is known to be W[2]-complete. Formally, the problem may be stated as follows.

\textsc{Dominating-Set}: Does a given (undirected, unweighted) graph $\cG=(V,E)$ admit a dominating set\footnote{A subset $X$ of vertices is said to be a dominating set if for every vertex $v\in V\setminus X$ there is an edge $(x,v)\in E$ for some $x\in X$.} of size $k$?

We will show W[2]-hardness of the subsidy design problem \textsc{CMG-PoAS} for optimizing the Price of Anarchy, stated in Section \ref{sec:hardness}.

\begin{theorem}\label{thm:cmg-poa-w2-hard}
    \textsc{CMG-PoAS} is W[2]-Hard.
\end{theorem}
\begin{proof}
We will reduce the \textsc{Dominating-Set} problem to \textsc{CMG-PoAS}.  Given an instance $\cG,k$ of the \textsc{Dominating-Set} problem, we create a corresponding \textsc{CMG-PoAS} problem as follows. Introduce an agent $i$ for every vertex  $i\in V$ and consider the  formula ${\phi(\x)}=\bigwedge_{i\in V}(x_i\lor \bigvee_{(i,j)\in E} x_j)$, where the clauses correspond to each vertex and consist of component states $x_i$ for that vertex and $x_j$ for all agents corresponding to the neighbors of node $i$. Set the probability distribution $\theta$ to be the constant distribution with the entire probability mass on $0^n$ (i.e.\ all the components are guaranteed to fail without repair). Set repair cost $C_i=1$ for all components $i$. Then the cost function for agent $i$ for joint action $s=(s_i,s_{-i})$ is given by\looseness-1
\begin{align*}
    l_i(s_i,s_{-i},\theta)=\bbE_{\x\sim \theta}[\text{cost}_i]=C_is_i+P_{\phi}(\theta)=1\cdot s_i+1-\phi(\x')=s_i+1-\phi(s),
\end{align*} where $x_i'=\max\{0,s_i\}=s_i$ denotes the  component state after agent $i$ takes action $s_i$.

We  proceed to characterize the set of Nash equilibria of this game. Note that  $s=0^n$ is a NE for this game, since any repair action by any agent increases the agent's cost by $1$ if the repair does not change the state $\phi$ of the system, and by $0$ otherwise (since $\phi$ is monotonic, $\phi(s)$ can only change from 0 to 1 on repair). This also implies that no agent has any incentive to switch from DN to RE. Further note that if the components corresponding to a dominating set $X$ are have their state $x_i=1$, then $\phi(\x)=1$ as each clause of $\phi$ will have a positive literal corresponding to some node in $X$ (either the first literal is in $X$ or the node in $V\setminus X$ has some neighbor in $X$) by definition of a dominating set. Moreover, if the set of states with $x_i=1$ does not correspond to a dominating set, then there is some vertex $v\in V$ such that $x_v=0$ and $x_u=0$ for all neighbors $u$ of $v$ in $\cG$, and $\phi(\x)=0$ in this case as the clause corresponding to $v$ is not satisfied. We remark that the rest of the proof is very similar to the proof of Theorem \ref{thm:cmg-poa-hard}, and is included below for completeness.

We will now show that the remaining NEs for the game correspond to minimal dominating sets of $\cG$. %If ... minimum number of components needed to make the system function --- subsidize those agents --- prove NE...
Suppose $K\subseteq[n]$ be a set of agents for which the corresponding nodes in $\cG$ constitute a minimal dominating set. Let $s_{K'}=(s_1,\dots,s_n)$ where $s_i=\I[i\in K']$, and $\I[\cdot]$ is the 0-1 valued indicator function, denote the joint action where agents in set $K'\subseteq[n]$ choose repair. Clearly,  $\phi(s_{K})=1$. If $i\in K$, agent $i$ does not reduce cost by switching from RE to DN since $K$ is a minimal dominating set therefore not repairing component $i$ causes the system to fail. As noted above, switching from DN to RE never improves an agent's cost in this game. %If $j\notin K$, agent $j$ does not reduce cost by switching from DN to RE as the system was already functioning. 
Further, if $K$ is the set of agents corresponding to a non-minimal dominating set, then there must be some agent that can reduce its cost by switching from RE to DN with system still functioning. %If $K'$ is the set of agents with just one agent short of a vertex cover, then that agent can reduce cost by switching from DN to RE. 
Finally, if  $K' \ne \emptyset$ is the set of agents with one or more agents short of a dominating set, then any agent in $K'$  can reduce its cost by switching from RE to DN. This establishes that besides $0^n$, only possible NE must correspond to a minimal dominating set. In particular, this implies that $\OPT=k^*$, where $k^*$ is the size of the smallest dominating set $K^*$ of $\cG$, and the corresponding NE is $s_{K^*}$. %Similarly, if $K\ne \emptyset$ corresponds to a non vertex-cover, then there must be some agent that can reduce its cost by switching from DN to RE.

To complete the reduction, we consider the game defined above with subsidy budget $n^*=k$. We will show a bijection between the YES and NO instances of the two decision problems to complete the proof.
If there exists a dominating set of size $k$, then the smallest dominating set $K^*$ has size $k^*\le k$. We design a subsidy scheme with  subsidy allocated to $k^*\le n^*$ agents, allocating subsidy of $1+\frac{1}{2n}$ for repair (the total subsidy is no more than $k^*+\frac{1}{2}$) to all agents in the minimum cover $K^*$, a  and subsidy of 0 otherwise. %Clearly the only agent not given subsidy will choose repair since cost of repair $1-\epsilon$ is more than compensated by the change due to system state. 
As argued above, the only candidate NE without subsidy are $0^n$ and $s_K$ corresponding to some minimal dominating set $K$. The social cost for $0^n$ is $n$ (except the trivial case $k^*=0$) and that for $s_{K^*}$ is $k^*$ which is smaller. If we provide subsidy in our scheme $\bbS$ to the agents in $K^*$ then $0^n$ is no longer an NE. In particular, every subsidized agent in $K^*$ would now always choose repair at subsidized cost $-\frac{1}{2n}$ over doing nothing (even when the system stays broken after the repair). Thus, the social cost plus subsidy is $k^*(-\frac{1}{2n})+k^*(1+\frac{1}{2n})=k^*$, and the price of anarchy for the subsidy scheme is 1 (any agent outside of $K^*$ will prefer to do nothing to reduce their cost).
On the other hand, suppose that $\cG$ has no dominating set of size $k$. Any minimal dominating set of $\cG$ therefore has size at least $k+1$. Suppose $\bbS$ is a subsidy scheme with  subsidy allocated to most $k$ agents. We will show that PoA$(\bbS)>1$. Indeed, let $K$ be an arbitrary minimal  dominating set of $\cG$. By pigeonhole principle, at least one agent in $K$ does not receive subsidy. Let $K'$ denote the (possibly empty) set of agents that receive subsidy greater than 1. As argued above, these agents will always prefer the repair action. Thus, $s_{K'}$ is a Nash equilibrium (agents without subsidy never have incentive to switch from DN to RE in this game) in the subsidized game. Note that $\phi(s_{K'})=0$ since at least one vertex is missed in any minimal dominating set $K$, and therefore we must have an uncovered edge. %Further let $\tilde{K}$ be a minimal vertex cover of $\cG\setminus K'$ (the subgraph of $\cG$ induced by removal of vertices in $K'$). $s_{\tilde{K}\cup K'}$ is also a Nash equilibrium with $\phi(s_{\tilde{K}\cup K'})=0$. Indeed, if any vertex in $\tilde{K}$ switches from RE to DN, the system stops functioning and therefore the agent has no incentive to switch. 
Now $\OPT\le \text{cost}(s_{{K}})=|K|<n$. Therefore, PoA$(\bbS)> 1$ as total cost plus subsidy is at least $n$ for $s_{K'}$.
% Conversely, suppose there exists a subsidy scheme $\bbS$ with total subsidy at most $s^*=k-1$, such that the system functions in any  NE. Then either the system functions in $0^n$ and there is a $0$-cover for graph $\cG$, or the  NE corresponds to a minimal vertex-cover $K'$ of size $k'$ as the repaired components (in the subsidized game). In the latter case, we seek to show $k'\le k$ to complete the proof. Since the system is not assumed to function for $s=s_{\kappa}$ for repair actions by agents in any $\kappa\subset K'$ with $\kappa=k'-1$, we
% need to provide subsidy at least $1-\epsilon$ to all but one agent in $K'$. That is, $k-1=s^*\ge (1-\epsilon)(k'-1)>k'-1-(k'-1)/n$ (since $\epsilon<1/n$), or  $k\ge k'$ since both $k,k'$ are integers.
%
    %OPT = $k(1-\epsilon)$
\end{proof}

We remark that a similar stronger hardness result can be proved for the other studied objectives for subsidy design as well, i.e.\  the decision problems \textsc{CMG-System} and \textsc{CIG-VoI} are also W[2]-hard, by adapting the proofs of Theorem \ref{thm:cmg-hard} and \ref{thm:cig-hard}
along the lines of the above.

\section{Additional proofs from Section \ref{sec:hardness}}\label{app:hardness}

\textbf{Theorem \ref{thm:cmg-hard} (restated).}\textit{
    \textsc{CMG-System} is NP-Hard.
}
\begin{proof}
We will reduce the \textsc{Vertex-Cover} problem to \textsc{CMG-System}.  Given an instance $\cG,k$ of the \textsc{Vertex Cover} problem, we create a corresponding \textsc{CMG-System} problem as follows. Introduce an agent $i$ for every vertex  $i\in V$ and consider the (2-CNF) formula ${\phi(\x)}=\bigwedge_{(i,j)\in E}(x_i\lor x_j)$, where the clauses consist of states $x_i,x_j$ for all pairs $i,j$ of agents/components corresponding to edges in $E$. Set the probability distribution $\theta$ to be the constant distribution with the entire probability mass on $0^n$ (i.e.\ all the components are guaranteed to fail without repair). Set repair cost $C_i=1-\epsilon$ for $0<\epsilon<1/n$ for all components $i$.

Observe that, 
\begin{align*}
    l_i(s_i,s_{-i},\theta)&=\bbE_{\x\sim \theta}[\text{cost}_i]\\&=C_is_i+P_{\phi}(\theta)\\&=(1-\epsilon)s_i+1-\phi(\x')\\&=(1-\epsilon)s_i+1-\phi(s),
\end{align*} where $x_i'=\max\{0,s_i\}=s_i$ denotes the state of component $i$ after agent $i$ takes action $s_i$. Note that WLOG $s=0^n$ is a NE for this game, since any repair action by any agent increases the agent's cost by $(1-\epsilon)$ if the repair does not change the state $\phi$ of the system, and $0$ otherwise (since $\phi$ is monotonic, it can only change from 0 to 1). We will now show that the remaining NEs for the game correspond to vertex covers of $\cG$. %If ... minimum number of components needed to make the system function --- subsidize those agents --- prove NE...

Suppose $K\subseteq[n]$ be a set of agents for which the corresponding nodes in $\cG$ constitute a minimal vertex cover. Let $s_{K'}=(s_1,\dots,s_n)$ where $s_i=\I[i\in K']$, where $\I[\cdot]$ is the 0-1 valued indicator function and $K'\subset K$ with $|K'|=|K|-1$. Similarly, $s_{K}:=(s_1,\dots,s_n)$ where $s_i=\I[i\in K]$. Clearly, $\phi(s_{K'})=0$ and $\phi(s_{K})=1$. If $i\in K$, agent $i$ does not reduce cost by switching from RE to DN since $K$ is a minimal cover therefore not repairing component $i$ causes the system to fail. If $j\notin K$, agent $j$ does not reduce cost by switching from DN to RE as the system was already functioning. Further, if $K$ is the set of agents corresponding to a non-minimal vertex cover, then there must be some agent that can reduce its cost by switching from RE to DN. %If $K'$ is the set of agents with just one agent short of a vertex cover, then that agent can reduce cost by switching from DN to RE. 
Finally, if  $K' \ne \emptyset$ is the set of agents with one or more agents short of a vertex cover, then any agent in $K'$  can reduce its cost by switching from RE to DN. This establishes that besides $0^n$, only possible NE must correspond to a minimal vertex cover. %Similarly, if $K\ne \emptyset$ corresponds to a non vertex-cover, then there must be some agent that can reduce its cost by switching from DN to RE.

To complete the reduction, we consider the game defined above and subsidy budget $s^*=k-1$. If there exists a vertex cover of size $k$, then a minimal cover $K'$ has size $k'\le k$. We design a subsidy scheme with total subsidy $s'=k'-1\le s^*$, allocating subsidy of 1 for repair to all but one agent in the minimal cover $K'$ and subsidy of 0 otherwise. Clearly the only agent not given subsidy will choose repair since cost of repair $1-\epsilon$ is more than compensated by the change due to system state. As argued above, the only candidate NE are $0^n$ and $s_K$ corresponding to some minimal vertex cover $K$. Without the subsidy, the social cost for $0^n$ is $n$ (except the trivial case $k=0$) and that for $s_{K'}$ is $k'(1-\epsilon)$ which is smaller. If we provide subsidy in our scheme $\bbS$ to the agents in $K'$ except one then $0^n$ is no longer an NE. In particular, every subsidized agent in $K'$ would now choose repair at cost $1-\epsilon$ over doing nothing (even when the system stays broken after the repair) and the remaining agent in $K'$ will choose repair if the system is broken. Thus, the system functions in all NEs.

Conversely, suppose there exists a subsidy scheme $\bbS$ with total subsidy at most $s^*=k-1$, such that the system functions in any  NE. Then either the system functions in $0^n$ and there is a $0$-cover for graph $\cG$, or the  NE corresponds to a minimal vertex-cover $K'$ of size $k'$ as the repaired components (in the subsidized game). In the latter case, we seek to show $k'\le k$ to complete the proof. Since the system is not assumed to function for $s=s_{\kappa}$ for repair actions by agents in any $\kappa\subset K'$ with $\kappa=k'-1$, we
need to provide subsidy at least $1-\epsilon$ to all but one agent in $K'$. That is, $k-1=s^*\ge (1-\epsilon)(k'-1)>k'-1-(k'-1)/n$ (since $\epsilon<1/n$), or  $k\ge k'$ since both $k,k'$ are integers.
\end{proof}

\noindent\textbf{Theorem \ref{thm:cig-hard} (restated).} \textit{
    \textsc{CIG-VoI} is NP-Hard.
}
\begin{proof}
We will reduce \textsc{Vertex-Cover} to \textsc{CIG-VoI}. Recall that  \textsc{Vertex-Cover} is the following decision problem---given a graph $\cG=(V,E)$ and integer $k$, does there exist a vertex cover of size $k$? 
% Suppose we are given an instance $G=(V,E),k$ of the \textsc{Unique-Vertex-Cover} problem, i.e.\ does there exist a unique vertex cover of $G$ that has size at most $k$? 
In  contrast to proof of Theorem \ref{thm:cmg-hard}, we will need to set a slightly higher subsidy and carefully adapt the argument to the value of information computation. %consider the problem of determining whether there is a vertex cover of $G$ with size at most $k$ here, as opposed to the \textsc{Vertex-Cover} problem where we considered whether any  vertex cover exists with a given size $k$. \purple{\textsc{Unique-Vertex-Cover} is known to be NP-hard \cite{}.}

We will create an instance of the \textsc{CIG-VoI} problem with $n=|V|+1$ agents, an agent each for vertices in $\cG$ and an additional agent $j=|V|+1$. The construction of the instance and several arguments are similar to the proof of Theorem \ref{thm:cmg-hard}. The key difference is that we have an additional agent $j$ that does not correspond to a vertex in $\cG$. We will consider the inspection of the component $c_j$ corresponding to this agent.  %we create a corresponding \textsc{CIG-VoI} problem as follows. 

Consider the (2-CNF) formula ${\phi(\x)}=\bigwedge_{(u,v)\in E}(x_u\lor x_v)$, where the clauses consist of states $x_u,x_v$ for all pairs $u,v$ of agents/components corresponding to edges in $E$. Set the probability distribution $\theta$ to be the constant distribution with the entire probability mass on $0^n$ (i.e.\ all the components are guaranteed to fail without repair). Set repair cost $C_i=1-\epsilon$ for $0<\epsilon<1/n$ for all components $i\in[|V|]$ and $C_j=1$. 
Therefore, 
$l_i(s_i,s_{-i},\theta)=(1-\epsilon)s_i+1-\phi(s)$ for $i\in[|V|]$ and $l_j(s_j,s_{-j},\theta)=2-\phi(s)$. Note that WLOG $s=0^n$ is a NE for this game, since any repair action by any agent increases the agent's cost by $(1-\epsilon)$ if the repair does not change the state $\phi$ of the system, and $0$ otherwise (since $\phi$ is monotonic, it can only change from 0 to 1). As shown in the proof of Theorem \ref{thm:cmg-hard}, the remaining NEs for the game correspond to minimal vertex covers of $\cG$. Moreover, since $\phi(s)$ does not depend on $s_j$ by definition, agent $j$ will always prefer action DN for any $s_{-j}$. Let $s_{K}:=(s_1,\dots,s_n)$ where $s_i=\I[i\in K]$ for any $K\subseteq V$.

Notice that the prior and posterior games (for inspection of $c_j$) have identical cost matrices and equilibria for this component inspection game. To complete the reduction, we consider the game defined above and subsidy budget $s^*=k$. Suppose there exists a vertex cover of $\cG$ of size $k$, then there exists a minimal vertex cover, say $K'$ of size $k'\le k$. We design a subsidy scheme with total subsidy $s'=k'\le s^*$, allocating subsidy of 1 for repair to exactly the agents in $K'$. Clearly, all subsidized agents will always choose repair. We claim that the only NE after subsidy is $s_{K'}$. Indeed, by the above observation, any NE must be $s_{K}$ for some $K\supseteq K'$. But if $K\ne K'$, then any agent in $K\setminus K'$ will choose to do nothing as the system would function without their repair action. Since there is exactly one NE in prior and posterior games, Value of Information is exactly zero for all agents. 

Conversely, if there is no vertex cover of size $k$, then we show that no subsidy scheme with $s^*\le k$ may guarantee that no agent has negative value of information when a single component $j$ is inspected. In this case the any vertex cover $K'$ has $|K'|>k$. We consider two cases:
\begin{itemize}
    \item[Case 0:] $|K'|>k+1$. Observe that if the subsidy provided to an agent is less than the repair cost $1-\epsilon$, then the agent will prefer to do nothing, except when repairing their component (given other players actions) changes the system state from 0 to 1. However, with a budget of $s^*=k$, the maximum number of agents that can receive a subsidy of at least $1-\epsilon$ is at most $\frac{k}{1-\epsilon}<k+1$, since $\epsilon<1/n$ and $k<n$ WLOG. Thus, at least two agents are without subsidy at least $1-\epsilon$ in $K'$, and these agents will prefer to do nothing if only the agents $K^*=\{i\in[|V|]\mid s^*_i>1-\epsilon\}$ with sufficient subsidy choose repair. Observe that both $s_{K'}$ and $s_{K^*}$ are Nash equilibria in the subsidized game. If $s_{K'}$ is chosen as the prior equilibrium and $s_{K^*}$ a posterior equilibrium, then the value of information for agents in $K'\setminus K^*$ is $(1-\epsilon)-1<0$ since the system does not work in $s_{K^*}$.
    \item[Case 1:] $|K'|=k+1$. In this case, the only new possibility is if at least $1-\epsilon$ subsidy is provided to all but one agent (say $k'$) in $K'$, then the remaining agent will choose repair. Without loss of generality, we assume $k+1<n$, and that $K'$ is a minimal vertex cover. Now if $v_{k'}$ denotes the vertex corresponding to agent $k'$ in $\cG$, and let $E'$ denote the set of edges incident on vertices  $V'\subseteq V\setminus K'$ with one end at $v_{k'}$. $E'$ is non-empty, as otherwise $K'\setminus \{v_{K'}\}$ would constitute a vertex cover for $\cG$ contradicting minimality of $K'$. Observe that $K_1=K'\setminus \{v_{K'}\} \cup V'$ is a vertex cover. Let $K_2$ denote a minimal vertex cover which is a subset of $K_1$. Now both $s_{K_2}$ and $s_{K'}$ are NEs in the subsidized game. If the former is set as the prior equilibrium, and the latter a posterior equilibrium then, the value of information is negative (equals $0-(1-\epsilon)=\epsilon-1$) for agent $k'$. 
\end{itemize}

\noindent Thus in either case, some agent has a negative value of information when the subsidy budget is $k$. This completes the reduction.
\end{proof}

\noindent\textbf{Theorem \ref{thm:csg-voi-hard} (restated).} \textit{
    \textsc{CSG-VoI} is NP-Hard.
}

\begin{proof}
Recall the minimum subset cover problem instance $(\cU,\cS,k)$ is given as follows.

\textsc{Min-Set-Cover}: Given a finite set $\cU$ of size $n$ and a collection $\cS\subseteq2^\cU$ of subsets of $\cU$, does there exist a subset $S$ of $\cS$ of size $k<n$ that covers $\cU$, i.e.\ $\cup_{S_i\in S}S_i=\cU$?

Consider the cost-sharing game $G$ with $n+1$ agents that correspond to elements of $\cU$ via a bijection $\zeta:\cU\rightarrow[n]$ plus additional agent $n+1$, set of actions $\cA=\cS\uplus \cT\uplus \cV\uplus \{\{n+1\}\}$ with $\cT=\{\{1\},\dots,\{n\}\}$ and $\cV=\{\{1,\dots,n\}\}$ being two distinct collections of actions available uniquely to each agent and $\{n+1\}$ corresponds to a unique action $a_{n+1}$ available to agent $n+1$. Function $f:S\mapsto\{\zeta(s)\mid s\in S\}$  assigns action $S$ to agents corresponding its elements, and cost function $c$ is given by
%the constant function $c(S)=1$ for $S\in\cS\uplus \cT\uplus \cV$.
\begin{align*}
    c(S)=\begin{cases}
        1 &\text{ if }S\in\cS\uplus \cT,\\
        n-\epsilon &\text{ if }S\in\cV,\\
        \infty& \text{ if }S=\{n+1\},
    \end{cases}
\end{align*}
\noindent for $0<\epsilon<1$. 
We set $s^*=k$.

Given a YES instance of \textsc{Min-Set-Cover}, we show that the above contruction yields a YES instance of \textsc{CSG-VoI}. Let $k^*$ denote the size of the smallest set cover of $(\cU,\cS)$. In the YES instance this means $k^*\le k$, and we provide subsidy of value 1 to all actions corresponding the sets in the smallest set cover. The total subsidy used is $k^*\le k=s^*$.  Any assignment of the actions to agents consistent with the set cover is a Nash Equilibrium with social cost 0 and, any other state is not an NE as in the proof of Theorem \ref{thm:csg-poa-hard}. Revealing the cost of action $a_{n+1}$ does not impact the choices of agents $\{1,\dots,n\}$ as the action is not available to them, and agent $n+1$ either since the only available action is $a_{n+1}$. Thus, the cost of any agent in $[n]$ is 0 in any prior or posterior equilibria and the value of information is zero. The cost of agent $n+1$ can only decrease when it is revealed, and therefore VoI is non-negative for agent $n+1$ as well.

% Next we argue that the social cost OPT is $k^*$ in this case, which would imply $\text{PoA}(\bbS)=1$ for the above subsidy scheme. If possible, let there be a state $s=(a_1,\dots,a_n)$ with social cost less than $k^*$. Let $s_\cS=\{a_i\in s\mid a_i\in \cS\}$, $s_\cT=\{a_i\in s\mid a_i\in \cT\}$ and $s_\cV=\{a_i\in s\mid a_i\in \cV\}$. Now the smallest set cover of $\cU\setminus s_{\cT}\setminus s_{\cV}$ has size at least $k^*-|s_{\cT}|-|s_{\cV}|$. But then, $\text{cost}(s)\ge (k^*-|s_{\cT}|-|s_{\cV}|) +|s_{\cT}|+2|s_{\cV}|=k^*+|s_{\cV}|\ge k^*$, a contradiction. Thus, $\text{cost}(OPT)=k^*$ and we have $\text{PoA}(\bbS)=1$ in this case.%. Since OPT defines a set cover (union of actions assigned to all the agents cover them all), we also have .
%\red{looks like OPT can be less than $k^*$, can we avoid this? maybe only add singletons in $\cT$ for those not already in $\cS$? or set $\epsilon=0$?}

Conversely, consider a NO instance of \textsc{Min-Set-Cover}. The smallest set cover of $(\cU,\cS)$ has size $k^*>k$. Consider any subsidy scheme $\bbS$ assigning subsidy of value 1 to at most $k$ actions. All the agents that have at least one of their actions subsidized will select a subsidized action in any NE. Since the smallest set cover has size greater than $k$, there exists at least one agent  with no subsidized action. Let $A\subset[n]$ denote the set of these agents. We will show the existence of two Nash equilibria with different  costs for some agent in $A$, implying that $\text{VoI}<0$ for that agent by selecting the higher cost NE as the posterior equilibrium and the lower cost NE for the prior. Consider states $s_\cT$ and $s_\cV$ for which agents in $A$ are assigned the corresponding actions from $\cT$ and $\cV$ respectively, and agents in $[n]\setminus A$ are assigned one of the subsidized actions in either case. We have for any agent $i\in A$, $\text{cost}_i(s_\cT)=1$ but $\text{cost}_i(s_\cV)=\frac{n-\epsilon}{|A|}\ne \text{cost}_i(s_\cT)$.
\end{proof}

\section{Additional proofs from Section \ref{sec:data-driven}}\label{app:data-driven}

We include below  proof details for missing proofs for our sample complexity and online learning results.

\subsection{Sample complexity results}

\noindent\textbf{Theorem \ref{thm:sc-nonuniform} (restated).} \textit{
For any $\epsilon,\delta>0$ and any distribution $\cD$ over component maintenance games with $n$ agents, $O(\frac{n^2H^2}{\epsilon^2}(n^2+\log\frac{1}{\delta}))$ samples of the component maintenance game drawn from $\cD$ are sufficient to ensure that with probability at least $1-\delta$ over the draw of the samples, the best vector of subsidies over the sample $\hat{\sigma^*}$ has expected loss $L_{\text{prior}}$ that is at most $\epsilon$ larger than the expected loss of the best vector of subsidies over $\cD$.
}
\begin{proof}
Consider any fixed game $G$. Given any joint action $s=(s_i,s_{-i})$, an agent $i$'s decision for switching their action from $s_i$ to $\overline{s_i}:=1-s_i$ is determined by the inequality  $(C_i-\sigma^*_i)s_i+1-\Phi(s_i,s_{-i}')\le (C_i-\sigma^*_i)(\overline{s_i})+1-\Phi(\overline{s_i},s_{-i}')$, where $\Phi(s):=\bbE_\theta[\phi(\x'(s))]$, which is linear in $\sigma_i^*$, the subsidy provided to agent $i$. Thus, for each agent $i$, we have at most $2^{n-1}$ axis-parallel hyperplanes in the parameter space in $\R^n$, or a total of $n2^{n-1}$ hyperplanes overall. Moreover, the loss function as a function of the subsidy parameters is piecewise constant in any fixed piece. Therefore the loss function class is $(n,n2^{n-1})$-delineable in the sense of \cite{balcan2018general}, that is the subsidy parameter space is Euclidean in $n$ dimensions and is partitioned by at most $n2^{n-1}$ hyperplanes into regions where the loss is linear (in this case constant) in the parameters.\purple{better bound?}

% Observe that if actions of all agents except agent $i$ i.e.\ $s_{-i}'$ is fixed ($2^{n-1}$ possibilities), then the agent $i$ will repair its component provided the  cost under subsidy $C_i-s^*+1-\phi(1,s_{-i}')$ is smaller than $1-\phi(0,s_{-i}')$. That is we have at most $2^{n-1}$ critical values of $s^*$ where the preferred action of agent $i$ may change. Over $n$ agents, we have at most $n2^{n-1}$ such points. Moreover, the loss is piecewise constant in any fixed piece.

\noindent By using a general result from \cite{balcan2018general} which states \purple{add stmt to appendix?} that a $(d,t)$-delineable function class has pseudo-dimension $O(d\log(dt))$, the above structural argument implies that the pseudo-dimension of the loss function class parameterized by the subsidy value is at most $O(n\log(n^22^{n-1}))=O(n^2)$ and the sample complexity result follows \cite{anthony1999neural,balcan2020data}.
\end{proof}

\noindent\textbf{Theorem \ref{thm:sc-nonuniform-avg} (restated).} \textit{
Suppose $\text{subs}_i(s), C_i\le H$ for each $i\in[n]$. For any $\epsilon,\delta>0$ and any distribution $\cD$ over component maintenance games with $n$ agents, $O(\frac{n^2H^2}{\epsilon^2}(n^2+\log\frac{1}{\delta}))$ samples of the game drawn from $\cD$ are sufficient to ensure that with probability at least $1-\delta$ over the draw of the samples, the best vector of subsidies over the sample $\hat{\sigma^*}$ has expected loss $\tilde{L}_{\text{prior}}$ that is at most $\epsilon$ larger than the expected loss $\tilde{L}_{\text{prior}}$ of the best vector of subsidies over $\cD$.
}

\begin{proof}
Our proof of Theorem \ref{thm:sc-nonuniform} above establish a piecewise constant structure for  the $L_{\text{prior}}$ loss given any fixed game $G$ and a bound on the number of hyperplanes in the subsidy parameter space that demarcate the pieces. On a fixed side of each of these $n2^{n-1}$ hyperplanes, each agent $i$ has a fixed preferred action given any $s_{-i}$, and therefore the set $\cS_{NE}(\bbS)$ of Nash equilibria is fixed over any piece. Indeed any joint action $s$ is a NE given subsidy scheme $\bbS$ if and only if $s_i$ is preferred given $s_{-i}$ for all agents $i$. Thus, $\tilde{L}_{\text{prior}}$ is also piecewise constant over the pieces induced by the same hyperplanes. Therefore, the same upper bound on the sample complexity can be obtained following the arguments in the proof of Theorem \ref{thm:sc-nonuniform}.
\end{proof}

\noindent \textbf{Learning conditional subsidies.} We will now obtain a sample complexity bound for  non-uniform subsidy schemes in component inspection games, where the central agent provides subsidy only in posterior games. Let $\bbS$ denote the subsidy scheme. Let ${\cS}_{NE}^0(\mathbb{S})$ (resp.\ ${\cS}_{NE}^1(\mathbb{S})$) denote the subset of states in $S$ corresponding to Nash equilibria when the cost for agent $i$ is the subsidized cost $\text{cost}_i^{{\bbS},0}$ (resp.\ $\text{cost}_i^{{\bbS},1}$) for posterior $y_j=0$ (resp.\ $y_j=1$). For component inspection game of component $c_1$ (wlog), define %Thus, the subsidy scheme consists of a vector of  multiple real-valued parameters, one for each agent/component. 

$${L}_{\text{posterior}}(\bbS):=  p_1{L}_{\text{posterior}}^1(\bbS)+\overline{p_1}{L}_{\text{posterior}}^0(\bbS),$$

\noindent where ${L}_{\text{posterior}}^i(\bbS):=\max_{s\in\cS_{NE}^i(\bbS)}\text{cost}^{{\bbS},i}(s)+\text{subs}^i(s)$. We assume that $\text{subs}_i^j(s)\le H, C_i\le H$ for each $i\in[n]$, $j\in\{0,1\}$, thus  ${L}_{\text{posterior}}(\bbS)\le (2H+1)n$. In this case too, we are able to give a polynomial sample complexity for the number of games needed to learn a good value of subsidy with high probability over the draw of game samples coming from some fixed but unknown distribution.% (proof in Appendix \ref{app:data-driven}).

\begin{theorem}\label{thm:sc-inspection}
For any $\epsilon,\delta>0$ and any distribution $\cD$ over component inspection games with $n$ agents, $O(\frac{n^2H^2}{\epsilon^2}(n^2+\log\frac{1}{\delta}))$ samples of the component inspection game drawn from $\cD$ are sufficient to ensure that with probability at least $1-\delta$ over the draw of the samples, the best vector of subsidies over the sample $\hat{\sigma^*}$ has expected loss $L_{\text{posterior}}$ that is at most $\epsilon$ larger than the expected loss of the best vector of subsidies over $\cD$.
\end{theorem}

% \noindent\textbf{Theorem \ref{thm:sc-inspection} (restated).} \textit{
% For any $\epsilon,\delta>0$ and any distribution $\cD$ over component inspection games with $n$ agents, $O(\frac{n^2H^2}{\epsilon^2}(n^2+\log\frac{1}{\delta}))$ samples of the component inspection game drawn from $\cD$ are sufficient to ensure that with probability at least $1-\delta$ over the draw of the samples, the best vector of subsidies over the sample $\hat{s^*}$ has expected loss $L_{\text{posterior}}$ that is at most $\epsilon$ larger than the expected loss of the best vector of subsidies over $\cD$.
% }
\begin{proof}
Consider any fixed game $G$. Given any joint action $s=(s_i,s_{-i})$, an agent $i$'s decision for switching their action from $s_i$ to $\overline{s_i}:=1-s_i$ in posterior game $y_1=y$ is determined by the inequality  $(C_i-s^y_i)s_i+1-\Phi(s_i,s_{-i}')\le (C_i-s^y_i)(\overline{s_i})+1-\Phi(\overline{s_i},s_{-i}')$ (with  $\Phi(s):=\bbE_{\Tilde{\theta}^{1,y}}[\phi(\x'(s))]$), which is linear in $s_i^y$, the subsidy provided to agent $i$ conditional on $y_1=y$. Thus, for each agent $i$, we have at most $2\cdot 2^{n-1}$ axis-parallel hyperplanes in the parameter space in $\R^{2n}$, or a total of $n2^{n}$ hyperplanes overall. Moreover, the loss function as a function of the parameters is piecewise constant in any fixed piece. Therefore the loss function class is $(2n,n2^{n})$-delineable in the sense of \cite{balcan2018general}. The rest of the argument is similar to the proof of Theorem \ref{thm:sc-nonuniform}, differing only in some multiplicative constants.%, that is the subsidy parameter space is Euclidean in $n$ dimensions and is partitioned by at most $n2^{n-1}$ hyperplanes into regions where the loss is linear (in this case constant) in the parameters.\purple{better bound?}
% Observe that if actions of all agents except agent $i$ i.e.\ $s_{-i}'$ is fixed ($2^{n-1}$ possibilities), then the agent $i$ will repair its component provided the  cost under subsidy $C_i-s^*+1-\phi(1,s_{-i}')$ is smaller than $1-\phi(0,s_{-i}')$. That is we have at most $2^{n-1}$ critical values of $s^*$ where the preferred action of agent $i$ may change. Over $n$ agents, we have at most $n2^{n-1}$ such points. Moreover, the loss is piecewise constant in any fixed piece.
% \noindent By using a general result from \cite{balcan2018general} which states that a $(d,t)$-delineable function class has pseudo-dimension $O(d\log(dt))$, the above structural argument implies that the pseudo-dimension of the loss function class parameterized by the subsidy value is at most $O(n\log(n^22^{n-1}))=O(n^2)$ and the sample complexity result follows \cite{anthony1999neural,balcan2020data}.
%
\end{proof}

\noindent\textbf{Theorem \ref{thm:sc-nonuniform-csg} (restated).} \textit{For any $\epsilon,\delta>0$ and any distribution $\cD$ over fair cost sharing games with $N$ agents and $|\cA|$ actions, $O\left(\frac{|\cA|^2H^2}{\epsilon^2}(|\cA|\log |\cA|N+\log\frac{1}{\delta})\right)$ samples of the game drawn from $\cD$ are sufficient to ensure that with probability at least $1-\delta$ over the draw of the samples, the best vector of subsidies over the sample $\hat{\sigma^*}$ has expected loss $L_{\text{prior}}$ that is at most $\epsilon$ larger than the expected loss of the best vector of subsidies over $\cD$.}

\begin{proof}
    Consider any fixed game $G$. Given any joint action $s=(s_i,s_{-i})$, an agent $i$'s decision for switching their action from $s_i$ to $\overline{s_i}\ne s_i$ is given as follows.

    % \begin{itemize}
    %     \item[(a)] $s_i\in s_{-i},\overline{s_i}\in s_{-i}$. 
        
        Let $k=\sum_{j=1}^N\I[s_j=s_i]$ and $\overline{k}=\sum_{j=1}^N\I[s_j=\overline{s_i}]$. Clearly, $k\ge 1$ and $\overline{k}\ge 0$. Agent $i$'s decision to switch from action $s_i$ to $\overline{s_i}$ is governed by the inequality
        $$\frac{c(s_i)-c^\bbS(s_i)}{k}\lessgtr \frac{c(\overline{s_i})-c^\bbS(\overline{s_i})}{\overline{k}+1}.$$
    % \end{itemize}

    Thus, across all agents, we get at most $|\cA|^2N^2$ distinct hyperplanes in the subsidy parameter space corresponding to $c^{\bbS}$, corresponding to choices for $s_i,\overline{s_i},k,\overline{k}$.
    Moreover, the loss function as a function of the subsidy parameters is piecewise constant, since in any fixed piece induced by the above hyperplanes the set of NEs is fixed and the reduction in social cost $\text{cost}^{\bbS}(s)$ is exactly compensated by the increase in subsidy $\text{subs}(s)$ as the subsidy is varied within the piece. Therefore the loss function class is $(|\cA|,|\cA|^2N^2)$-delineable in the sense of \cite{balcan2018general}, that is the subsidy parameter space is Euclidean in $|\cA|$ dimensions and is partitioned by at most $|\cA|^2N^2$ hyperplanes into regions where the loss is linear (in this case constant) in the parameters.

    \noindent By using a general result from \cite{balcan2018general} which states  that a $(d,t)$-delineable function class has pseudo-dimension $O(d\log(dt))$, the above structural argument implies that the pseudo-dimension of the loss function class parameterized by the subsidy value is at most $O(|\cA|\log(|\cA|^3N^2))=O(|\cA|\log(|\cA|N))$ and the sample complexity result follows \cite{anthony1999neural,balcan2020data}.
\end{proof}

\noindent\textbf{Theorem \ref{thm:sc-voi} (restated).} \textit{
For any $\epsilon,\delta>0$ and any distribution $\cD$ over fair cost sharing games with $N$ agents and $|\cA|$ actions, $O\left(\frac{|\cA|^2H^2}{\epsilon^2}(|\cA|\log |\cA|N+\log\frac{1}{\delta})\right)$ samples of the game drawn from $\cD$ are sufficient to ensure that with probability at least $1-\delta$ over the draw of the samples, the best vector of subsidies over the sample $\hat{\sigma^*}$ has expected loss $L_{\text{VoI}}$ that is at most $\epsilon$ larger than the expected loss of the best vector of subsidies over $\cD$.}

\begin{proof}
    The key arguments are similar to the proof of Theorem \ref{thm:sc-nonuniform-csg}. We can show that the loss function class as a function of the subsidy is $(|\cA|,2|\cA|^2N^2)$-delineable and the result follows.
\end{proof}

\subsection{Online subsidy design}

A key tool is the following theorem due to \cite{balcan2020semi}. We present a simplified version (setting $M=1$ in their general result) as it will suffice for us.

\begin{theorem}[Theorem 5, \cite{balcan2020semi}] \label{thm:dispersion} Let $\cC \subset \R^d$ be contained in a ball of radius $R$ and $l_1,  \dots , l_T: \cC \rightarrow [0, H]$ be piecewise $\ell$-Lipschitz
functions that are $\frac{1}{2}$-dispersed. Then there is an online learning algorithm with regret bound $\tilde{O}(\sqrt{dT}+K_T)$, where  the soft-O notation suppresses terms in $R,H$ and logarithmic terms, provided $$\bbE\left[
\max\!_{\substack{\rho,\rho'\in\C\\||\rho-\rho'||_2\le\frac{1}{\sqrt{T}}}}\!\big\lvert
\{ t\!\in\![T] \mid |l_t(\rho)-l_t(\rho')|>\ell||\rho\!-\!\rho'||_2\} \big\rvert \right]
\le  K_T.$$
\end{theorem}

Thus, it is sufficient to establish $\frac{1}{2}$-dispersion of the sequence of loss functions, and provide a bound on the expected number of non-Lipschitz losses between worst-case pair of points in the domain, in order to establish our results in Section \ref{sec:data-driven-ol}.

\noindent\textbf{Theorem \ref{thm:online-regret} (restated).} \textit{
Suppose Assumption \ref{asm1} holds. Let $L_1,\dots, L_T:[0,H]\rightarrow[0,(2H+1)N]$ denote an independent sequence of losses as a function of the subsidy value $\sigma$, in an online sequence of $T$ component maintenance games. 
Then sequence of functions is $\frac{1}{2}$-dispersed and there is an online algorithm with $\Tilde{O}(\sqrt{nT})$ expected regret.
%    There exists an online learning algorithm for setting the subsidy in an online sequence of component inspection games such that the expected total regret of the algorithm is $\tilde{O}(\sqrt{nT})$ in $T$ rounds.
}

\begin{proof}
% \red{TODO: Define dispersion}
    The key idea is to observe that each loss function $L_t$ has  at most $K=n2^{n-1}$ discontinuities (as in proof of Theorem \ref{thm:spdim} above). Further, any interval of length $\epsilon$ has at most ${O}(\kappa\epsilon)$  discontinuities in that interval for any fixed loss function $L_t$, in expectation. This uses Assumption \ref{asm1}, and the observation that critical values of $\sigma^*$ are linear in some cost $C_i$. Indeed, as shown in the proof of Theorem \ref{thm:spdim}, the critical values of subsidy are given by $\sigma^*=C_i+\bbE_\theta\phi(0,s_{-i}')-\bbE_\theta\phi(1,s_{-i}')$ for some agent $i$ and joint action $s_{-i}'$.
    By Theorem 7 of \cite{balcan2020semi} then the expected number of non-Lipschitz losses on the worst interval of length $\epsilon$ is at most $K_T= \tilde{O}(T\epsilon+\sqrt{T\log(TK)})=\tilde{O}(\sqrt{(n+\log T)T})$ for $\epsilon\ge\frac{1}{\sqrt{T}}$.
    
    This implies $1/2$-dispersion of the sequence of loss functions in the sense of Definition \ref{def:dis}. %As in the proof of Theorem 4.1 above, the discontinuities in $L(\sigma)$ occur at points where $s^*$ is a linear function of $C_i$, for the repair cost of some agent $i$. 
    Now Theorem \ref{thm:dispersion} implies the desired regret bound. %
\end{proof}

\noindent\textbf{Theorem \ref{thm:online-regret-nu} (restated).} \textit{
Suppose Assumption \ref{asm1} holds. Let $L_1,\dots, L_T:[0,H]^n\rightarrow[0,(2H+1)N]$ denote an independent sequence of losses ${L}_{\text{prior}}(\bbS)$ as a function of the  subsidy scheme $\bbS$ parameterized by subsidy values $\{\sigma_i\}$, in an online sequence of $T$ component maintenance games. 
Then the sequence of functions is $\frac{1}{2}$-dispersed and there is an online algorithm with $\Tilde{O}(\sqrt{nT})$ expected regret.
}

\begin{proof}
% \red{TODO: Define dispersion}
    Each loss function $L_t$ can be partitioned by at most $K=n2^{n-1}$ axis-parallel hyperplanes into pieces such that the loss function is constant over each piece (as in proof of Theorem \ref{thm:sc-nonuniform} above). Further, the offset of each of these hyperplanes is linear in some cost $C_i$ and therefore along any $\sigma^*_i$-aligned line segment of length at most $\epsilon$, there are at most ${O}(\kappa \epsilon{T})$  functions that are non-Lipschitz on that segment, in expectation.
    Now for any pair of subsidy vectors $\sigma,\sigma'$ such that $||\sigma-\sigma'||_2\le \frac{1}{\sqrt{T}}$, we can bound the expected number of non-Lipschitz functions for which $|L_t(\sigma)-L_t(\sigma')|>0$ by taking an axis aligned path connecting $\sigma,\sigma'$ and adding up the number of non-Lipschitz functions along each segment. Suppose the segment lengths are $\epsilon_1,\dots,\epsilon_n$, then the above argument gives a bound ${O}(\kappa T \sum_i\epsilon_i)$ on the expected number of non-Lipschitz functions. By Cauchy-Schwarz inequality, we have  $\sum_i\epsilon_i\le \sqrt{n}\sqrt{\sum_i\epsilon_i^2}\le \sqrt{\frac{n}{T}}$, and the bound simplifies to ${O}(\kappa \sqrt{nT})$.
    
    % with each segment of length $\le \frac{1}{\sqrt{T}}$ and a union bound on the number of non-Lipschitz functions along each segment of the axis by $O(\kappa\epsilon nT)$.
    
    By Theorem 4 of \cite{balcan2021data}  the expected number of non-Lipschitz losses on the worst point-pair with separation $\frac{1}{\sqrt{T}}$ is at most $K_T= {O}(\kappa \sqrt{nT}+\sqrt{T\log(TK)})=\tilde{O}(\sqrt{nT})$.
    % This implies $1/2$-dispersion of the sequence of loss functions in the sense of Definition \ref{def:dis}. %As in the proof of Theorem 4.1 above, the discontinuities in $L(\sigma)$ occur at points where $s^*$ is a linear function of $C_i$, for the repair cost of some agent $i$. 
     Theorem \ref{thm:dispersion} now implies the claimed regret bound. %
\end{proof}

\end{document}